\g@addto@macro{\UrlBreaks}{\UrlOrds}
\newtheorem{theorem}{Theorem}
\newtheorem{proposition}{Proposition}
\newtheorem{corollary}{Corollary}
\theoremstyle{definition}
\newtheorem{remark}{Remark}
\newtheorem{assumptions}{Assumptions}
\newcommand{\blind}{0}
\newcommand{\phihat}{\widehat{\phi}}
\newcommand{\Dtil}{\widetilde{D}}
\newcommand{\pitil}{\tilde{\pi}}
\newcommand{\Prob}[1]{\mathbb{P} \left({#1}\right)}
\newcommand{\Expect}[1]{\mathbb{E} \left[{#1}\right]}
\newcommand{\Var}[1]{\mathsf{Var} \left[{#1}\right]}
\newcommand{\Cov}[1]{\mathsf{Cov} \left[{#1}\right]}
\newcommand{\Expects}[2]{\mathbb{E}_{{#1}} \left[{#2}\right]}
\newcommand{\Vars}[2]{\mathsf{Var}_{{#1}} \left[{#2}\right]}
\newcommand{\Covs}[2]{\mathsf{Cov}_{{#1}} \left[{#2}\right]}
\newcommand{\uv}{\underline{v}}
\newcommand{\md}{\mbox{d}}
\newcommand{\mbar}{\bar{m}}
\newcommand{\LeapFrog}{\mathsf{LeapFrog}}
\newcommand{\AAPS}{\mathsf{AAPS}}
\newcommand{\cS}{\mathcal{S}}
\newcommand{\cH}{\mathcal{H}}
\newcommand{\zcurr}{z^{\mathrm{curr}}}
\newcommand{\zprop}{z^{\mathrm{prop}}}
\newcommand{\xcurr}{x^{\mathrm{curr}}}
\newcommand{\xprop}{x^{\mathrm{prop}}}
\newcommand{\pcurr}{p^{\mathrm{curr}}}
\newcommand{\ESJD}{\mathsf{J}}
\newcommand{\Eff}{\mathsf{Eff}}
\newcommand{\Effbar}{\mathsf{Eff}^\mathrm{lb}}
\newcommand{\Khat}{\widehat{K}}
\begin{document}

\def\spacingset#1{\renewcommand{\baselinestretch}%
{#1}\small\normalsize} \spacingset{1}

\if0\blind
{
  \title{\bf The Apogee to Apogee Path Sampler}
  \author{Chris Sherlock\thanks{All three authors gratefully acknowledge funding through EPSRC grant EP/P033075/1.} ,\hspace{.2cm}\\
    Department of Mathematics and Statistics, Lancaster University, UK\\
    Szymon Urbas, \hspace{.2cm}\\
    Department of Mathematics and Statistics, Lancaster University, UK\\
    and\\ 
    Matthew Ludkin, \hspace{.2cm}\\
  Darktrace, Cambridge, UK.}
  \maketitle
} \fi

\if1\blind
{
  \bigskip
  \bigskip
  \bigskip
  \begin{center}
    {\LARGE\bf Title}
\end{center}
  \medskip
} \fi

\bigskip

\begin{abstract}
  Amongst Markov chain Monte Carlo algorithms, Hamiltonian Monte Carlo (HMC) is often the algorithm of choice for complex, high-dimensional target distributions; however, its efficiency is notoriously sensitive to the choice of the integration-time tuning parameter. When integrating both forward and backward in time using the same leapfrog integration step as HMC, the set of \emph{apogees}, local maxima in the potential along a path, is the same whatever point (position and momentum) along the path is chosen to initialise the integration. We present the Apogee to Apogee Path Sampler (AAPS), which utilises this invariance to create a simple yet generic methodology for constructing a path, proposing a point from it and accepting or rejecting that proposal so as to target the intended distribution. We demonstrate empirically that AAPS has a similar efficiency to HMC but is much more robust to the setting of its equivalent tuning parameter, the number of apogees that the path crosses.
\end{abstract}

\noindent%
{\it Keywords:} Leapfrog step, Hamiltonian Monte Carlo, Markov chain Monte Carlo, robustness to tuning.
\vfill

\newpage
\spacingset{1.5} 

\section{Introduction}
\label{sec.intro}
Markov chain Monte Carlo (MCMC) is often the method of choice for estimating expectations with respect to complex, high-dimensional targets \cite[e.g.][]{GiRiSp1996,MCMChandbook}. Amongst MCMC algorithms, Hamiltonian Monte Carlo \cite[HMC, also known as Hybrid Monte Carlo;][]{DuKe1987} is known to offer a performance that scales better with the dimension of the state space than many of its rivals \cite[]{Neal2011,BePi2013}.

Given a target density $\pi(x)$, $x\in\mathbb{R}^d$, with respect to Lebesgue measure, and a current position, at each iteration HMC samples a momentum and numerically integrates Hamiltonian dynamics on a potential surface
\begin{equation}
U(x)=-\log \pi(x)
\end{equation}
to create a proposal that will either be accepted or rejected. 
As such, HMC has two main tuning parameters: the numerical integration step size, $\epsilon$, and the total integration time, $T$. Given $T$, guidelines for tuning $\epsilon$ have been available for some time \cite[]{BePi2013}; however, the integration time itself, is notoriously difficult to tune \cite[e.g.][]{Neal2011}, with algorithm efficiency often dropping sharply following only slight changes from the optimum $T$ value, and usually exhibiting approximately cyclic behaviour as $T$ increases.

The sensitivity is illustrated in the left panel of Figure \ref{fig.RosenSens}, in which HMC is applied to a modified Rosenbrock distribution (see Section \ref{sec.ToyTargets}) of dimension $d=40$. In the top half of this plot, the efficiency (see \eqref{eqn.gen.eff} for a precise definition), is given as a function of $\epsilon$ and the number of numerical integration steps, $L$. The bottom half of the panel shows the analogous plot for a modification of the HMC algorithm \cite[]{Mack1989,Neal2011}, which we refer to as \emph{blurred HMC} where at each iteration, the actual step-size is sampled (independently of all previous choices) uniformly from the interval $[0.8\epsilon,1.2\epsilon]$. This step was designed to mitigate the near reducibility of HMC that can occur when $T$ is some rational multiple of the integration time required to return close to the starting point, but as is visible from the plots, it also makes the performance of the algorithm more robust to the choice of $T$, and, we have found, often leads to a slightly more efficient algorithm. In both cases, the optimal tuning choice appears as a narrow ridge of roughly constant $T=L\epsilon$. Blurred HMC can be viewed as sampling the integration time $T$ uniformly from $[\frac{2}{3}T_*,T_*]$, where $T_*=1.2L\epsilon$. The approach of using a random integration time to introduce robustness has been extended recently to sampling uniformly from $[0,T_*]$ \cite[]{HoRa2021} and as an exponential variable with an expectation of $T_*$ \cite[]{rHMC}.

\begin{figure}
\begin{center}
    \includegraphics[scale=0.36]{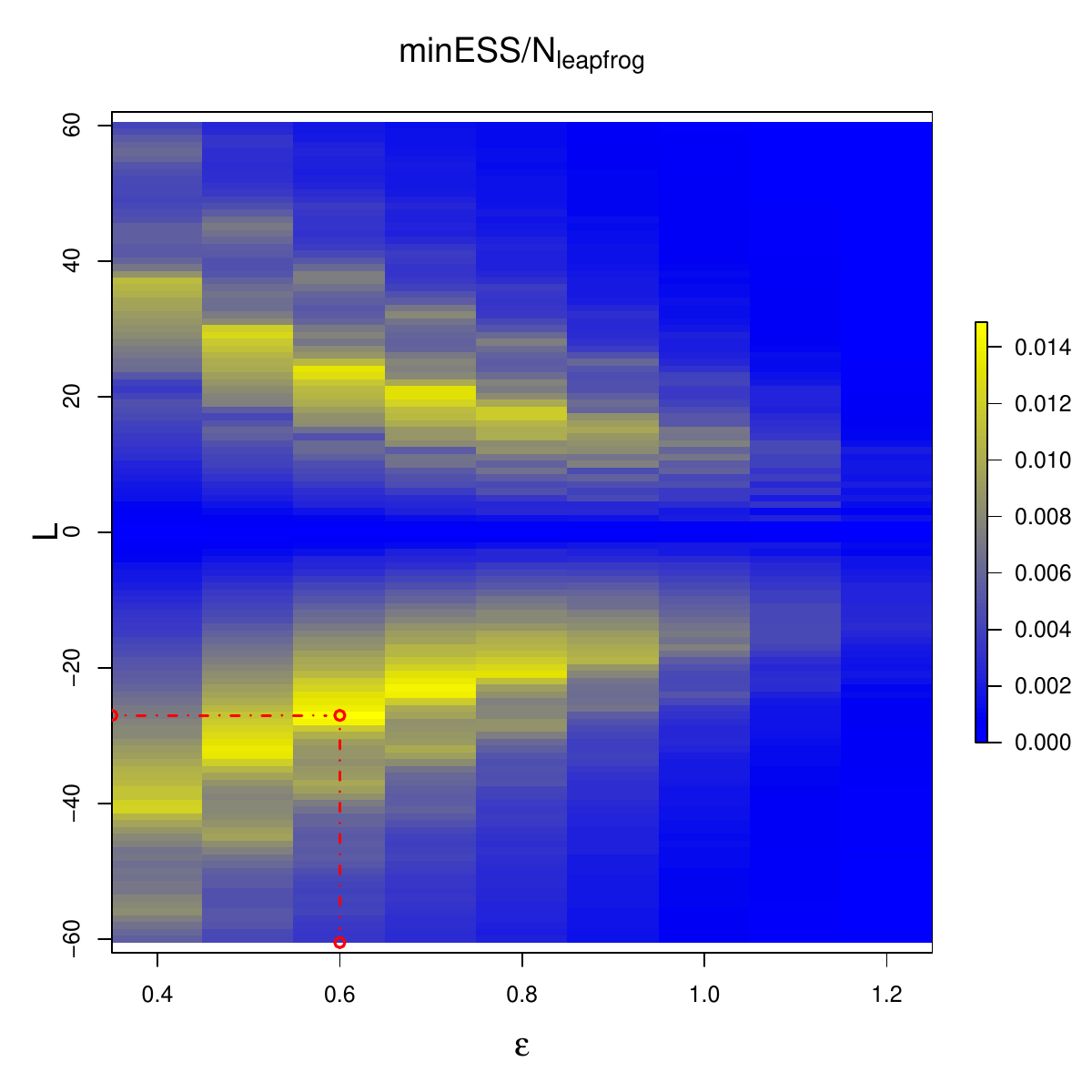}
    \includegraphics[scale=0.36]{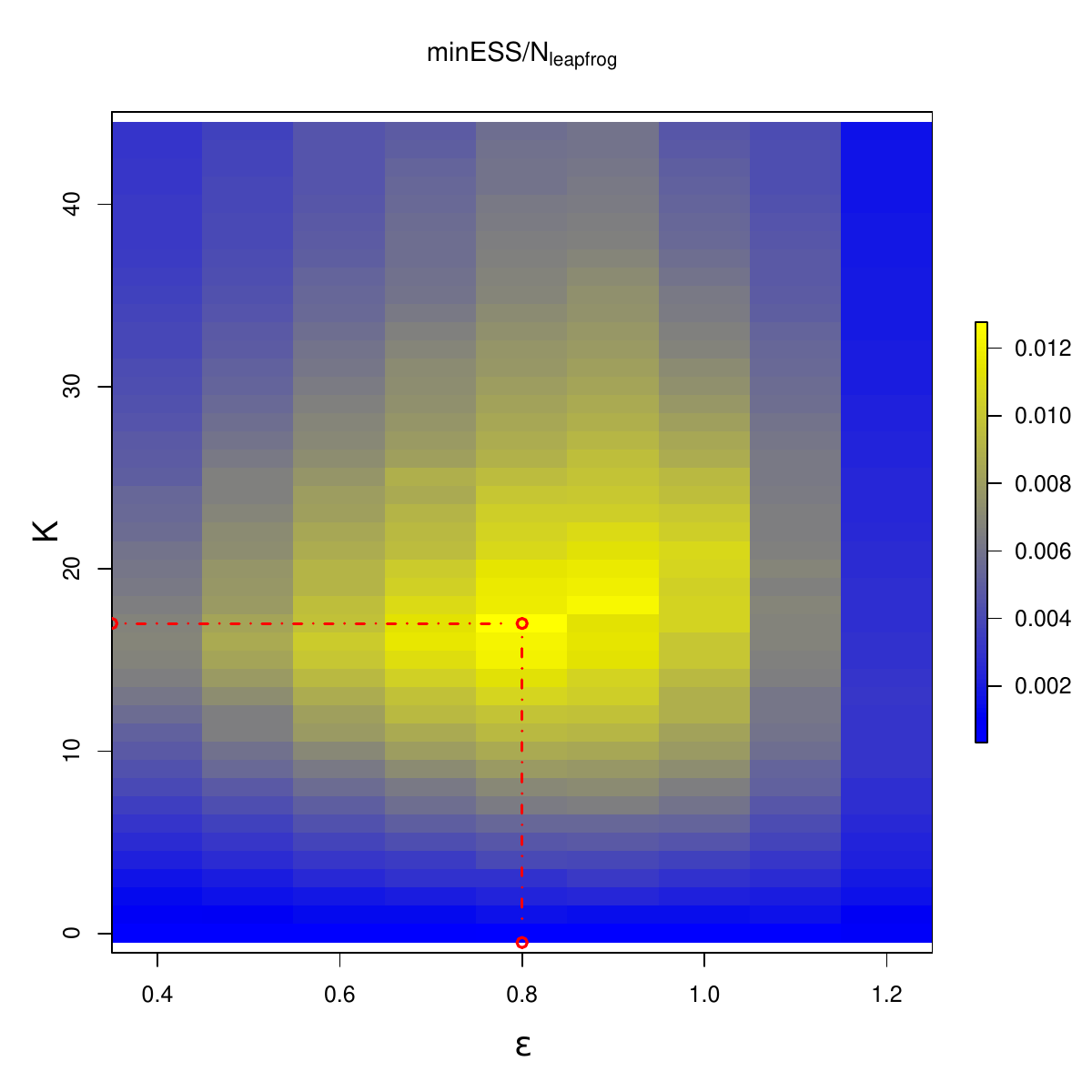}
    \end{center}
\caption{Efficiency, according to \eqref{eqn.gen.eff}, as a function of the tuning parameters for the 40-dimensional modified Rosenbrock target of Section \ref{sec.ToyTargets}. Left panel: HMC with positive $L$ values corresponding to the standard HMC algorithm and negative $L$ correspond to $|L|$ leapfrog steps of blurred HMC. Right panel: AAPS. Optimal parameter settings in red.
\label{fig.RosenSens}}
\end{figure}

Motivated by the difficulty of tuning $T$, \cite{HoGe2014} introduces the no U-turn sampler (NUTS). This uses the same numerical integration scheme as standard HMC, the leapfrog step, to integrate Hamiltonian dynamics both forward and backward from the current point, recursively doubling the size of the path until the distance between the points the farthest forward and backward in time stops increasing. Considerable care must be taken to ensure that the intended posterior is targeted, making the algorithm relatively complex; however, the no U-turn sampler is the default engine behind the popular STAN package \cite[]{Stan2020}, which has a relatively straightforward user interface.

Integration of Hamiltonian dynamics can be thought of as describing the position and momentum of a particle as it traverses the potential surface $U$. During its journey, provided $T$ is not too small, the particle will reach one or more local maximum, or \emph{apogee}, in the potential. The leapfrog scheme creates a path which is a discrete set of points rather than a continuum, and so (with probability $1$) apogees occur between consecutive points in the path; however, they are straightforward to detect. We call the set of points (positions and momenta) between two apogees a \emph{segment}. The discrete dynamics using the leapfrog step share several properties with the true dynamics, including the following: if we take the position and momentum of any point along the path, and integrate forward and backward for appropriate lengths of time we will create exactly the same path, and hence the same set of apogees and the same set of segments. This \emph{invariance} is vital to the correctness and flexibility of the algorithm presented in this article.

In Section \ref{sec.AAPS} we present the \emph{Apogee to Apogee Path Sampler} (AAPS). Like HMC, AAPS is straightforward to implement and has two tuning parameters, one of which is an integration step size, $\epsilon$. As with the no U-turn sampler, given a current point (position and momentum), AAPS uses the leapfrog step to integrate forwards and backwards in time. However, the integration stops when the path contains the segment in which the current point lies as well as $K$ additional segments, where $K$ is a user-defined tuning parameter. A point is then proposed from this set of $K+1$ segments and either accepted or rejected. The positioning of the current segment within the $K+1$ and the accept-reject probability are chosen precisely so that the algorithm targets the intended density. The invariance of the path to the starting position and momentum leads to considerable flexibility in the method for proposing a point from the set of segments, which in turn allows us to create an algorithm which enjoys a similar efficiency to HMC yet is extremely robust to the choice of $\epsilon$ and $K$. These properties are evident from the right-hand panel of Figure \ref{fig.RosenSens} and analogous plots for other distributions in Section \ref{sec.ToyTargets}. The robustness arises mainly from the proposal scheme; see the end of Section \ref{sec.chooseWeight}. The definition of $K$, however, also naturally caters to intrinsic properties of the target, such as its eccentricity, with  any externally imposed length (or time) scale largely irrelevant; the theoretical analysis of Section \ref{sec.GPlimit} makes this explicit in a simplified setting. 

Section \ref{sec.HamiltonAll} describes Hamiltonian dynamics and HMC. The AAPS is detailed in Section \ref{sec.AAPS}, and is compared empirically against HMC and the no U-turn sampler in Section \ref{sec.numExpts}. We conclude in Section \ref{sec.Discussion} with a discussion.

\section{Hamiltonian dynamics and Hamiltonian Monte Carlo}
\label{sec.HamiltonAll}
\subsection{Hamiltonian dynamics}
\label{sec.Hdynamics}
The position, $x$, and momentum, $p$, of a particle on a frictionless potential surface $U(x)$ evolve according to Hamilton's equations:

\begin{equation}
  \label{eq.HamiltonEqns}
  \frac{\md x}{\md t}=M^{-1}p
  ~~~\mbox{and}~~~
  \frac{\md p}{\md t}=-\nabla_x U.
\end{equation}

For a real object, $M$ is the mass of the object, a scalar, but the properties of the equations themselves that we will require hold more generally, when  $M$ is a symmetric, positive-definite mass matrix. The choice of $M$, whether for HMC or AAPS, is discussed at the end of Section \ref{sec.tuning}. We define $z_t=(x_t,p_t)$ and the map $\phi_t$ which integrates the dynamics forwards for a time $t$, so $z_t=\phi_t(z_0)$. The map, $\phi_t$ has the following fundamental properties \cite[e.g.][]{Neal2011}:
\begin{enumerate}
\item It is deterministic.
\item It has a Jacobian of $1$.
\item It is skew reversible: $\phi_t(x_t,-p_t)=(x_0,p_0)$.
\item It preserves the total energy $H(x,p)=U(x)+\frac{1}{2}p^\top M^{-1}p$.
\end{enumerate}

Except in a few special cases, the dynamics are intractable and must be integrated numerically, with a user-chosen time step which we will denote by $\epsilon$. The default method for Hamiltonian Monte Carlo, is the method which will be used throughout this article, the \emph{leapfrog} step; the leapfrog step itself is detailed in Appendix \ref{sec.leapfrog}. Throughout the main text of this article we denote the action of a single leapfrog step of length $\epsilon$ on a current state $z$ as $\LeapFrog(z;\epsilon)$. The leapfrog step satisfies Properties 1-3 above (see Appendix \ref{sec.leapfrog}), but it does not preserve the total energy.

Consider using $L$ leapfrog steps of size $\epsilon=t/L$ to approximately integrate the dynamics forward for time $t$. Since each individual leapfrog step satisfies Properties 1-3, so does the composition of $L$ such steps, which we denote $\phihat_t(z_0;\epsilon)$.

\subsection{Hamiltonian Monte Carlo}
\label{sec.HMC}
Hamiltonian Monte Carlo (HMC) creates a Markov chain which has a stationary distribution of $\pi(x)=\exp\{-U(x)\}$. 
Given a current position, $\xcurr$, and tuning parameters $\epsilon$ and $L$, a single iteration of the  algorithm proceeds as follows:
\begin{enumerate}
\item Sample a momentum $p_0\sim \mathsf{N}(0,M)$ and set $z_0=(\xcurr,p_0)$.
\item For $i$ in $1$ to $L$:
  \begin{itemize}
\item $z_i=\LeapFrog(z_{i-1};\epsilon)$.
  \end{itemize}
\item Let $\zprop=z_L$ and set $\alpha =1\wedge \pitil(\zprop)/\pitil(\zcurr)$.
  \item With probability $\alpha$, $\zcurr\gets \zprop$; else $\zcurr\gets \zcurr$.
\end{enumerate}

Here, with $\rho(p)$ denoting the density of the $\mathsf{N}(0,M)$ random variable,
\begin{equation}
  \label{eqn.defpitil}
\pitil(z)\equiv\pitil(x,p)=\pi(x)\rho(p)=\exp\{-H(x,p)\}.
\end{equation}
If $\xcurr$ is in its stationary distribution then $\pitil$ is the density of $z_0=(\xcurr,p_0)$.


\section{The Apogee to Apogee Path Sampler}
\label{sec.AAPS}

\subsection{Apogees and segments}

The left panel of Figure \ref{fig.showSegments} shows $L=50$ leapfrog steps of size $\epsilon=0.1$ from a current position, $x_0$ simulated randomly from a two-dimensional posterior $\pi(x)$ (with contours of $U(x)$ shown), and momentum $p_0$ simulated from a $\mathsf{N}(0,I_2)$ distribution. Different symbols and colours are used along the path, with both of these changing from step $l$ to step $l+1$ if and only if
\begin{equation}
  \label{eq.apogeeCondition}
  p_l^\top M^{-1}  \nabla U(x_l) >0
  ~~~\mbox{and}~~~
  p_{l+1}^\top M^{-1} \nabla U(x_{l+1}) <0.
\end{equation}
Intuitively, condition \eqref{eq.apogeeCondition} indicates when the ``particle'' has switched from moving ``uphill'' to moving ``downhill'' with respect to the potential surface $U(x)$. By \eqref{eq.HamiltonEqns}, $p^\top M^{-1} \nabla U\equiv \md x/\md t \cdot \nabla U\equiv \md U/ \md t$, so \eqref{eq.apogeeCondition} indicates a local maximum in $U(x_t)$ between $x_l$ and $x_{l+1}$. 

\begin{figure}
\begin{center}
    \includegraphics[scale=0.41]{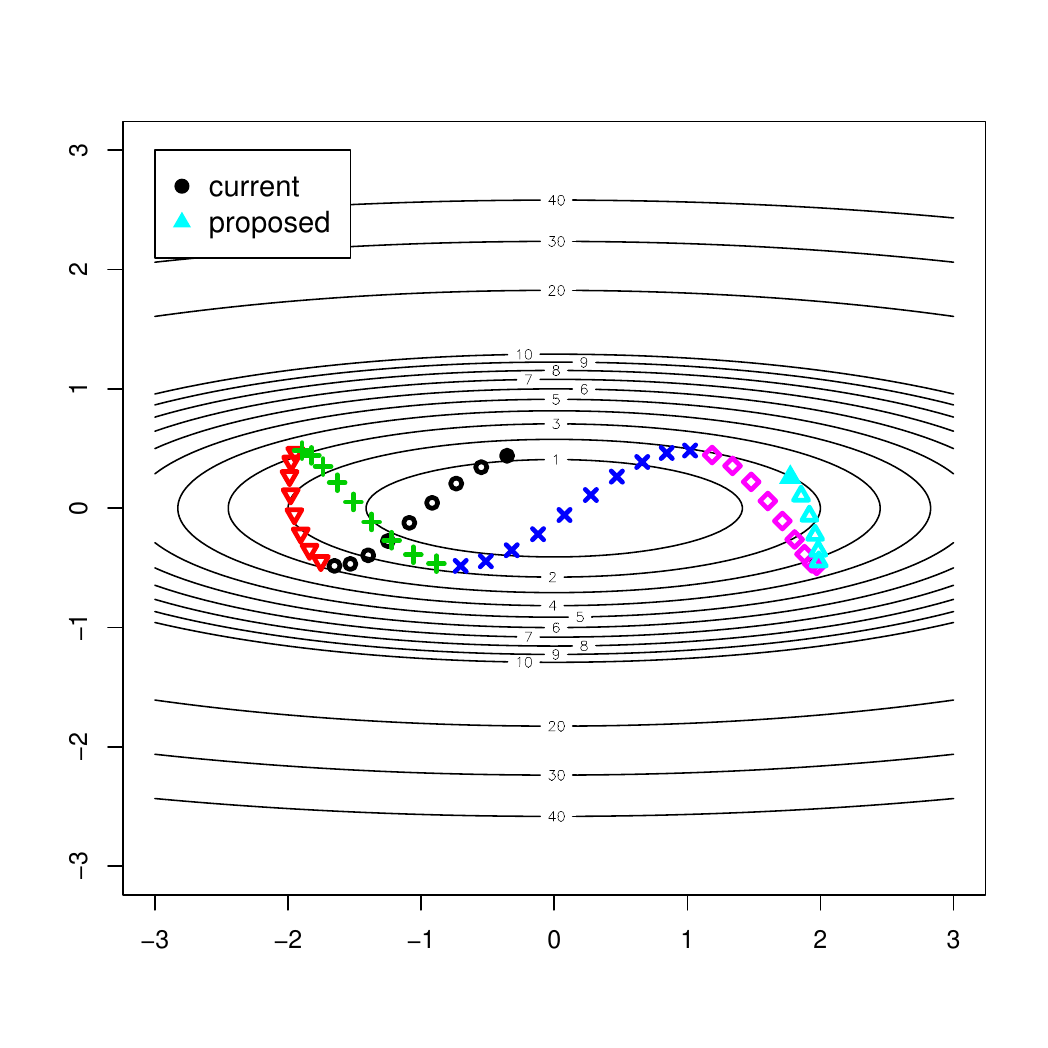}
    \includegraphics[scale=0.41]{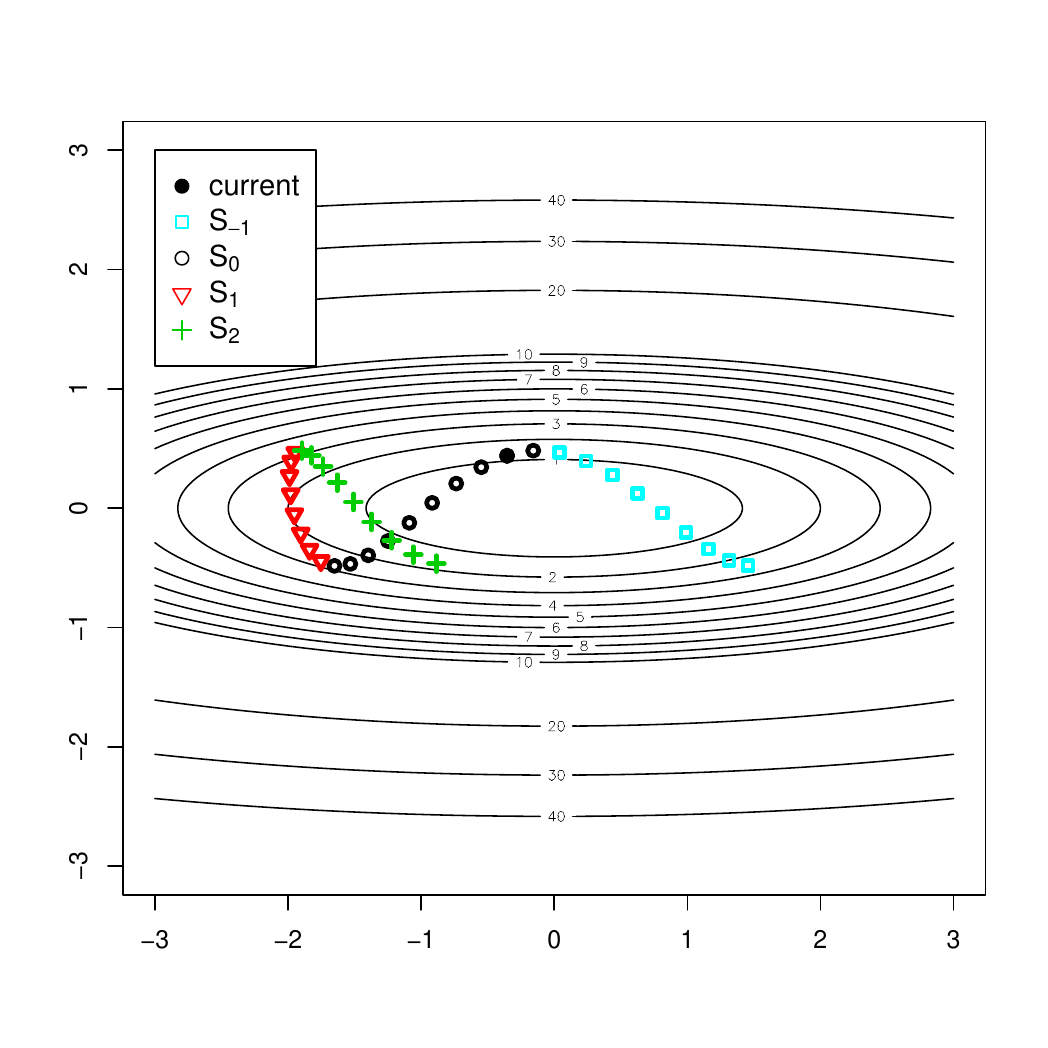}
    \end{center}
\caption{Left panel: $L=50$ leapfrog steps of size $\epsilon=0.1$ from the current point. Right panel: the current segment, $\cS_0$, and two segments forward and one segment backward using $\epsilon=0.1$. The current point, $x_0$ is simulated from a target, which has a density of $\pi(x)\propto \exp(-x_1^2/2-6x_2^2)$; $p_0$ is simulated from $\mathsf{N}(0,I_2)$. Different colours and symbols are used for each segment along the path.
\label{fig.showSegments}}
\end{figure}

Between such a pair of points at $l$ and $l+1$ there is a hypothetical point, $a$ with $l<a<l+1$ where the particle's potential has reached a local maximum, which we call an \emph{apogee}. Under the exact, continuous dynamics, this point would, of course, be realised, but under the discretised dynamics the probability of this is $0$.
We call each of the realised sections of the path between a pair of consecutive apogees (\emph{i.e.}, each portion with a different colour and symbol in Figure \ref{fig.showSegments}) a \emph{segment}. Each segment consists of the time-ordered list of position and momentum at each point between two consecutive apogees.

Instead of integrating forward for $L$ steps, one can imagine integrating both forwards and backwards in time from $z_0=(x_0,p_0)$ until a certain number of apogees have been found. The right pane of Figure \ref{fig.showSegments} shows the segment to which the current point belongs, which we denote $\cS_0(z_0)$, together with two segments forward and one segment backward. We denote the $j$th segment forward by $\cS_j(z_0)$ and the $j$th segment backward by $\cS_{-j}(z_0)$. We abbreviate the ordered  collection of segments from $\cS_a(z_0)$ to $\cS_b(z_0)$ as $\cS_{a:b}(z_0)$. Thus, the right panel of Figure \ref{fig.showSegments} shows the positions from $\cS_{-1:2}(z_0)$. For a particular point $z'=(x',p')\in \cS_{a:b}$, we denote the segment to which it belongs by $\cS_{\#}(z';z_0)$. 

The following \emph{segment invariance property} is vital to both the simplicity and correctness of our algorithm. For $a\le b$, any $z$ and any $z'\in \cS_{a:b}(z)$
\begin{equation}
  \label{eq.seg.invariance}
  \cS_{a':b'}(z')\equiv \cS_{a:b}(z)
  ,~~~\mbox{where}~~~a'=-c',~b'=K-c'~\mbox{and}~c'=c+\cS_{\#}(z';z).
  \end{equation}

The quantities $a'$, $b'$ and $c'$ correspond to $a$, $b$ and $c$ but from the point of view of $z'$ rather than $z$. 
For the right panel of Figure \ref{fig.showSegments}, for example, picking any $z'=(x',p')$ from $\cS_1(z_0)$, $\cS_{-2:1}(z')$ would give the same ordered set of segments as illustrated in the figure. This is because the numerical integration scheme is deterministic and skew reversible, so the apogees would all occur in exactly the same positions with exactly the same (up to a possible sign flip) momenta.

\subsection{The AAPS algorithm}

We now introduce our algorithm, the Apogee to Apogee Path Sampler, $\AAPS$. The algorithm requires a weight function $w:\mathbb{R}^{4d}\rightarrow [0,\infty)$, where $d$ is the dimension of the target. Weight functions are investigated in more detail in Sections \ref{sec.chooseWeight}, but for now it might be helpful keep in mind the simplest that we consider: $w(z,z')=\pitil(z')$.

  Given a step-size $\epsilon$, a non-negative integer, $K$, a mass matrix, $M$ and a current position $\xcurr$, one iteration of $\AAPS$ proceeds as follows:
  \begin{enumerate}
  \item Sample a momentum $p_0\sim \mathsf{N}(0,M)$ and set $\zcurr=z_0=(\xcurr,p_0)$.
    \item Simulate $c$ uniformly from $\{0,1,\dots,K\}$; set $a=-c$ and $b=K-c$. 
  \item Create $\cS_{a:b}$ by leapfrog stepping \underline{forward} from $(x_0,p_0)$ and then \underline{backward} from $(x_0,p_0)$.
  \item Propose $\zprop$  w.p. $\propto w(\zcurr,\zprop)~1(\zprop \in \cS_{a:b}(\zcurr))$.
  \item With a probability of 
    \begin{equation}
      \label{eq.AAPSalpha}
    \alpha=1\wedge
    \frac{\pitil(\zprop)w(\zprop,\zcurr)\sum_{z\in\cS_{a:b}(\zcurr)}w(\zcurr,z)}
      {\pitil(\zcurr)w(\zcurr,\zprop)\sum_{z\in\cS_{a:b}(\zcurr)}w(\zprop,z)}
      \end{equation}
      set $\zcurr\gets \zprop$ else $\zcurr \gets \zcurr$.
  \item Discard $\pcurr$ and retain $\xcurr$.
  \end{enumerate}

The Metropolis-Hastings formula \eqref{eq.AAPSalpha} arises because, out of the allowable proposals once $c$ has been chosen, the probability of proposing $\zprop$ is $q(\zprop|\zcurr)=w(\zcurr,\zprop)/\sum_{z\in \cS_{a:b}} w(\zcurr,z)$. 

  \begin{proposition}
The $\AAPS$ algorithm satisfies detailed balance with respect to the extended posterior $\pitil(x,p)$.
  \end{proposition}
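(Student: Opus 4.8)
The plan is to establish detailed balance for the fixed-momentum kernel defined by Steps 2--5, which is the substantive part of the claim; the momentum refresh of Step 1 draws $p_0$ from $\rho$, exactly the conditional of $p$ under $\pitil$, and is therefore itself $\pitil$-reversible, so that the full iteration preserves $\pitil$ and, after Step 6, leaves $\pi$ stationary in the position coordinate. Because every leapfrog step is a skew-reversible bijection of unit Jacobian, the finite combinatorial identity below transfers to the continuous state space without extra Jacobian factors. Writing $P(z,z')$ for the transition of Steps 2--5 between distinct points $z$ and $z'$,
\begin{equation}
P(z,z')=\frac{1}{K+1}\sum_{c=0}^{K}
\frac{w(z,z')}{\sum_{\zeta\in\cS_{-c:K-c}(z)}w(z,\zeta)}\,
\alpha_c(z,z')\,
\mathbf{1}\{z'\in\cS_{-c:K-c}(z)\},
\end{equation}
where $\alpha_c$ is the acceptance probability \eqref{eq.AAPSalpha} evaluated with the collection $\cS_{-c:K-c}(z)$. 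It therefore suffices to prove detailed balance term by term, pairing each admissible offset $c$ of the forward move with a matching offset of the reverse move.

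This pairing is the heart of the argument and rests on the segment invariance property \eqref{eq.seg.invariance}. Fix distinct $z,z'$, suppose $z'\in\cS_{a:b}(z)$ with $a=-c$ and $b=K-c$, and let $s=\cS_{\#}(z';z)$ be the index of the segment containing $z'$ relative to $z$. By \eqref{eq.seg.invariance}, $\cS_{a-s:\,b-s}(z')=\cS_{a:b}(z)$, and putting $c'=c+s$ gives $a-s=-c'$ and $b-s=K-c'$. Since membership $z'\in\cS_{-c:K-c}(z)$ is exactly the condition $-c\le s\le K-c$, i.e.\ $0\le c'\le K$, the map $c\mapsto c'=c+s$ is a bijection between the offsets admissible for the forward move and those admissible for the reverse move. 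For each such pair the \emph{same} ordered collection $\cS:=\cS_{a:b}(z)=\cS_{a-s:b-s}(z')$ is generated, so the two factors $1/(K+1)$ coincide and cancel.

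Conditional on a shared collection $\cS$, the remaining proposal-and-accept mechanism is an ordinary Metropolis--Hastings move, and it remains to verify
\begin{equation}
\pitil(z)\,\frac{w(z,z')}{\sum_{\zeta\in\cS}w(z,\zeta)}\,\alpha(z,z')
=\pitil(z')\,\frac{w(z',z)}{\sum_{\zeta\in\cS}w(z',\zeta)}\,\alpha(z',z).
\end{equation}
Set $A=\pitil(z)\,w(z,z')/\sum_{\zeta\in\cS}w(z,\zeta)$ and $B=\pitil(z')\,w(z',z)/\sum_{\zeta\in\cS}w(z',\zeta)$. The weight-sums appearing in \eqref{eq.AAPSalpha} are precisely the proposal normalisers, so the acceptance ratio there equals $B/A$; hence the left-hand side is $A\,(1\wedge B/A)=A\wedge B$ and the right-hand side is $B\,(1\wedge A/B)=B\wedge A$, and the two agree. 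Summing these term-by-term equalities over the paired offsets gives $\pitil(z)P(z,z')=\pitil(z')P(z',z)$.

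The main obstacle is the pairing argument: one must confirm that starting the integration afresh from the proposal $z'$ regenerates the \emph{identical} ordered collection of segments and selects it with the same probability as the forward move. This is exactly where segment invariance \eqref{eq.seg.invariance} (and, underlying it, the skew reversibility of the leapfrog map) is indispensable, and where care is needed to check that $c'=c+s$ lands in $\{0,\dots,K\}$ precisely when the forward offset $c$ does.
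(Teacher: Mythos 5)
Your proof is correct and follows essentially the same route as the paper's: segment invariance supplies the bijection $c\mapsto c'=c+\cS_{\#}(z';z)$ between admissible offsets of the forward and reverse moves, the shared collection makes the two $1/(K+1)$ factors match, and the weight-sums in \eqref{eq.AAPSalpha} are exactly the proposal normalisers, so each paired term reduces to the symmetric quantity $A\wedge B$. The paper performs the same computation for a fixed $c$ and observes that the resulting flux is invariant under $(\zcurr,a,b)\leftrightarrow(\zprop,a',b')$; your explicit termwise pairing over offsets is simply a more spelled-out presentation of that argument.
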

  \begin{proof}
    Step 1 preserves $\pitil$ because $p$ is independent of $x$ and is sampled from its marginal. 
    
    It will be helpful to define the system from the point of view of starting at $\zprop$. Let $a'$, $b'$ and $c'$ be as defined in \eqref{eq.seg.invariance}, but with $z'=\zprop$. Then, since $\cS_{\#}(\zprop;\zcurr)=-\cS_{\#}(\zcurr;\zprop)$,
    \[
-c \le 0 \le K-c, ~-c \le \cS_{\#}(\zprop;\zcurr) \le K-c \Leftrightarrow -c'\le \cS_{\#}(\zcurr;\zprop) \le K-c',~ -c'\le 0 \le K-c'.
\]
Equivalently, $
    1(c\in \{0,\dots,K\})1(\zprop\in \cS_{a:b}(\zcurr))
    \equiv
    1(c'\in \{0,\dots,K\})1(\zcurr\in \cS_{a':b'}(\zprop))$.    
    Moreover, segment invariance \eqref{eq.seg.invariance} is equivalent to
$z\in \cS_{a:b}(\zcurr)\iff z\in \cS_{a':b'}(\zprop)$.

The resulting chain satisfies detailed balance with respect to $\pitil$ because
\[
\pitil(\zcurr)\times\Prob{c}\times\Prob{\mbox{propose }\zprop|c}\times\Prob{\mbox{accept }\zprop|\mbox{proposed},c}
\]
is
\begin{align*}
\pitil(\zcurr)\times\frac{1(c\in \{0,\dots,K\})}{K+1}
&\times\frac{w(\zcurr,\zprop)1(\zprop\in \cS_{a:b}(\zcurr))}{\sum_{z\in \cS_{a:b}(\zcurr)} w(\zcurr,z)}\\
&\times
1\wedge
    \frac{\pitil(\zprop)w(\zprop,\zcurr)\sum_{z\in\cS_{a:b}(\zcurr)}w(\zcurr,z)}
         {\pitil(\zcurr)w(\zcurr,\zprop)\sum_{z\in\cS_{a:b}(\zprop)}w(\zprop,z)},
         \end{align*}
which is
\[
\frac{1(c\in \{0,\dots,K\})1(\zprop\in \cS_{a:b}(\zcurr))}{K+1}
\times
      \left\{
\frac{\pitil(\zcurr)w(\zcurr,\zprop)}{\sum_{z\in\cS_{a:b}(\zcurr)}w(\zcurr,z)}\wedge
\frac{\pitil(\zprop)w(\zprop,\zcurr)}{\sum_{z\in\cS_{a':b'}(\zprop)}w(\zprop,z)}
\right\}.
\]
This expression is invariant to $(\zcurr,a,b,c) \leftrightarrow (\zprop,a',b',c')$.
\end{proof}

  \begin{remark} AAPS algorithm could be applied with a numerical integration scheme that does not have a Jacobian of $1$. In such a scheme, however, the Jacobian would need to be included in the acceptance probability; moreover, a Jacobian other than $1$ would lead to greater variability in $\pitil$ along a path and, we conjecture, to a reduced efficiency. 
    \end{remark}

The path $\cS_{a:b}$ may visit parts of the statespace where the numerical solution to \eqref{eq.HamiltonEqns} is unstable and the error in the Hamiltonian may increase without bound, leading to wasted computational effort as large chunks of the path may be very unlikely to be proposed and accepted. Indeed it is even possible for the error to increase beyond machine precision. The no U-turn sampler suffers from a similar problem and we introduce a similar stability condition to that in \cite{HoGe2014}. We require that 
\begin{equation}
  \label{eq.Delta}
\max_{z\in \cS_{a:b}(\zcurr)} H(z)-\min_{z\in \cS_{a:b}(\zcurr)} H(z) <\Delta,
\end{equation}
for some prespecified parameter $\Delta$. This criterion can be monitored as the forward and backward integrations proceed, and if at any point the criterion is breached, the path is thrown away: the proposal is automatically rejected. Segment invariance means that $z\in\cS_{a:b}(\zcurr)\iff z\in \cS_{a':b'}(\zprop)$, so the same rejection would have occured if we created the path from any proposal in $\cS_{a:b}(\zcurr)$, and detailed balance is still satisfied. For the experiments in Section \ref{sec.numExpts} we found that a value of $\Delta=1000$ was sufficiently large that the criterion only interfered when something was seriously wrong with the integration. Step 3 of the algorithm then becomes:

\begin{enumerate}
  \setcounter{enumi}{2}
\item Create $\cS_{a:b}$ by leapfrog stepping \underline{forward} from $(x_0,p_0)$ and then \underline{backward} from $(x_0,p_0)$. If condition \eqref{eq.Delta} fails then go to 6.
\end{enumerate}

For a $d$-dimensional Markov chain, $\{X_t\}_{t=0}^\infty$, with a stationary distribution of $\pi$, the asymptotic variance is $V_f:=\lim_{n\uparrow \infty}n\Var{\frac{1}{n}\sum_{i=1}^n f(X_i)}$. Thus, after burn in, $\Var{\frac{1}{n}\sum_{i=1}^n f(X_i)} \approx V_f/n$. The effective sample size is the number of iid samples from $\pi$ that would lead to the same variance: $\mathrm{ESS}_f=n \Vars{\pi}{f}/V_f$. Let $\mathrm{ESS}_i$ be an empirical estimate of the ESS for the $i$th component of $X$ (\emph{i.e.}, $f$ gives the $ith$ component) and let $n_{leap}$ be the total number of leapfrog steps taken during the run of the algorithm. We measure the efficiency of an algorithm as:
\begin{equation}
  \label{eqn.gen.eff}
  \mathrm{Eff}=\frac{1}{n_{leap}}\min_{i=1,\dots,d}\mathrm{ESS}_i.
\end{equation}
Since the leapfrog step is by far the most computationally intensive part of the algorithm, $\mathrm{Eff}$ is proportional to the number of effective iid samples generated per second in the worst mixing component. 

  \subsection{Choice of weight function}
\label{sec.chooseWeight}

The weight function $w(z,z')=\pitil(z')$ mentioned previously is a natural choice, and substitution into \eqref{eq.AAPSalpha} shows that this leads to an acceptance probability of $1$; however, it turns out not to be the most efficient choice. For example, intuitively the algorithm might be more efficient if points on the path that are further away from the current point are more likely to be proposed. To investigate the effects of the choice of $w$ we examine six possibilities:

  \begin{enumerate}
  \item $w(z,z')=\pitil(z')$, which leads to $\alpha=1$. 
  \item $w(z,z')=||x'-x||^2$, squared jumping distance (SJD).
  \item $w(z,z')=\pitil(z')||x'-x||^2$, SJD modulated by target.
  \item $w(z,z')=||x'-x||$, absolute jumping distance (AJD).
  \item $w(z,z')=\pitil(z')||x'-x||$, AJD modulated by target. 
    \item $w(z,z')=\pitil(z')~1(z'\in \cH(z))$, where $\cH(z)$ is described below; this also leads to $\alpha=1$.
    \end{enumerate}

 Scheme 6 essentially partitions $\cS_{a:b}(\zcurr)$ into two halves of roughly equal total $\pitil$ and then proposes only values from the half that does not contain $\zcurr$, $\cH(\zcurr)$; details are given in Appendix \ref{sec.WSsix}.

\begin{figure}
\begin{center}
  \includegraphics[scale=0.39]{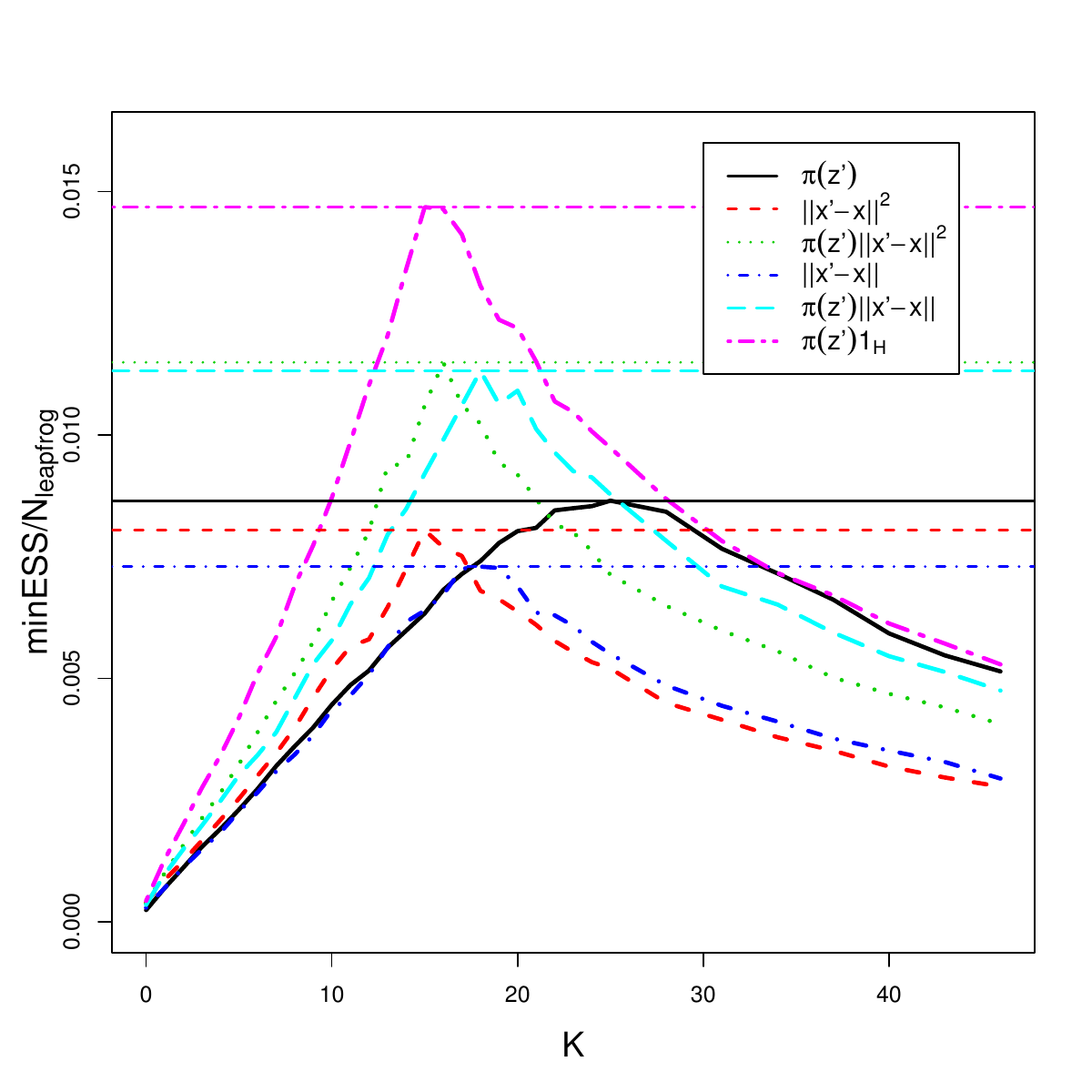}
      \includegraphics[scale=0.39]{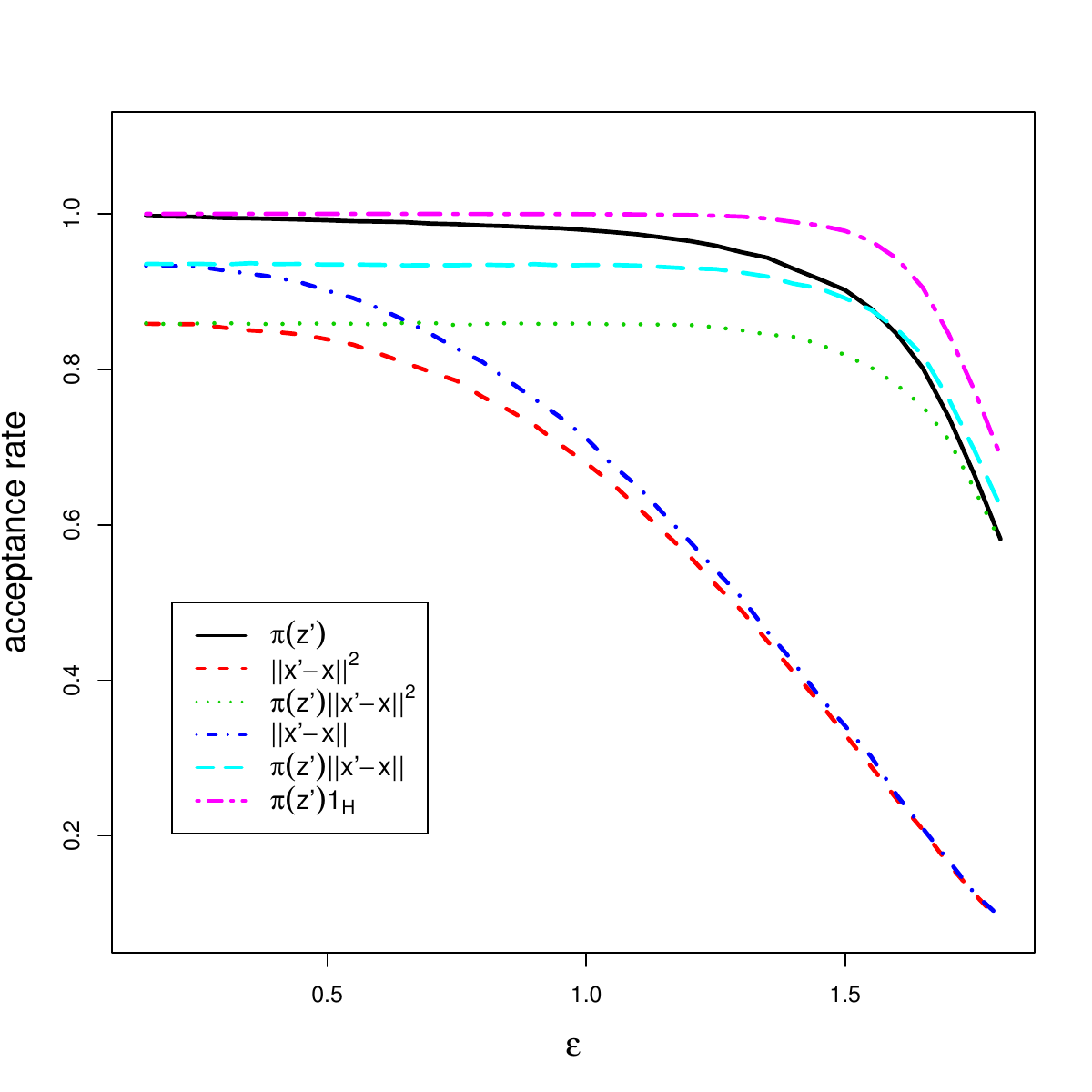}
    \end{center}
\caption{Left: efficiency (see \eqref{eqn.gen.eff}) of $\AAPS$ as a function of $K$ when $\epsilon=1.2$. Right: acceptance rate as a function of $\epsilon$ when $K=15$. There is one curve for each of the six choices of weight function. The single horizontal line associated with each curve  in the left panel indicates the maximum efficiency achieved. The target is $\pi_G^H(x)$ (see Section \ref{sec.ToyTargets}) with $d=40$ and $\xi=20$.
\label{fig.compareWeights}}
\end{figure}

The left panel of Figure \ref{fig.compareWeights} shows, for a particular choice of $\epsilon$ and target, the efficiency of $\AAPS$ as a function of $K$ for each of the six weight schemes. Scheme 6 is the most efficient; however Schemes 3 and 5 each of which involve some measure of jumping distance modulated by $\pitil$ are not far behind. Indeed there is only a factor of two between the least and most efficient. Very similar relative performances were found for all the other toy targets from Section \ref{sec.ToyTargets} and across a variety of choices of $\epsilon$, except that when $\epsilon$ becomes small, modulation of SJD or AJD by $\pitil$ makes little difference since relative changes in $\pitil$ are small.

A simple heuristic can explain the difference of a factor of nearly $2$ between  Scheme 1 and Scheme 6 in the case where $\pitil$ is approximately constant; for example, when $\epsilon$ is small. $\cS_{a:b}$ is constructed prior to making the proposal, so the computational effort does not depend on the scheme; thus efficiency can be measured crudely in terms of the squared jumping distance of the proposal \cite[e.g.][]{RobRos2001,SheRob2009,BePi2013}, since it is accepted with probability $1$. Without loss of generality, we rescale $\cS_{a:b}$ to have unit length. For two independent $\mathsf{Unif}(0,1)$ variables $U_1$ and $U_2$, Scheme 1 is equivalent to an expected squared jumping distance (ESDJ) of $\Expect{(U_1-U_2)^2}=1/6$, whereas Scheme 2 is equivalent to an ESJD of $\Expect{\{(1-U_1/2)-U_2/2)\}^2}= 7/24$; the ratio between the two ESJDs is $7/4$.

A naive implementation of each of the above weighting schemes would require storing each of the points in $\cS_{a:b}(\zcurr)$, which has an associated memory cost of $\mathcal{O}(K)$ and an exact cost that varies from one iteration to the next as the number of points in each segment is not fixed. For all schemes except the last there is a simple mechanism for sampling $\zprop$ with a fixed $\mathcal{O}(1)$ memory cost; however, this would be useless if calculation of the acceptance ratio $\alpha$  in \eqref{eq.AAPSalpha} still required storage of $\mathcal{O}(K)$. For Schemes 1, 2 and 3, however, it is also possible to calculate $\alpha$  with a fixed $\mathcal{O}(1)$ memory usage. We have found that this has a negligible effect on the CPU cost but the impact on the peak memory footprint of the code is substantial, decreasing by a factor of around $17$ in $d=40$, $27$ in $d=100$ and $42$ in $d=800$. Since there is little otherwise to choose between the schemes, we opt for the most efficient of these, Scheme 3, $w(z,z')=||x'-x||^2\pitil(x')$ and \emph{apply this  thoughout the remainder of this article}. Appendix \ref{sec.memOone} details the implementation of Schemes 1-3 with $\mathcal{O}(1)$ memory cost.

\subsection{Robustness and efficiency}
The robustness of AAPS with Scheme 3 which is visible in Figure \ref{fig.RosenSens} and similar figures in Section \ref{sec.ToyTargets}, arises from two aspects of the algorithm. We first consider robustness to the choice of $\epsilon$, then robustness to the choice of $K$.

For Scheme 2, as with HMC itself, the acceptance probability contains the ratio $\pitil(\zprop)/\pitil(\zcurr)$. The error in the total energy of HMC is proportional to $\epsilon^2$, so as $\epsilon$ increases the acceptance rate drops substantially when the error becomes $\mathcal{O}(1)$. By contrast, the acceptance probability for Scheme 3 contains only weighted sums of the $\pitil$ in both the numerator and denominator so the empirical acceptance rate is much more robust to increasing $\epsilon$. This effect can be seen in the right panel of Figure \ref{fig.compareWeights}.

Robustness to the choice of $K$ arises because AAPS samples a point from the whole set of segments rather than choosing the end point of the integration as HMC does. Figure \ref{fig.sensDiffSchemes} (bottom left) of Appendix \ref{sec.moreRobustnessFigures} echoes Figure \ref{fig.RosenSens}, but for an alternative version of AAPS which always sets $c=0$ and always samples the proposal from segment $K$ via weighting Scheme 3. At its optimum, the algorithm is slightly more efficient than the version of AAPS that we recommend; however, the efficiency drops much more sharply when $K$ is larger or smaller than its optimal value.

Both of the above robustness properties are shared by Scheme 1 (also demonstrated in Figure \ref{fig.sensDiffSchemes}); however, Scheme 3 is more efficient than Scheme 1 precisely because it preferentially targets proposals that are further from the current value.


\subsection{Tuning}
\label{sec.tuning}
Since AAPS with weighting Scheme 3 is much more robust to the choice of either tuning parameter than HMC is, precise tuning is less important than it is for HMC. Thus, we present only brief guidelines here; further heuristics and empirical evidence for them are presented in Appendices \ref{sec.TuningEpsilon} and \ref{sec.tuning.K}.

For any given $K>0$, we recommend increasing $\epsilon$ from some small value until the empirical acceptance starts to change, stopping when the change from the small-$\epsilon$ acceptance rate is no more than $3\%$. Empirically, we have found that such minor changes in acceptance rate correspond to substantial changes in the total energy over the $K+1$ segments, such that the modulation of SJD by $\pitil$ starts to have a negative impact on the choice of $\zprop$.

For a given value of $\epsilon$ we recommend a short tuning run of AAPS using a large value, $K_*$, of $K$ and then choosing a sensible $K\in \{0,\dots,K_*\}$ according to the most popularly proposed segment number using a diagnostic that we describe in Appendix \ref{sec.tuning.K}.

As with HMC and the no U-turn sampler, the mass matrix, $M$, used by AAPS can also be tuned, and mixing will be optimised if $M^{-1}\approx \Vars{\pi}{X}$, as approximated from a tuning run. However, the matrix-vector multiplication required for simulating $p_0$ and in each leapfrog step can be expensive, so it is usual to choose a diagonal $M$ instead.

\subsection{Gaussian process limit}
\label{sec.GPlimit}
Intuitively, the more eccentric a target the more apogees one might expect per unit time. This section culminates in an expression which makes this relationship concrete for a product target where each component has its own length scale. 
With the initial condition $(X_0,P_0)$ sampled from $\pitil$, $P_t^\top M^{-1} \nabla U(X_t)$ can be considered as a random function of time.
We first show that when the true Hamiltonian dynamics are used, subject to conditions, a rescaling of this dot-product tends to a stationary Gaussian process as the number of components in the product tends to infinity. A formula for the expected number of apogees per unit time follows as a corollary.

We consider a $d$-dimensional product target with a potential of
\begin{equation}
  \label{eq.product.target}
U^{(d)}(x)=\sum_{i=1}^d g\left(\sqrt{\nu^{(d)}_i} x^{(d)}_i\right)+\mbox{constant}
\end{equation}
for some $g:\mathbb{R}\to\mathbb{R}$ and values $\nu^{(d)}_i>0$, $i=1,\dots,d$, and where $\int \exp\{-U^{(d)}(x)\}\md x=1$.

HMC using a diagonal mass matrix, $M$, and a product target is equivalent to HMC using an identity mass matrix and a target of $M^{-1/2} x$ \cite[e.g.][]{Neal2011}, which, in the case of \eqref{eq.product.target} is also a product target, but with different $\nu_i$. For simplicity, therefore, we assume the identity mass matrix throughout this section. We also consider the true Hamiltonian dynamics which are approached in the limit as $\epsilon\downarrow 0$, but approximate the dynamics for small to moderate $\epsilon$ reasonably well.

Define the scaled dot product at time $t$ given an initial position of $x^{(d)}_0$ and momentum of $p^{(d)}_0$ as
\begin{equation}
  \label{eq.scaled.dot.product}
  D^{(d)}(t;x^{(d)}_0,p^{(d)}_0):=
  \frac{1}{\sqrt{d}}~p^{(d)}\left(t\right)\cdot \nabla_x U|_{x^{(d)}(t)}
  =
  \frac{1}{\sqrt{d}}~\sum_{i=1}^d \sqrt{\nu^{(d)}_i} g'\left(\sqrt{\nu_i^{(d)}} x^{(d)}_i\right) p^{(d)}_i\left(t\right).
\end{equation}

We define the one-dimensional densities with respect to Lebesgue measure
\[
\pi_\nu(x)=\sqrt{\nu}\exp\{-g(\sqrt{\nu}x)\}
~~~\mbox{and}~~~
\rho_1(p)=\frac{1}{\sqrt{2\pi}}\exp\{-p^2/2\},
\]
and let $\pi_\nu$ and $\rho_1$ denote the corresponding measures. The joint density and measure are $\pitil_\nu(x,p;\nu)=\pi_\nu(x;\nu)\rho_1(p)$ and $\pitil_\nu$. 

\begin{assumptions}
  \label{ass.g}
We assume that $g\in C^1$, that there is a $\delta>0$ such that
\begin{equation}
  \label{eq.finite.moment.one}
\Expects{X\sim \pi_1}{|g'(X)|^{2+\delta}}<\infty,
\end{equation}
and that for each $y_0\in \mathbb{R}$, there is a unique, non-explosive solution $a(t;y_0,y'_0)$ to the initial value problem:
\begin{equation}
  \label{eq.ODE}
  \frac{\md^2 y}{\md t^2}=-g'(y);~y(0)=y_0,~y'(0)=y'_0.
\end{equation}
\end{assumptions}

Theorem \ref{thrm.GP} is proved in Appendix \ref{sec.prove.GP.thrm}.

\begin{theorem}
  \label{thrm.GP}
  Let the potential be defined as in \eqref{eq.product.target} and where $g$ satisfies the assumptions around \eqref{eq.finite.moment.one} and \eqref{eq.ODE}. 
  Further, let $\mu$ be a distribution with support on $\mathbb{R}^+$  with
  \begin{equation}
    \label{eq.finite.moment.two}
    \Expects{\nu\sim\mu}{\nu^{1+\delta/2}}<\infty
  \end{equation}
  for some $\delta>0$, and let $\nu_i\stackrel{iid}{\sim}\mu$. Define
  \begin{equation}
    \label{eq.SGPcov}
V(t)=  \Expect{\nu~g'(X) P ~g'(\sqrt{\nu} X_t(X,P))~a'(\sqrt{\nu}t;X,P)},
\end{equation}
where the expectation is over the independent variables $X\sim \pi_1$, $P\sim \rho_1$ and $\nu\sim \mu$, and assume that for any finite sequence of $n$ distinct times $(t_1,\dots,t_n)$, the $n\times n$ matrix $\Sigma$ with $\Sigma_{i,j}=V(t_j-t_i)$ is positive definite.

Let $D^{(d)}$ be the scaled dot product defined in \eqref{eq.scaled.dot.product},
 and let $(X_0,P_0)\sim \pitil$; then
 \begin{equation}
   \label{eqn.SGPfull}
D^{(d)}\Rightarrow \Dtil \sim \mathsf{SGP}\left(0,V\right),
 \end{equation}
 as $d\rightarrow \infty$,
where $Y\sim\mathsf{SGP}(b,V)$ denotes that $Y$ is a one-dimensional stationary Gaussian process with an expectation of $b$ and a covariance function of $\Cov{Y_t,Y_{t'}}=V(t'-t)$.
\end{theorem}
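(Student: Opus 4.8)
The plan is to establish convergence of the finite-dimensional distributions of $D^{(d)}$ to those of a centered Gaussian process with the stated covariance, and then argue that this suffices given the stationarity claim. The process $D^{(d)}(t)$ is, for each fixed $t$, a normalized sum $\frac{1}{\sqrt d}\sum_{i=1}^d \xi_i(t)$ of independent (across $i$) terms $\xi_i(t) := \sqrt{\nu_i}\, g'(\sqrt{\nu_i}x_i^{(d)})\, p_i(t)$. The key structural observation is that, because the dynamics decouple across coordinates for a product target, each coordinate evolves autonomously under the one-dimensional ODE \eqref{eq.ODE}: with $\nu_i$ fixed, the evolution of $(x_i,p_i)$ is that of a single-particle system whose solution is $a(\sqrt{\nu_i}\,t;\cdot)$ up to the time-rescaling by $\sqrt{\nu_i}$. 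Thus each $\xi_i(t)$ is a deterministic function of the initial data $(X_i,P_i,\nu_i)$, and these initial triples are i.i.d.\ across $i$ when $(X_0,P_0)\sim\pitil$ and $\nu_i\sim\mu$.

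\emph{First} I would verify that the summands have mean zero and finite covariance. Stationarity and the mean-zero property should follow from the fact that $\pitil_\nu$ is invariant under the Hamiltonian flow (it is the Gibbs measure $\propto\exp(-H)$), so the joint law of $(X_s,P_s)$ is the same as that of $(X_0,P_0)$ for every $s$; hence $\Expect{\xi_i(t)\xi_i(t')}$ depends only on $t'-t$, and pushing the time origin to the earlier of the two times identifies this single-coordinate covariance with $V(t'-t)$ as written in \eqref{eq.SGPcov}. The moment conditions \eqref{eq.finite.moment.one} and \eqref{eq.finite.moment.two} are exactly what is needed to bound $\Expect{|\xi_i(t)|^{2+\delta}}$ uniformly: the factor $\sqrt\nu\,g'(\sqrt\nu X)$ contributes the $\nu^{1+\delta/2}$ and $|g'|^{2+\delta}$ moments once one checks (via the energy conservation / non-explosion in Assumption~\ref{ass.g}) that the momentum factor and the time-evolved gradient stay controlled.

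\emph{Next}, for the finite-dimensional distributions I would fix times $(t_1,\dots,t_n)$ and an arbitrary linear combination $\sum_k \lambda_k D^{(d)}(t_k)$, reducing the problem to a univariate CLT for the triangular array $\frac{1}{\sqrt d}\sum_i \big(\sum_k \lambda_k \xi_i(t_k)\big)$. Since the rows are i.i.d.\ with a uniform $(2+\delta)$ moment bound, the Lyapunov central limit theorem applies directly, giving convergence to a Gaussian with variance $\lambda^\top\Sigma\lambda$ where $\Sigma_{i,j}=V(t_j-t_i)$. The positive-definiteness hypothesis on $\Sigma$ guarantees the limiting covariance is genuine. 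By the Cramér--Wold device this yields joint convergence of $(D^{(d)}(t_1),\dots,D^{(d)}(t_n))$ to the multivariate normal with covariance $\Sigma$, which are precisely the finite-dimensional distributions of $\mathsf{SGP}(0,V)$.

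\emph{The hard part} will be making precise the sense of convergence $\Rightarrow$ in \eqref{eqn.SGPfull} as a \emph{process}: convergence of finite-dimensional distributions alone does not give weak convergence in a function space without a tightness argument. I would therefore expect the main technical obstacle to be either (i) establishing tightness in $C[0,T]$ via a moment/Kolmogorov continuity criterion on the increments $D^{(d)}(t')-D^{(d)}(t)$ — which requires an estimate of the form $\Expect{|D^{(d)}(t')-D^{(d)}(t)|^{2+\delta}}\le C|t'-t|^{\gamma}$ obtained by differentiating $\xi_i$ in time using the $C^1$ regularity of $g$ and the ODE — or (ii) arguing that for the purposes of the apogee-counting corollary only the finite-dimensional distributions are actually needed, in which case the weaker convergence suffices and tightness can be sidestepped. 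I would first attempt route (i), since controlling the time-derivative $\partial_t \xi_i$ reduces (via \eqref{eq.ODE}) to the same moment structure already used for the CLT, so the tightness bound should follow from the same Assumptions rather than requiring new hypotheses.
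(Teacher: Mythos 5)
Your proposal is correct and follows essentially the same route as the paper: decoupling into independent one-dimensional Hamiltonian systems, stationarity of each coordinate from invariance of $\pitil_\nu$ under the flow, identification of the covariance via a time shift and the rescaling $a(\sqrt{\nu}t;\cdot)$, and a CLT for the finite-dimensional distributions verified through the $(2+\delta)$-moment conditions (the paper uses the multivariate Lindeberg--Feller condition where you use Cram\'er--Wold plus Lyapunov, which is equivalent). Your closing observation is apt: the paper itself stops at convergence of finite-dimensional distributions and supplies no tightness argument, so the convergence asserted in \eqref{eqn.SGPfull} is effectively in that weaker sense.
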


\cite{Ylvisaker1965} shows that the expected number of zeros over a unit interval of a stationary Gaussian process with a unit variance is:
$\frac{1}{\pi}\sqrt{-C''(0)}$, where $C$ is its covariance function. Zeroes can be either apogees or perigees (local minima in $U(x)$ along the path), and these alternate. Hence, with some work (see Appendix \ref{sec.prove.GP.corollary}), we obtain the following:

\begin{corollary}
\label{cor.EN}
  The expected number of apogees of $\Dtil$ over a time $T$ is
\[
N(T)\propto
T\sqrt{\Expect{\nu}}\times \sqrt{\frac{\Expect{\nu^2}}{\Expect{\nu}^2}}.
\]
\end{corollary}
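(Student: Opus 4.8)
The plan is to feed the limiting process $\Dtil\sim\mathsf{SGP}(0,V)$ from Theorem \ref{thrm.GP} into the zero-crossing formula quoted from \cite{Ylvisaker1965} and then strip out the dependence on the moments of $\nu$. Since the zeros of $\Dtil$ alternate between apogees and perigees, the expected number of apogees is exactly half the expected number of zeros. The quoted formula is stated for unit variance, but zero-crossings are invariant under rescaling, so normalising $\Dtil$ by $\sqrt{V(0)}$ replaces $-C''(0)$ by $-V''(0)/V(0)$ and the expected number of zeros over time $T$ is $\frac{T}{\pi}\sqrt{-V''(0)/V(0)}$. Hence
\[
N(T)=\frac{T}{2\pi}\sqrt{\frac{-V''(0)}{V(0)}},
\]
and it remains to evaluate $V(0)$ and $V''(0)$ and isolate the $\mu$-dependent factors.

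First I would compute $V(0)$. At $t=0$ the ODE solution satisfies $a(0;X,P)=X$ and $a'(0;X,P)=P$, so the integrand in \eqref{eq.SGPcov} collapses to $\nu\,(g'(X))^2 P^2$. Since $\nu$, $X$ and $P$ are independent and $\Expect{P^2}=1$, this gives $V(0)=\Expect{\nu}\,\sigma^2$ with $\sigma^2:=\Expects{X\sim\pi_1}{(g'(X))^2}$, which is finite by \eqref{eq.finite.moment.one} and depends only on $g$.

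The bulk of the work is $-V''(0)$, which equals $\Var{\Dtil'_0}$, the variance of the derivative of the stationary process. I would identify $\Dtil'_t$ as the weak limit of $\frac{\md}{\md t}D^{(d)}(t)=\frac{1}{\sqrt d}\frac{\md^2}{\md t^2}U^{(d)}(x(t))$. Along the Hamiltonian flow with the identity mass matrix this second derivative equals $\sum_i[\nu_i g''(\sqrt{\nu_i}x_i)p_i^2-\nu_i(g'(\sqrt{\nu_i}x_i))^2]$; since $\sqrt{\nu_i}x_i\sim\pi_1$, the same central-limit argument used for Theorem \ref{thrm.GP} yields a mean-zero Gaussian limit whose variance is that of a single term $W=\nu[g''(X)P^2-(g'(X))^2]$ with $X\sim\pi_1$, $P\sim\rho_1$. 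Integration by parts against $e^{-g}$ gives the identity $\Expects{\pi_1}{g''(X)}=\Expects{\pi_1}{(g'(X))^2}$, which forces $\Expect{W}=0$ (consistent with a stationary derivative process) so that $\Var{W}=\Expect{W^2}$. Factoring out the independent $\nu^2$ then yields $-V''(0)=\Expect{\nu^2}\,\kappa^2$, where $\kappa^2:=\Expect{(g''(X)P^2-(g'(X))^2)^2}$ depends only on $g$.

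Assembling the pieces,
\[
N(T)=\frac{T}{2\pi}\sqrt{\frac{\Expect{\nu^2}\,\kappa^2}{\Expect{\nu}\,\sigma^2}}
=\frac{\kappa}{2\pi\sigma}\,T\sqrt{\frac{\Expect{\nu^2}}{\Expect{\nu}}}
=\frac{\kappa}{2\pi\sigma}\,T\sqrt{\Expect{\nu}}\,\sqrt{\frac{\Expect{\nu^2}}{\Expect{\nu}^2}},
\]
which is the claimed proportionality, with a constant $\kappa/(2\pi\sigma)$ that depends only on $g$. The step I expect to be the main obstacle is justifying the differentiability of $\Dtil$ and the interchange of differentiation with the limiting expectation needed to make sense of $V''(0)$: Assumptions \ref{ass.g} guarantee only $g\in C^1$, yet $\Var{\Dtil'_0}$ manifestly involves $g''$, so one must either impose the additional regularity and moment control on $g''$ that Ylvisaker's formula tacitly requires, or else move all derivatives off $g$ using the integration-by-parts identities above. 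A secondary technical point is to confirm that the derivative process really is the weak limit of $\frac{1}{\sqrt d}\,\md^2 U^{(d)}/\md t^2$, which reuses the machinery of Theorem \ref{thrm.GP} but must be rechecked under the $2+\delta$ moment assumptions.
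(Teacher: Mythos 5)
Your argument is correct and shares the paper's overall skeleton --- normalise $\Dtil$ to unit variance, apply Ylvisaker's zero-crossing formula, halve to count only apogees, and show $V(0)\propto\Expect{\nu}$ while $-V''(0)\propto\Expect{\nu^2}$ --- but you compute the decisive quantity $-V''(0)$ by a genuinely different route. The paper stays inside the representation $V(t)=-\Expect{\nu\,g'(X_0)P_0\,a''(\sqrt{\nu}t)a'(\sqrt{\nu}t)}$ and differentiates it twice, so that $\nu^2$ factors out of an expression involving $a'''(0)$ and $a^{iv}(0)$ (hence implicitly requiring $g\in C^3$); it never needs the value of the $g$-dependent constant. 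You instead identify $-V''(0)$ with $\Var{\Dtil'_0}$, differentiate the dot product pathwise via Hamilton's equations to get a normalised sum of independent terms $\nu_i\{g''(\sqrt{\nu_i}x_i)p_i^2-(g'(\sqrt{\nu_i}x_i))^2\}$, and use the integration-by-parts identity $\Expects{\pi_1}{g''}=\Expects{\pi_1}{(g')^2}$ to see these are mean-zero, yielding the explicit constant $\kappa^2=\Expect{(g''(X)P^2-(g'(X))^2)^2}$. The two computations agree: applying $\Expects{\pi_1}{g'g'''}=\Expects{\pi_1}{(g')^2g''}-\Expects{\pi_1}{(g'')^2}$ and $\Expects{\pi_1}{(g')^2g''}=\tfrac{1}{3}\Expects{\pi_1}{(g')^4}$ to the paper's expression gives $-V''(0)=\Expect{\nu^2}\bigl(3\Expects{\pi_1}{(g'')^2}+\tfrac{1}{3}\Expects{\pi_1}{(g')^4}\bigr)$, exactly your $\Expect{\nu^2}\kappa^2$. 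What your route buys is an explicit, manifestly positive proportionality constant and one fewer derivative of $g$; what it costs is the extra interchange of the $d\to\infty$ limit with time-differentiation and the vanishing-boundary-term hypothesis behind the integration by parts --- though the paper's own differentiation under the expectation sign is heuristic at exactly the same level, and your closing caveat about the gap between $g\in C^1$ in Assumptions \ref{ass.g} and the regularity the corollary actually needs applies equally to the published proof.
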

Since $\nu$ is a squared inverse scale parameter, the first term in the product simply relates the time interval to the overall length scale of the target ($1/\sqrt{\Expect{\nu}}$), given that $P_0\sim N(0,I_d)$ and an identity mass matrix is used. The second part of the product is more interesting and shows how the expected number of apogees per unit time interval increases with variability in the squared inverse length scales of the components of the target. The second term also makes it plain that the tuning parameter $K$ relates to properties intrinsic to the target; unlike $L$, its impact is relatively unaffected by a uniform redefinition of the length scale of the target or by the choice of $\epsilon$, the latter providing yet another reason for the robustness of $\AAPS$.

\section{Numerical Experiments}
\label{sec.numExpts}

In this section we compare, AAPS with HMC, blurred HMC (see Section \ref{sec.intro}) and the no U-turn sampler over a variety of targets. For fairness we use the basic no U-turn sampler from \cite{HoGe2014}, without it needing to adaptatively tune $\epsilon$; instead tuning $\epsilon$ using a fine grid of values. For both varieties of HMC we use a grid of $\epsilon$ and $L$ values, and for AAPS we use a grid of $\epsilon$ and $K$ values; in each case we choose the combination that leads to the optimal efficiency. To ensure that the various toy targets are close to representing the difficulties encountered with targets from complex statistical models, all algorithms use the identity mass matrix.

All code ($\AAPS$/HMC/no U-turn sampler) was written in \texttt{C++}; the effective sample size was estimated using the \texttt{R} package \texttt{coda}, and the mean number of leapfrog steps per iteration (for AAPS and the no U-turn sampler) was output from each run as a diagnostic. 

\subsection{Toy targets}
\label{sec.ToyTargets}
Here we investigate performance across a variety of targets with a relatively simple functional form, and across a variety of dimensions. Many are product densities with independent components: Gaussian, logistic and skew-Gaussian. We consider different, relatively large ratios $\xi$ between the largest and smallest length scales of the components in a target, as well as different sequences of scales from the smallest to the largest. We also consider a modification of the Rosenbrock banana-shaped target.

For a target of dimension $d$, given a sequence of scale parameters, $\sigma_1,\dots,\sigma_d$ we consider the following four forms:
\begin{align*}
  \pi_G(x)
  &=
  \prod_{i=1}^d\frac{1}{\sigma_i \sqrt{2\pi}}\exp\left(-\frac{x_i^2}{2\sigma_i^2}\right),\\
  \pi_L(x)
  &=
  \prod_{i=1}^d\frac{1}{\sigma_i}\frac{\exp(x_i/\sigma_i)}{\{1+\exp(x_i/\sigma_i)\}^2},\\
  \pi_{SG}(x)
  &=
  \prod_{i=1}^d\frac{2}{\sigma_i \sqrt{2\pi}}\exp\left(-\frac{x_i^2}{2\sigma_i^2}\right)\Phi\left(\frac{\alpha x_i}{\sigma_i}\right),\\
  \pi_{MR}(x)
  &\propto
  \prod_{i=1}^{d/2}\exp\left\{-\frac{1}{2s_i^2}(x_{2i-i}-\sqrt{2}\beta s_i)^2\right\}
  \exp\left\{-\frac{1}{2}\left(x_{2i}-\frac{1}{\sqrt{2} s_i}\frac{x_{2i-1}^2}{1+x_{2i-i}^2/(4s_i^2)}\right)^2\right\},
\end{align*}
 where $\Phi$ is the distribution function of a $\mathsf{N}(0,1)$ variable, $\alpha=3$, $\beta=1$ and $s_i^2=99(i-1)/(d/2-1)+1$. 
The targets $\pi_G$, $\pi_L$ an $\pi_{SG}$ are products of one-dimensional Gaussian, logistic and skew-Gaussian distributions respectively. The potential surface of the target $\pi_{MR}$ is a modified version of the Rosenbrock function \cite[]{Rosenbrock1960}, a well-known, difficult target for optimisation algorithms, which is also a challenging target used to benchmark MCMC algorithms \cite[e.g.][]{PaWi2021,HengJacob2019}. The tails of the standard Rosenbrock potential increase quartically, but all algorithms which use a standard leapfrog step and Gaussian momentum are unstable in the tails of a target where the potential increases faster than quadratically. We have, therefore, modified the standard Rosenbrock target to keep the difficult banana shape whilst ensuring quadratic tails. 

For $\pi_G$, $\pi_L$ and $\pi_{SG}$, we denote the largest ratio of length scales by $\xi:=\max_{1\le i\le j\le d}\sigma_i/\sigma_j$ and define four different patterns between the lowest and highest scaling, depending on which of $\sigma_i$, $\sigma_i^2$, $1/\sigma_i^2$ or $1/\sigma_i$ increases approximately linearly with component number. To minimise the odd behaviour that HMC (but not AAPS or the no U-turn sampler) can exhibit when scalings are rational multiples of each other (a phenomenon which is rarely seen for targets in practice) we jitter the scales for all the intermediate components. Specifically, let $\uv$ be a vector with $v_1=0$, $v_d=1$ and for $i=2,\dots,d-1$, $v_i=(i-1+U_i)/(d-1)$ where $U_2,\dots,U_{d-1}$ are independent $\mathsf{Unif}(-0.5,0.5)$ variables. Then we define the following four progressions for $i=1,\dots,d$:
 SD: $\sigma_i=(\xi-1)v_i+1$; VAR: $\sigma_i^2=(\xi^2-1)v_i+1$; H: $1/\sigma_i^2=(1-1/\xi^2)v_i+1/\xi^2$; invSD: $1/\sigma_i=(1-1/\xi)v_i+1/\xi$.


A final target, $\pi_G^{RN}$, arises from an online comparison between HMC and the no U-turn sampler at\\
\url{https://radfordneal.wordpress.com/2012/01/27/evaluation-of-nuts-more-comments-on-the-paper-by-hoffman-and-gelman/}.

Figure \ref{fig.GaussSkGLSens} uses $\xi=20$ and repeats Figure \ref{fig.RosenSens} in $d=40$ for each of the main product target types: Gaussian, skew-Gaussian and logistic. It demonstrate the robustness of AAPS when compared with HMC and blurred HMC. Appendix \ref{sec.moreRobustnessFigures} contains more details on these experiments. Figure \ref{fig.GaussSkGLSensNUTS} in Appendix \ref{sec.moreRobustnessFigures} plots efficiency against step size when the popular No U-turn Sampler is used on the targets in Figures \ref{fig.RosenSens} and \ref{fig.GaussSkGLSens}. The peaks for the logistic and modified Rosenbrock targets are broad, suggesting some robustness; however, the peaks for the Gaussian and skew-Gaussian targets are much sharper.

\begin{figure}
\begin{center}
    \includegraphics[scale=0.33]{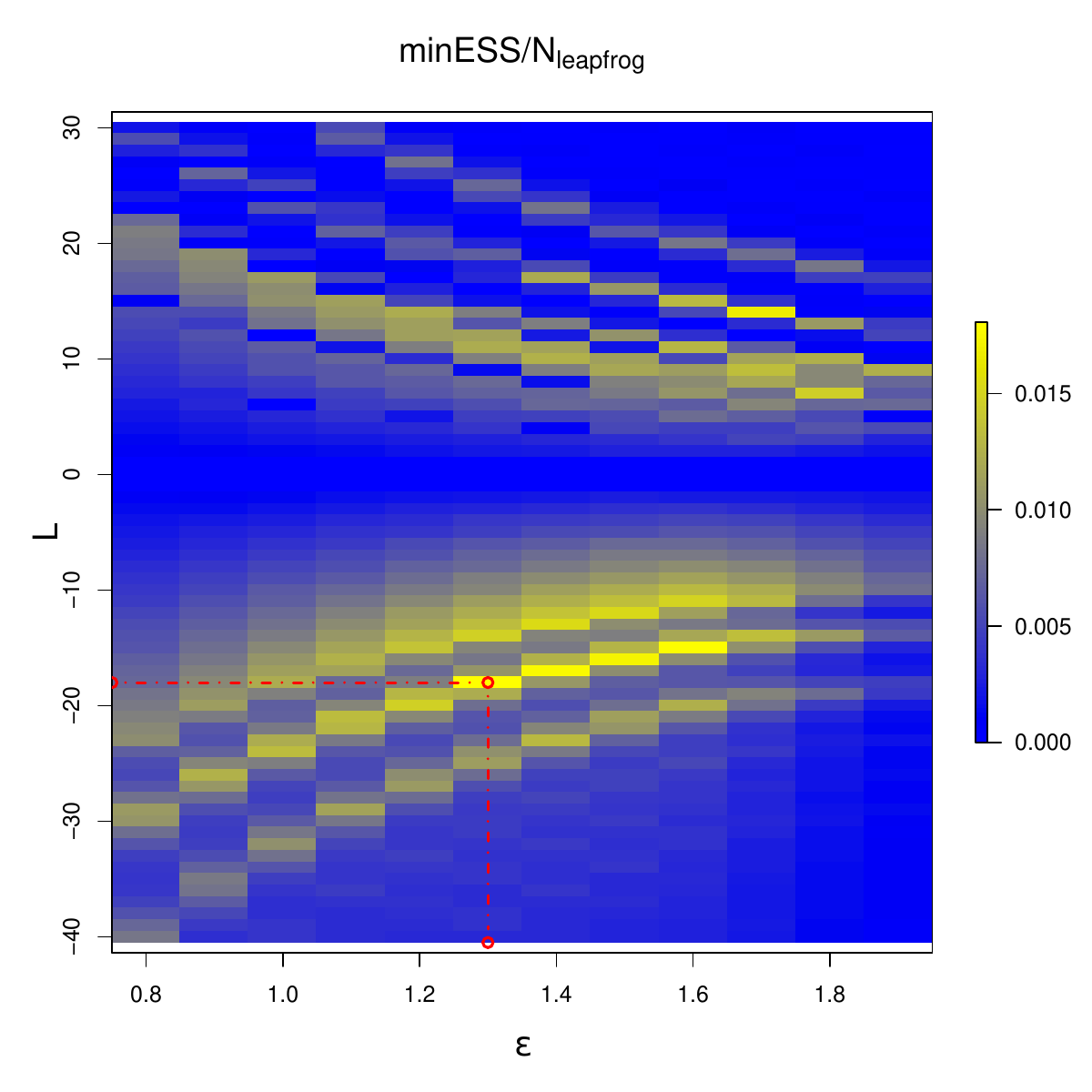}
    \includegraphics[scale=0.33]{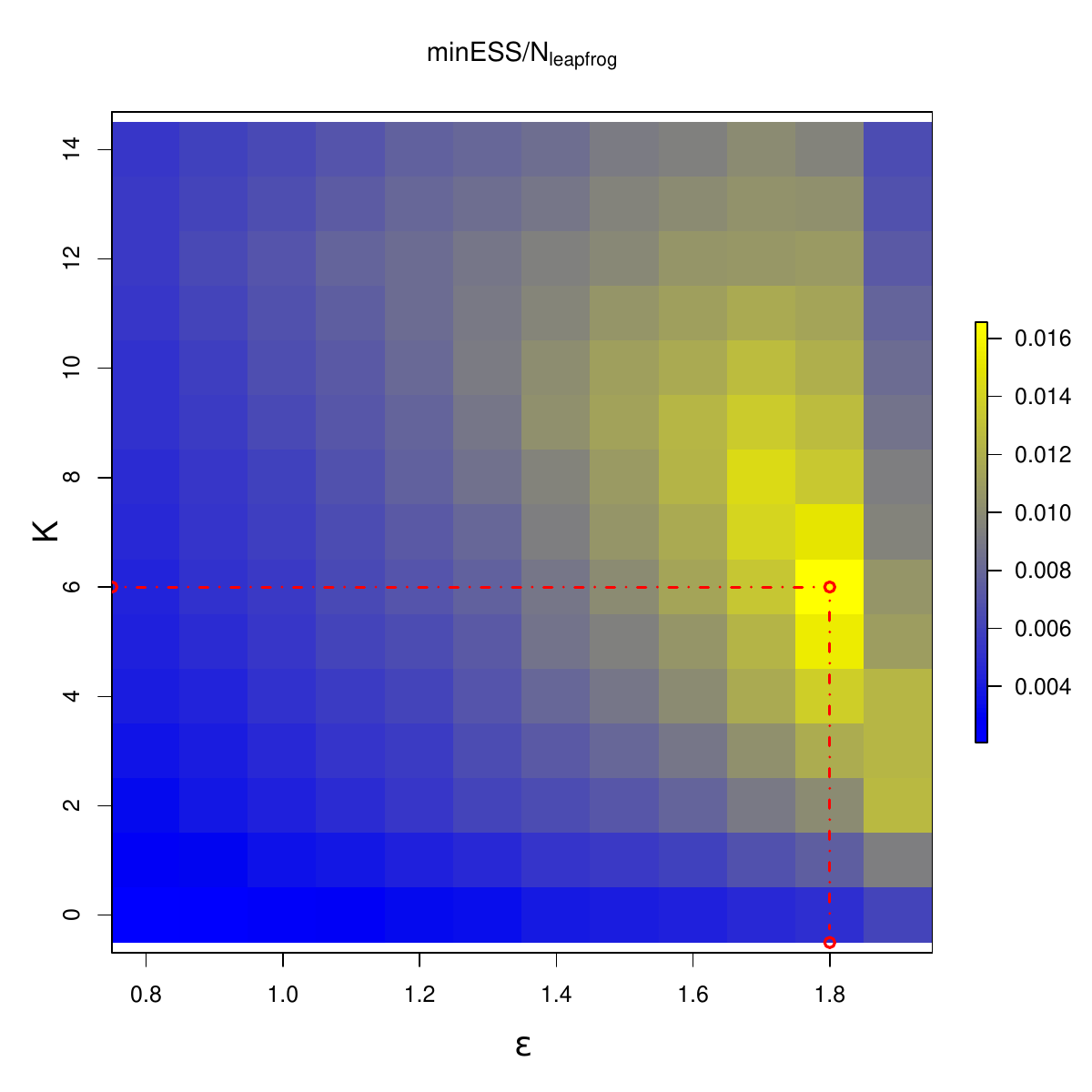}\\
    \includegraphics[scale=0.33]{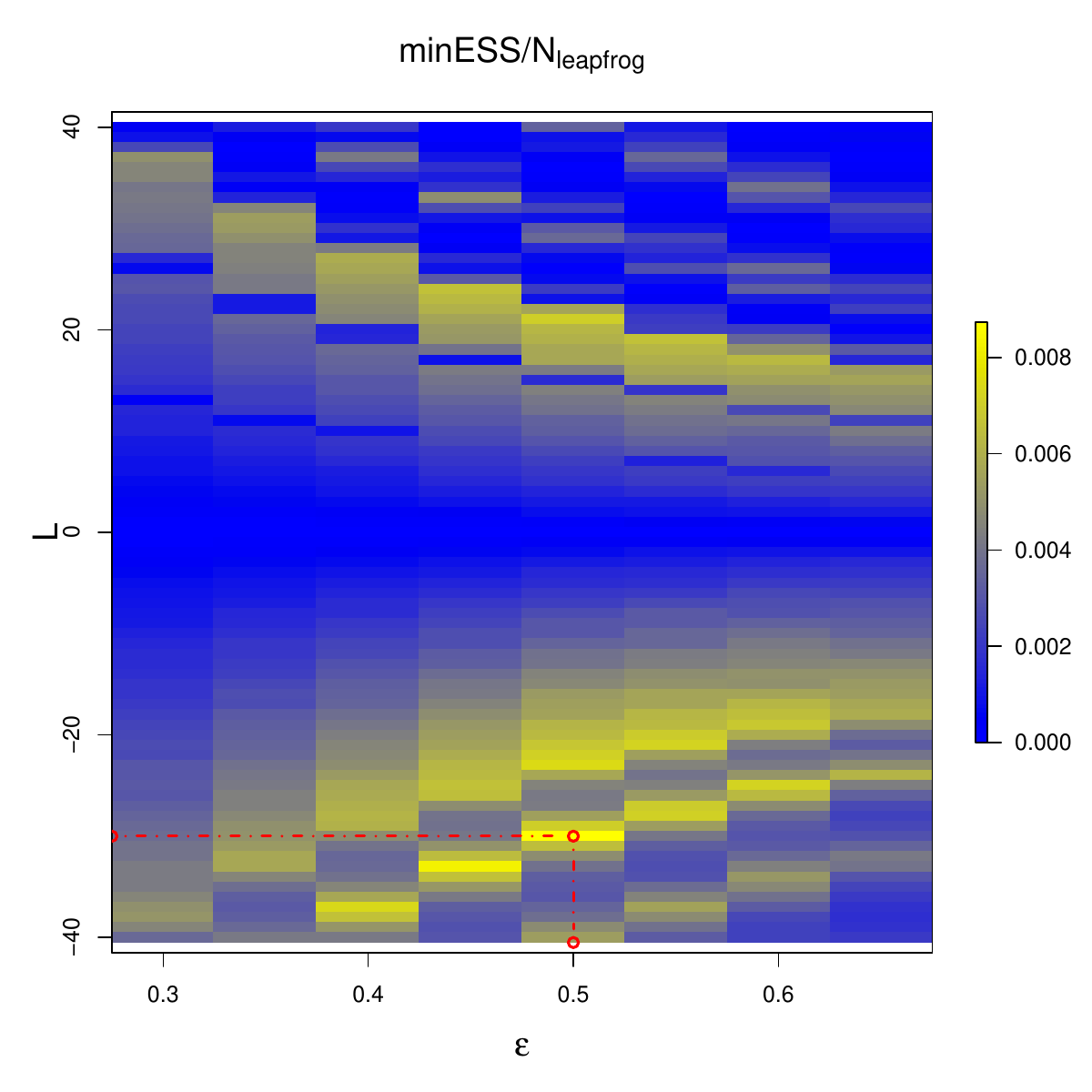}
    \includegraphics[scale=0.33]{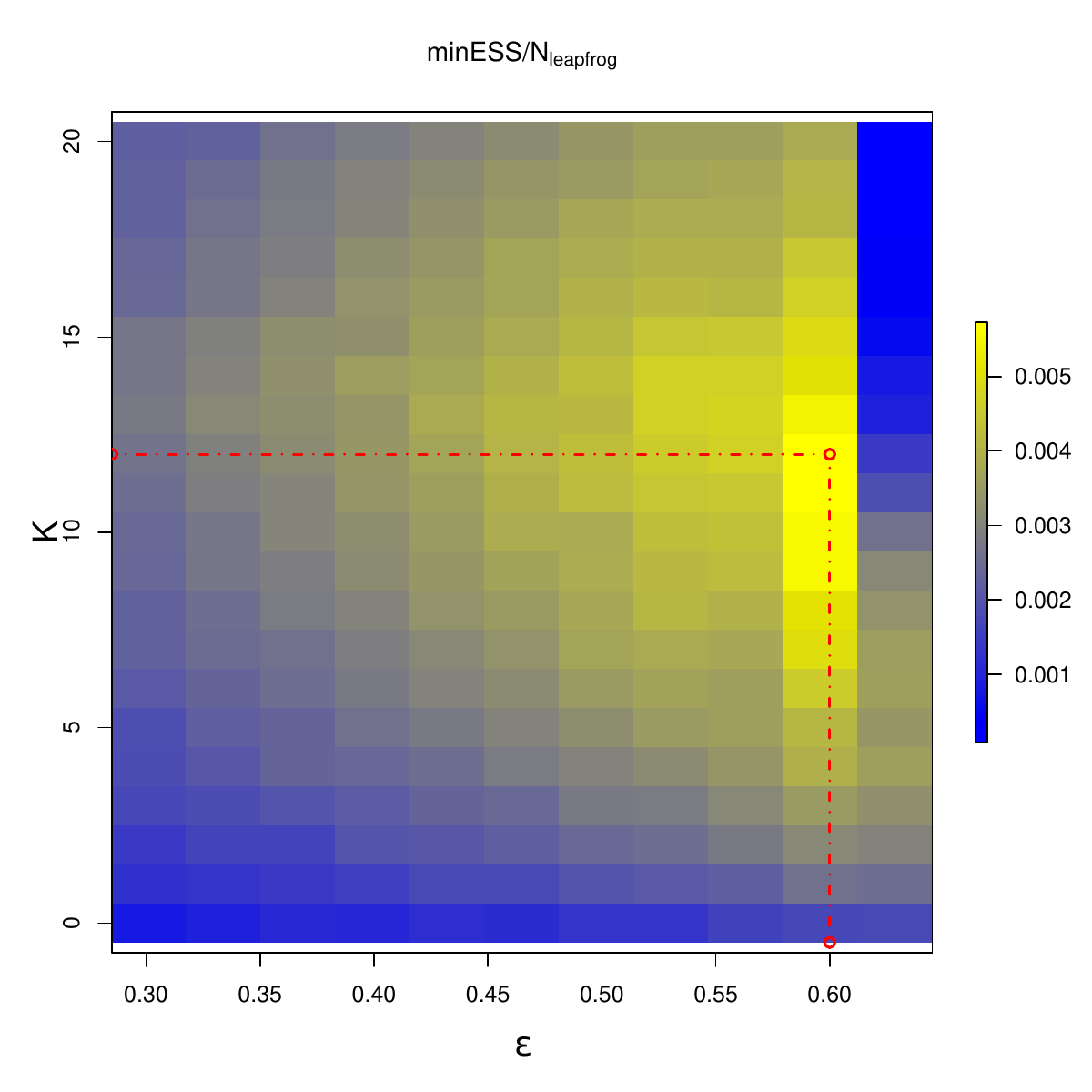}\\
    \includegraphics[scale=0.33]{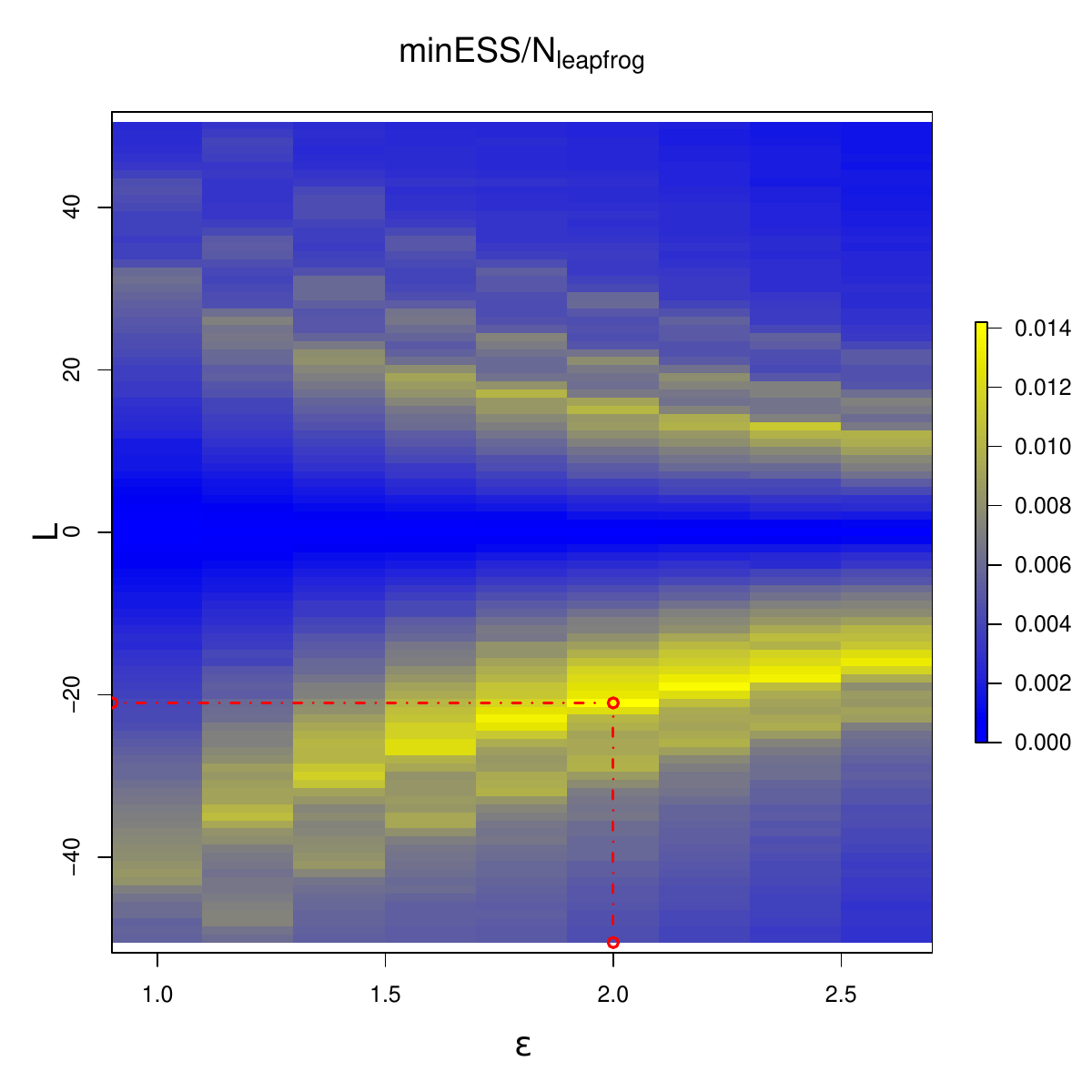}
    \includegraphics[scale=0.33]{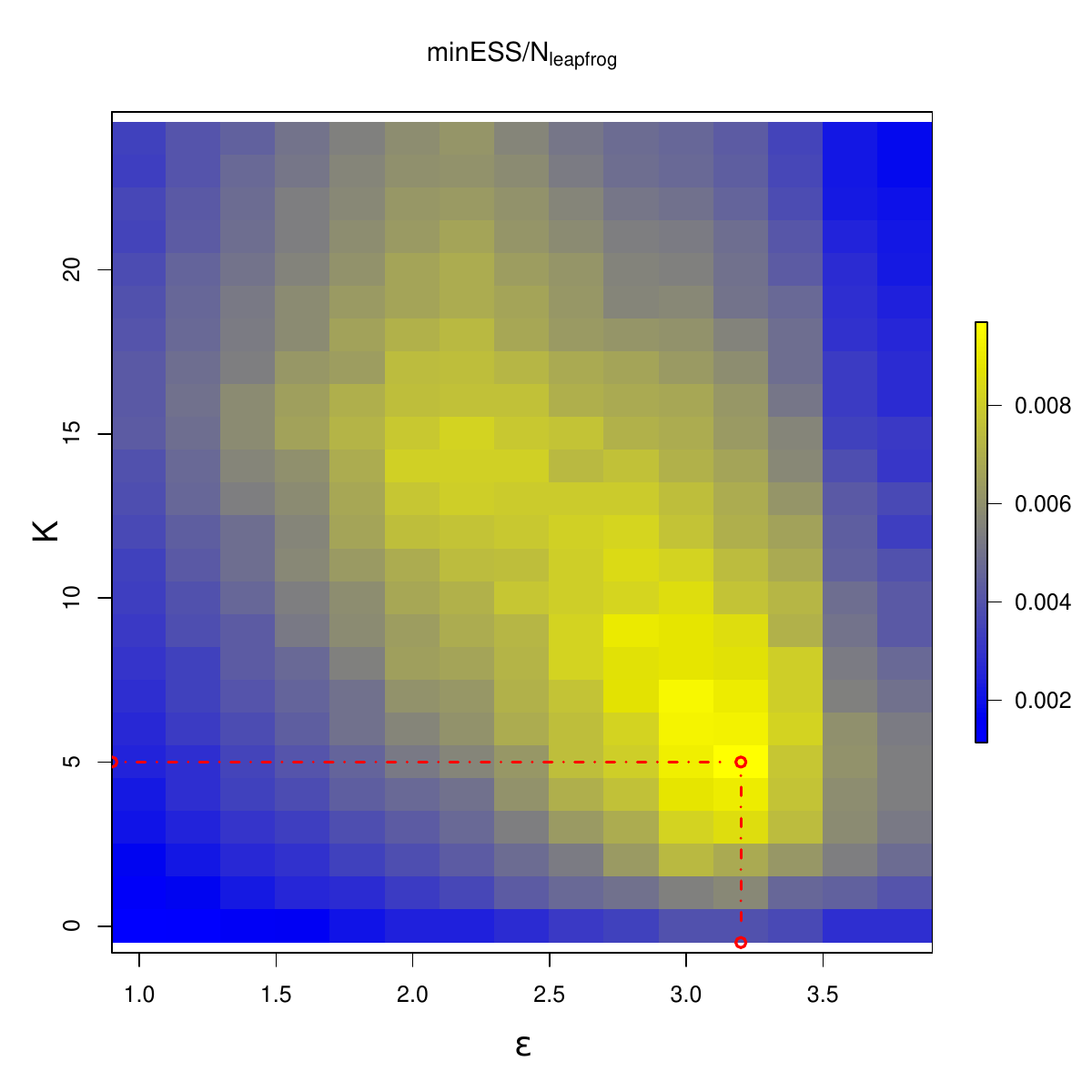}
    \end{center}
\caption{Efficiency, measured via \eqref{eqn.gen.eff}, as a function of the tuning parameters for $\pi_G^{VAR}$ (top), $\pi_{SG}^{VAR}$ (middle) and $\pi_L^{VAR}$ (bottom) all with $(\xi,d)=(20,40)$ as defined in Section \ref{sec.ToyTargets}. Left panels: HMC with positive $L$ values corresponding to the standard HMC algorithm and negative $L$ correspond to $|L|$ leapfrog steps of blurred HMC. Right panels: AAPS. Optimal tuning in red.
\label{fig.GaussSkGLSens}}
\end{figure}

We next investigate products of Gaussians using each of the four scale-parameter progressions with $\xi=20$. Dimension $d=40$ is high enough that for this and higher dimensions the components can be well approximated as arising from some continuous distributions and Corollary \ref{cor.EN} applies.

Table \ref{tab.ToyTargetRelEff_REV} shows the efficiencies of HMC, blurred HMC and the no U-turn sampler relative to that of $\AAPS$. Amongst the Gaussian targets, the relative performance of $\AAPS$ compared with HMC and NUTS is best for $\pi_G^{invSD}$ and $\pi_G^{H}$, which have just a few components with large scales and many components with small scales; weighting Scheme 3 ensures that the large components are explored preferentially. In contrast, the large number of components with large scales in $\pi_G^{VAR}$ leads to the worst relative performance of $\AAPS$; we, therefore, investigate this regime further using alternative component distributions, and choice of dimension and $\xi$. Across the range of targets, no algorithm is more than $1.7$ times as efficient as $\AAPS$. Empirical acceptance rates at the optimal parameter settings are provided in Appendix \ref{sec.TuningEpsilon}. All ESS estimates were at least $1000$.

\begin{table}
  \begin{center}
\caption{\label{tab.ToyTargetRelEff_REV}Relative efficiency compared with AAPS of HMC, blurred HMC (HMC-bl) and the no U-turn sampler (NUTS); raw efficiencies were taken as the minimum over the $d$ components of the number of effective samples per leapfrog step. The optimum was found by grid search. $^{(1)}$This is not a typographic error: the values differ at the 5th decimal place. $^{(2)}\xi$ for odd-numbered components of $\pi_{MR}$. $^{(3)}$The tuning surface for HMC was so uneven that were were unable to ascertain the optimal tuning parameters with any degree of certainty; instead we picked the least unreasonable of the combinations we tried.}
\begin{tabular}{l|rr|cccc}
  Target type&$d$&$\xi$&$\AAPS$&HMC&HMC-bl&NUTS\\
  \hline
  $\pi_G^{SD}$&40&20&1.000&0.722&0.718&1.182\\
  $\pi_G^{VAR}$&40&20&1.000&1.016&1.091&1.461\\
  $\pi_G^H$&40&20&1.000&$^{(1)}$0.162&0.644&0.392\\
  $\pi_G^{invSD}$&40&20&1.000&$^{(1)}$0.162&0.461&0.460\\
  \hline
      $\pi_{SG}^{VAR}$&40&20&1.000&1.253&1.528&1.618\\
  $\pi_{L}^{VAR}$&40&20&1.000&1.135&1.488&1.677\\
  $\pi_G^{VAR}$&100&20&1.000&0.657&1.020&1.378\\
  $\pi_G^{VAR}$&40&40&1.000&1.190&1.346&1.645\\
  \hline
   $\pi_{MR}$&20&$^{(2)}$10&1.000&1.647&1.582&0.728\\
  $\pi_{MR}$&40&$^{(2)}$10&1.000&1.045&1.166&0.873\\
  $\pi_{MR}$&100&$^{(2)}$10&1.000&0.770&0.970&1.079\\
  $\pi_{MR}$&400&$^{(2)}$10&1.000& 0.684 & 0.859 &0.963\\
  \hline
  $\pi_{G}^{RN}$&30&110&1.000&$^{(3)}$0.019&1.206&0.306
\end{tabular}
\end{center}
\end{table}

Table \ref{tab.ToyTargetRelEffRec_REV} of Appendix \ref{sec.useTuningAdvice} provides the equivalent efficiencies when the algorithms are tuned according to recommended guidelines, and shows AAPS remaining competitive with blurred HMC and the no U-turn sampler.

\subsection{Stochastic volatility model}
Consider the following model for zero-centred, Gaussian data
$y=(y_1,\dots,y_{T})$ where the variance depends on a zero-mean,
Gaussian AR(1) process started from stationarity \cite[e.g.][]{GirCal2011,WuSt2019}:
\begin{align*}
Y_t &\sim \mathsf{N}(0, \kappa^2\exp x_t),~t=1,\dots,T,\\
X_1 &\sim \mathsf{N}\left(0,\frac{\sigma^2}{1-\phi^2}\right),
~~~X_t|(X_{t-1}=x_{t-1})\sim \mathsf{N}(\phi x_{t-1},\sigma^2),~t=2,\dots,T.
\end{align*}
Parameter priors are $\pi_0(\kappa)\propto 1/\kappa$, $1/\sigma^2\sim \mathsf{Gamma}(10,0.05)$ and $(1+\phi)/2 \sim \mathsf{Beta}(20, 1.5)$, and we reparameterise to give a parameter vector in $\mathbb{R}^3$:
\[
\alpha=\log\frac{1+\phi}{1-\phi},
~~~
\beta=\log \kappa,
~~~\mbox{and}~~~
\gamma=2\log \sigma.
\]
As in \cite{GirCal2011} and \cite{WuSt2019} we generate $T=1000$ observations using parameters $\phi=0.98$, $\kappa=0.65$ and $\sigma=0.15$. We then apply blurred HMC, $\AAPS$ and the no U-turn sampler to perform inference on the $1003$-dimensional posterior. We ran $\AAPS$ using two different $K$ values, one found by optimising the choice of $(\epsilon,K)$ over a numerical grid and one by using the tuning mechanism mentioned in Section \ref{sec.tuning}. Tuning standard HMC (not blurred HMC) on such a high-dimensional target was extremely difficult; due to the algorithm's sensitivity to the integration time, we could not identify a suitable range for $L$. Widely used statistical packages such as Stan \cite[]{Stan2020} and PyMC3 \cite[]{SaWi2016} perform the blurring by default, and so we only present the results for HMC-bl. 

Each algorithm was then run for ten replicates of $10^5$ iterations using the optimal tuning parameters. Efficiency was calculated for each parameter, and the minimum efficiency over the latent variables and over all $1003$ components were also calculated. For each algorithm the mean and standard deviation (over the replicates) of these efficiencies were ascertained; Table \ref{Table.SV} reports these values normalised by the mean efficiency for $\AAPS$ for that parameter or parameter combination. Overall, on this complex, high-dimensional posterior, $\AAPS$ is slightly less efficient than blurred HMC, and slightly more efficient than the no U-turn sampler.

\begin{table}
\caption{Relative efficiency compared with $\AAPS^g$ ($\AAPS$ tuned using a grid) of $\AAPS^a$ ($\AAPS$ tuned using the advice from Section \ref{sec.tuning}), blurred HMC (HMC-bl) and the no U-turn sampler (NUTS) for the stochastic volatility model using $10$ replicates of $10^5$ iterations. Raw efficiencies were the number of effective samples per leapfrog step; these were then normalised by the mean efficiency from $\AAPS^g$; normalised standard deviations are reported in brackets.
\label{Table.SV}}
  \begin{center}
\begin{tabular}{c|c|c|c|c}
  Parameter & $\AAPS^g$ & $\AAPS^a$& HMC-bl & NUTS\\
  \hline
  $\alpha$&1.00 (0.12)&1.04 (0.19)&1.05 (0.18)&0.73 (0.25)\\
  $\beta$ &1.00 (0.09)&0.87 (0.11)&1.04 (0.13)&1.24 (0.29)\\
  $\gamma$&1.00 (0.03)&1.20 (0.05)&1.14 (0.07)&0.74 (0.19)\\
  $\min_{t\in\{1,\dots,T\}}\mathsf{ESS}(X_t)$     &1.00 (0.13)&0.89 (0.16)&1.07 (0.19)&0.78 (0.44)\\
  $\min_{\alpha,\beta,\gamma,t\in\{1,\dots,T\}}\mathsf{ESS}$&1.00 (0.05)&0.91 (0.12)&1.06 (0.11)&0.76 (0.21)\\
  \hline
  acc. rate ($\%$)&75.1&75.5&66.5&86.9
\end{tabular}
\end{center}
\end{table}

\subsection{Multimodality}
The $\AAPS$ algorithm with $K=0$ is close to reducible on a multimodal one-dimensional target, and this might lead to concerns about the algorithm's performance on multimodal targets in general. However, in $d$ dimensions, because $p^\top \nabla U(x)$ is a sum of $d$ components even with $K=0$, $\AAPS$ is not reducible on multimodal targets with $d>1$. This is illustrated in Appendix \ref{sec.bimodal}, which also details a short simulation study on three $40$-dimensional bimodal targets where $\AAPS$ is more efficient than the no U-turn sampler and is never less than two-thirds as efficient as blurred HMC. 

\section{Discussion}
\label{sec.Discussion}
We have presented the \emph{Apogee to Apogee Path Sampler} ($\AAPS$), and demonstrated empirically that it has a similar efficiency to HMC but is much easier to tune. From a current point, $\AAPS$ uses the leapfrog step to create a path consisting of a fixed number of \emph{segments}, it then proposes a point from these segments and uses an accept-reject step to ensure that it targets the intended distribution.

We investigated six possible mechanisms for proposing a point from the path, and for the numerical experiments we chose the probability of proposing a point to be proportional to the product of the extended target density at that point and the proposal's squared distance from the current point, which was possible with an $\mathcal{O}(1)$ memory cost. However, the flexibility in the proposal mechanism allows other possibilities such as a Mahalanobis distance based on an estimated covariance matrix, or $(||\xprop-\mu||^2-||\xcurr-\mu||^2)^2$ for some central point, $\mu$, with a similar motivation to the ChEEs diagnostic of \cite{HoRa2021}. Indeed, if any scalar or vector function, $f$, is of particular interest, then a proposal weighting of the form $||f(\xprop)-f(\xcurr)||^2$ could be used with a memory cost of $\mathcal{O}(1)$.

Choosing the current segment's position uniformly at random from the $K+1$ segments is not the only way to preserve detailed balance with respect to the intended target. For example, the current segment could be fixed as segment $0$ and proposals could only be made from segment $K$, a choice which bears some resemblance to the window scheme in \cite{Neal1992AnIA}; however, we found that this had a negative impact on the robustness of the efficiency to the choice of $K$ (see Appendix \ref{sec.moreRobustnessFigures}).

Because of its simplicity, many extensions to the $\AAPS$ algorithm are clear. For example, if the positions along the path are stored, then a delayed rejection step may increase the acceptance probabilities. A cheap surrogate for $\pi$ could be substituted within $\pitil$ in any weighting scheme. Indeed, given $c$, the randomly chosen offset of segment $0$, the next value could be chosen conditional on $\zcurr$ using any Markov kernel reversible with respect to $\pitil$ \cite[see also][]{neal2011ensemble}. A non-reversible version of the algorithm could set $K=1$ and always choose a path consisting of the current segment and the next segment forward; instead of completely refreshing momentum at each iteration, the momentum at the start of a new step could be a Crank-Nicolson perturbation of the momentum at the end of the previous step as in \cite{Horowitz1991}.
The properties of the leapfrog step required for the validity of AAPS are a subset of those required for HMC (see Section \ref{sec.HMC}), so any alternative momentum formulation \cite[e.g.][]{LivFauRob2019} or numerical integration scheme that can be used within HMC could also be used within AAPS.

$\AAPS$ with weighting Scheme 1 relates to HMC using windows of states \cite{Neal2011} but with $K$ defining the total number of (forward and backward) leapfrog steps taken rather than the number of additional segments. As mentioned in Section \ref{sec.intro} and shown in Corollary \ref{cor.EN}, the number of apogees is a more natural tuning parameter than an integration time as it relates to intrinsic properties of the target: rescaling all co-ordinates by a constant factor would not change the optimal $K$. 


\section*{Acknowledgements}
Work by all authors was supported by EPSRC grant EP/P033075/1. We are grateful to two anonymous reviewers for comments and suggestions that have materially improved the article.\\

Code available from \url{https://github.com/ChrisGSherlock/AAPS}

\bibliographystyle{rss}
\bibliography{../aaps_refs}

\appendix


\section{The leapfrog step}
\label{sec.leapfrog}
From a current $z_0=(x_0,p_0)$ the leapfrog step takes this to $z_1=(x_1,p_1)$ as follows:
\begin{align*}
  p'\gets p-\frac{1}{2}\epsilon \nabla U(x_0),~~~
  x_1\gets x_0+\epsilon M^{-1}p',~~~
  p_1\gets p'-\frac{1}{2}\epsilon \nabla U(x_1).
\end{align*}
The transformation is clearly deterministic, and starting at $(x_1,-p_1)$ and applying the three steps leads to $(x_0,-p_0)$. 
Further, each of the three transformations has a Jacobian of $1$, so the composition of the three also has a Jacobian of $1$.

\section{Weighting Schemes 1-3 with $\mathcal{O}(1)$ memory cost}
\label{sec.memOone}

As in Section \ref{sec.chooseWeight} we label the
points in $\cS_{a:b}(\zcurr)$ from furthest back in time to furthest forward in time as $z_B,\dots,z_0,\dots,z_F$. If $z'$ has been chosen from $\{z_0,\dots, z_F\}$ with a probability of $w(\zcurr,z')/\sum_{i=0}^F w(\zcurr,z_i)$ and $z''$ has been chosen from $\{z_{B},\dots z_{-1}\}$ with a probability of $w(\zcurr,z'')/\sum_{i=B}^{-1} w(\zcurr,z_i)$, then we obtain the desired proposal by setting $\zprop=z'$ with a probability of \[
\frac{\sum_{i=0}^F w(\zcurr,z_i)}{\sum_{i=B}^F w(\zcurr,z_i)},
\]
and otherwise setting $\zprop=z''$. We, therefore, use the following algorithm that samples $z'$ from an ordered list of vectors $z_c,\dots,z_e$ with a probability proportional to $w(z,z')$ and without storing the intermediate values; we then apply this function to $\{z_0,\dots,z_F\}$ and $\{z_{-1},\dots,z_{B}\}$.

\begin{enumerate}
\item Set $z'=z_c$.
\item For $j$ in $\{c+1,\dots,e\}$
  \begin{itemize}
\item With probability $\frac{w(z,z_j)}{\sum_{i=c}^j w(z,z_i)}$ set $z'=z_j$, else leave $z'$ unchanged.
    \end{itemize}
  \end{enumerate}

As required,
\begin{align*}
\Prob{Z'=z_k}
&=
\frac{w(z,z_k)}{\sum_{i=c}^k w(z,z_i)}
\prod_{j=k+1}^e \left(1-\frac{w(z,z_j)}{\sum_{i=c}^j w(z,z_i)}\right)\\
&=
\frac{w(z,z_k)}{\sum_{i=c}^k w(z,z_i)}
\prod_{j=k+1}^e \frac{\sum_{i=c}^{j-1} w(z,z_i)}{\sum_{i=c}^j w(z,z_i)}
=
\frac{w(z,z_k)}{\sum_{i=c}^{e} w(z,z_i)}.
\end{align*}

This allows us to propose $\zprop$ with the correct probability for the first five weighting schemes, as well as to evaluate the numerator term $\sum_{z\in \cS_{a:b}(\zcurr)}w(\zcurr,z)$ in \eqref{eq.AAPSalpha}. However, if all of the individual $z$ and evaluations of $\pitil(z)$ were forgotten, then they would need to be recalculated from scratch to evaluate the denominator term $\sum_{z\in \cS_{a:b}(\zcurr)}w(\zprop,z)$, entailing a repeat of all $F+B$ expensive leapfrog steps. We now show how this can be evaluated using only $3\times d$ summary statistics when $x\in \mathbb{R}^d$.

Firstly,
\begin{align}
  \nonumber
\sum_{i=c}^e \pitil(z_i)||x_i-\xprop||^2
&=
\sum_{i=c}^e \pitil(z_i) \sum_{k=1}^d(x_{i,k}-\xprop_{k})^2\\
\nonumber
&=
\sum_{k=1}^d \sum_{i=c}^e \pitil(z_i) x_{i,k}^2-\pitil(z_i)2x_{i,k}\xprop_k+\pitil(z_i)(\xprop_{k})^2\\
&=
\sum_{k=1}^d T_k^{(2)}-2\xprop_kT_k^{(1)}+(\xprop_k)^2 T_k^{(0)},
\label{eq.interimTot}
\end{align}
where for $k=1,\dots,d$,
\[
T_k^{(0)}=\sum_{i=c}^e \pitil(z_i),
~~~
T_k^{(1)}=\sum_{i=c}^e \pitil(z_i) x_{i,k},
~~~\mbox{and}~~~
T_k^{(2)}=\sum_{i=c}^e \pitil(z_i) x^2_{i,k}.
\]
We create these $3d$ summary statistics for $\{z_0,\dots,z_F\}$ and for $\{z_{-1},\dots,z_B\}$, apply \eqref{eq.interimTot} in each case, and add the two final totals together.

In the case of Scheme 2, $\pitil(z_i)$ is replaced with $1$, and for Scheme 1, the acceptance probability is $1$, so \eqref{eq.interimTot} is unnecessary.

\section{Proofs of results in Section \ref{sec.GPlimit}}

\subsection{Proof of Theorem \ref{thrm.GP}}
\label{sec.prove.GP.thrm}
\begin{proof}
When $U$ has the form \eqref{eq.product.target} and $\rho\equiv \mathsf{N}(0,I_d)$, the components are independent and, from Hamilton's equations \eqref{eq.HamiltonEqns}, they  evolve independently. We, therefore, start by considering an individual component, initiated at a scalar position $X_0\sim \pi_\nu$ and momentum  $P_0\sim \rho_1$, and with $(x_t,p_t)=\phi(t;x_0,p_0)$ as in Section \ref{sec.Hdynamics}.

The deterministic map $(x_0,p_0)\to (x_t,p_t)$ has a Jacobian of $1$ so that the density of $(X_t,P_t)$ is $f(x_t,p_t)=\pitil_\nu(\phi(x_t,p_t;-t))=\pitil_\nu(x_0,p_0)$. However, the conservation of energy under Hamiltonian dynamics is equivalent to
  \[
  \frac{\md}{\md t} \pitil_\nu(x_t,p_t)=0,
  \]
  so $f\equiv \pitil_\nu$ and $(X_t,P_t)$ is stationary.

  Now consider the contribution to the dot product from this single component:
\[
D(t;x_0,p_0,\nu):=\left.\frac{\md U}{\md x}\right|_{x_t} p_t,
=
-\frac{\md p}{\md t} p_t
=
-\frac{1}{2}\frac{\md}{\md t} p_t^2,
\]
by Hamilton's equations \eqref{eq.HamiltonEqns}. Hence
\begin{equation}
  \label{eq.GPproveExpect}
  \Expects{(X_0,P_0)\sim \pitil_\nu}{D(t;X_0,P_0,\nu)}
  =
  -\frac{1}{2}\Expects{(X_0,P_0)\sim \pitil_\nu}{\frac{\md}{\md t} P_t^2}
  =
  -\frac{1}{2}\frac{\md}{\md t} \Expects{(X_0,P_0)\sim \pitil_\nu}{P_t^2}
  =0,
\end{equation}
by stationarity, giving the expectation in \eqref{eqn.SGPfull}. 

Next, by Hamilton's equations \eqref{eq.HamiltonEqns}
\[
\sqrt{\nu} g'(\sqrt{\nu}x)=\frac{\md U}{\md x}=-\frac{\md p}{\md t}
=
-\frac{\md^2 x}{\md t^2},
\]
so
\[
x_t=\frac{1}{\sqrt{\nu}}a(\sqrt{\nu}t;\sqrt{\nu}x_0,p_0)
~~~\mbox{and}~~~
p_t=a'(\sqrt{\nu}t;\sqrt{\nu}x_0,p_0).
\]
Hence, we may write
\begin{align*}
D(t;x_0,p_0,\nu)
=
\left.\frac{\md U}{\md x}\right|_{x_t}p_t 
=
\sqrt{\nu} ~g'(\sqrt{\nu}x_t)~a'(\sqrt{\nu}t;\sqrt{\nu}x_0,p_0).
\end{align*}


By the stationarity of $(X_t,P_t)$ when $(X_0,P_0)\sim \pitil_\nu$, the joint distribution of $(X_s,P_s,X_t,P_t)$ for two times $s$ and $t$ is the same as that of $(X_0,P_0,X_{t-s},P_{t-s})$. Moreover, $(x_{t},p_{t})=\phi(t;x_0,p_0)=\phi(t-s;x_s,p_s)$. In particular, $p_{t}$ from a start at $(x_0,p_0)$ is identical to $p_{t-s}$ from a start at $(x_s,p_s)$; \emph{i.e.}, $a'(\sqrt{\nu}t;\sqrt{\nu}x_0,p_0)=a'(\sqrt{\nu}(t-s);\sqrt{\nu}x_s,p_s)$.

Define $V(s,t;\nu):=\Covs{(X_0,P_0)\sim\pitil_\nu}{D(s;X_0,P_0),D(t;X_0,P_0)}$. Since $\Expect{D}=0$,
\begin{align*}
  V(s,t;\nu)&=
  \nu \Expect{g'(\sqrt{\nu} X_s)a'(\sqrt{\nu}s;\sqrt{\nu}X_0,P_0)~g'(\sqrt{\nu} X_t)a'(\sqrt{\nu}t;\sqrt{\nu}X_0,P_0)}\\
  &=
  \nu \Expect{g'(\sqrt{\nu}X_s)a'(0;\sqrt{\nu}X_s,P_s)~g'(\sqrt{\nu} X_t)a'(\sqrt{\nu}(t-s);\sqrt{\nu}X_s,P_s)}\\
  &=
  \nu \Expect{g'(\sqrt{\nu}X_0)a'(0;\sqrt{\nu}X_0,P_0)~g'(\sqrt{\nu} X_{t-s})a'(\sqrt{\nu}(t-s);\sqrt{\nu}X_0,P_0)}\\
  &=
  \nu \Expect{g'(X_0^*)a'(0;X^*_0,P_0)~g'\left(\sqrt{\nu} X_{t-s}\right)a'\left(\sqrt{\nu}(t-s);X^*_0,P_0\right)}\\
  &=
  \nu \Expect{g'(X_0^*)P_0~g'\left(\sqrt{\nu} X_{t-s}\right)a'\left(\sqrt{\nu}(t-s);X^*_0,P_0\right)},  
\end{align*}
where $X_0^*=\sqrt{\nu}X_1\sim \pi_1$. 

Then, by the strong law of large numbers,
\begin{align*}
V^{(d)}(s,t)&:=
  \Cov{D^{(d)}(s;X_0^{(d)},P_0^{(d)}),D(t;X_0^{(d)},P_0^{(d)})}\\
&=\frac{1}{d}
\sum_{i=1}^d
\nu_i^{(d)}~ \Expects{(X_0,P_0)\sim \pitil_1}{g'(X_0)P_0~g'\left(\sqrt{\nu_i^{(d)}} X_{t-s}\right)a'\left(\sqrt{\nu_i^{(d)}}(t-s);X_0,P_0\right)}\\
&\stackrel{a.s.}{\rightarrow}
\Expects{\nu\sim \mu,(X_0,P_0)\sim \pitil_1}{\nu g'(X_0)P_0~g'\left(\sqrt{\nu} X_{t-s}\right)a'\left(\sqrt{\nu}(t-s);X_0,P_0\right)}\\
&=V(t-s),
\end{align*}
as defined in \eqref{eq.SGPcov}.

Take $\delta$ to be the minimum of the $\delta$ in \eqref{eq.finite.moment.one} and that in \eqref{eq.finite.moment.two}, and define
\[
b_\delta:=\Expects{X\sim \pi_1}{|g'(X)|^{2+\delta}}\Expects{P\sim \rho_1}{|P|^{2+\delta}},
\]
which is finite by \eqref{eq.finite.moment.one} and since $P$ is Gaussian.
By the stationarity of $(X_t,P_t)$,
\[
\Expect{|D(t;X_0,P_0)|^{2+\delta}}
=
\Expect{|D(0;X_0,P_0)|^{2+\delta}}
=\nu^{1+\delta/2}b_\delta.
\]
Further, for any finite integer $n$ and time points $t_1,\dots,t_n$ with $D_{1:n}=(D(t_1;X_0,P_0),\dots,D(t_n;X_0,P_0))^\top$
\begin{align*}
||D_{1:n}||^{2+\delta}&=n^{1+\delta/2}\left(\frac{1}{n}\sum_{j=1}^nD(t_j;X_0,P_0)^2\right)^{1+\delta/2}\\
&\le
n^{1+\delta/2}\left(\frac{1}{n}\sum_{j=1}^n|D(t_j;X_0,P_0)|^{2+\delta}\right)
=
n^{\delta/2}\sum_{j=1}^n|D(t_j;X_0,P_0)|^{2+\delta},
\end{align*}
by Jensen's inequality. Thus, 
\[
\Expect{||D_{1:n}||^{2+\delta}}
\le
n^{1+\delta/2}
\nu^{1+\delta/2}b_\delta.
\]
In particular, therefore, for any $\epsilon>0$,
\[
\Expect{||D_{1:n}||^2~1(||D_{1:n}||\ge \epsilon \sqrt{d})}
\le
\frac{1}{\epsilon^\delta d^{\delta/2}}
n^{1+\delta/2}\nu^{1+\delta/2}b_\delta.
\]
We have $d$ contributions as above, $D_{i,1:n}^{(d)}$, $i=1,\dots,d$, which we abbreviate to $D_{i}^{(d)}$. By the assumption that $V$ is positive definite, the left-hand side of the condition for the Lindeberg-Feller Central Limit Theorem reduces to
\[
\frac{1}{d}\sum_{i=1}^d \Expect{||D^{(d)}_i||^21(||D_i^{(d)}||\ge \epsilon\sqrt{n})}
\le
\frac{1}{d}\sum_{i=1}^d\frac{1}{\epsilon^{\delta} d^{\delta/2}}
n^{1+\delta/2}\nu_i^{1+\delta/2}b_\delta\rightarrow 0
\]
as required, subject to condition \eqref{eq.finite.moment.two}.

Hence, for any integer $n$, all $n$-dimensional distributions tend to a multivariate Gaussian with covariances given according to $V$, and the result follows.
\end{proof}

\begin{remark}
This proof would follow through for a non-Gaussian  momentum formulation subject to a moment condition on $p_0$.
\end{remark}

\subsection{Proof of Corollary \ref{cor.EN}}
\label{sec.prove.GP.corollary}
\begin{proof}
From \eqref{eq.ODE}  $g'(\sqrt{\nu} X_t)=-a''(\sqrt{\nu} t; X_0, P_0)$, so the covariance function is equivalent to
\[
V(t)=-\Expect{\nu g'(X_0)P_0 a''(\sqrt{\nu} t; X_0,P_0)a'(\sqrt{\nu} t;X_0,P_0)}.
\]
Thus
\begin{align*}
  V''(t)
  &=
  -\Expect{\nu^2 g'(X_0)P_0 \left\{a^{iv}(\sqrt{\nu} t; X_0,P_0)a'(\sqrt{\nu} t;X_0,P_0)+3 a''(\sqrt{\nu} t;X_0,P_0) a'''(\sqrt{\nu} t;X_0,P_0)\right\}},
\end{align*}
and
\begin{align*}
  V''(0)
  &=
  -\Expect{\nu^2 g'(X_0)P_0 \left\{a^{iv}(0; X_0,P_0)a'(0;X_0,P_0)+3 a''(0;X_0,P_0) a'''(0;X_0,P_0)\right\}}\\
  &=
  \Expect{\nu^2 g'(X_0)P_0 \left\{a^{iv}(0; X_0,P_0)P_0+3 g'(X_0) a'''(0;X_0,P_0)\right\}}.
\end{align*}

The variance of the process is
\[
V(0)=\Expects{\mu}{\nu}\Expects{\pi_1}{g'(X)^2}.
\]
The process, $D_*=\Dtil/\sqrt{V(0)}$ has a variance of $1$, and its covariance function has a second derivative at $0$ of
\[
V_*''(0)=\frac{  \Expect{\nu^2 g'(X_0)P_0 \left\{a^{iv}(0; X_0,P_0)P_0+3 g'(X_0) a'''(0;X_0,P_0)\right\}}}{\Expect{\nu}\Expect{g'(X)}^2}.
\]

The number of zeroes of $D_*$ is the same as the number of zeros of $\Dtil$.
 Thus, the expected number of zeros of $\Dtil$ over a time $T$ is
\[
\frac{T}{2\pi}\sqrt{-V_*''(0)}
\propto
T\sqrt{\frac{\Expect{\nu^2}}{\Expect{\nu}}}
=
T\sqrt{\Expect{\nu}}\times \sqrt{\frac{\Expect{\nu^2}}{\Expect{\nu}^2}},
\]
as required.
\end{proof}

\section{Robustness of efficiency to tuning choices}
\label{sec.moreRobustnessFigures}
Figure \ref{fig.GaussSkGLSens} is analogous to Figure \ref{fig.RosenSens} but for three other $40$-dimensional targets. These are products of one-dimensional distributions, respectively, Gaussian, logistic and skew-Gaussian, $\pi_G^{VAR}$, $\pi_L^{VAR}$ and $\pi_{SG}^{VAR}$ with $\xi=20$ as described in Section \ref{sec.ToyTargets}. For each target, over all components, the smallest scale parameter is $\sigma_1=1$. For a Gaussian target this implies that the leapfrog step is only stable provided $\epsilon <2\sigma_1=2$ \cite[e.g.][]{Neal2011}. We, therefore, stop at $\epsilon=1.9$, and even here, undesirable behaviour can be seen in Figure \ref{fig.GaussSkGLSens} (top row). For a skew-Gaussian distribution with $\sigma_1=1$ and $\alpha=3$, the density of the narrowest component in the left tail is
\[
2\phi(x) \Phi(3x)\approx \frac{2}{3x} \phi(x)\phi(3x) \propto \frac{1}{x}\phi(\sqrt{10}x).
\]
So, in the left tail, it is approximately equivalent to a Gaussian with $\sigma_1=1/\sqrt{10}$. Thus stability might only be hoped for when $\epsilon<2/\sqrt{10}$, as born out in Figure \ref{fig.GaussSkGLSens} (middle). The plots all show that for any sensible $\epsilon$, $\AAPS$ performance is much more robust to the choice of $K$ than HMC (whether blurred or not) is to the choice of $L$, or equivalently, $T$.

Figure \ref{fig.GaussSkGLSensNUTS} plots the efficiency of the No U-turn Sampler (NUTS) as a function of $\epsilon$ for the same four targets as in Figures \ref{fig.RosenSens} and \ref{fig.GaussSkGLSens}. The broad peak for the logistic and modified Rosenbrock targets suggests some robustness to the choice of $\epsilon$; however, the peak is much sharper for the Gaussian and skew-Gaussian targets.

\begin{figure}
\begin{center}
    \includegraphics[scale=0.27]{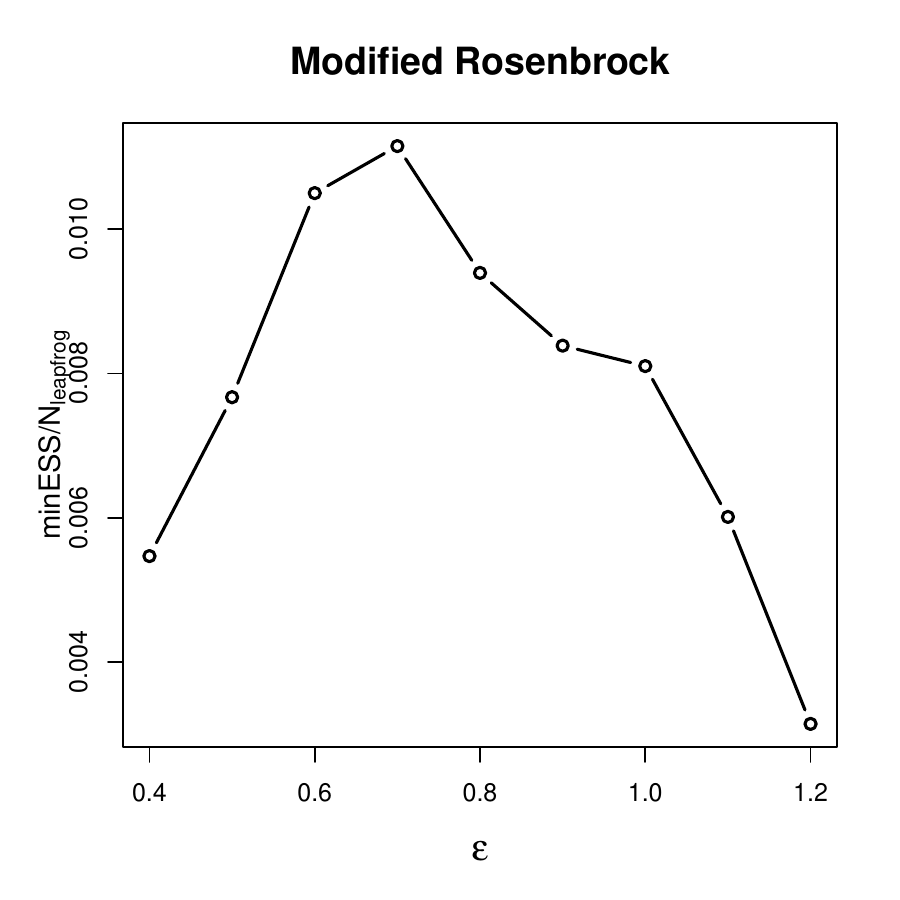}
    \includegraphics[scale=0.27]{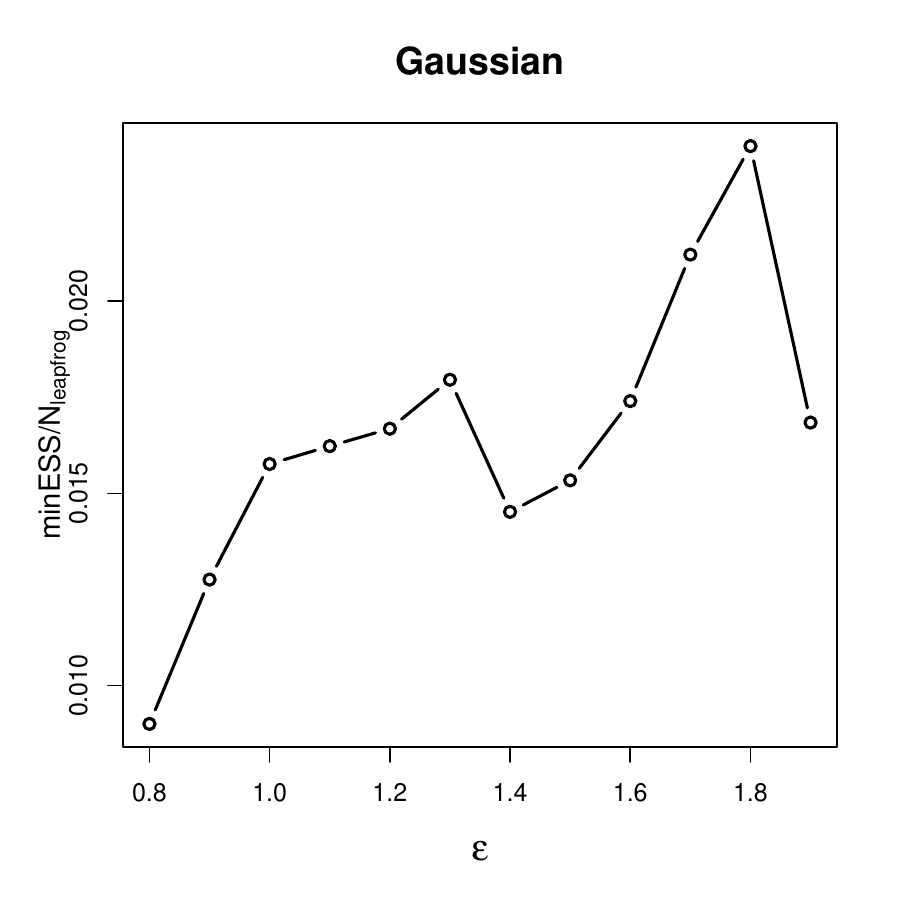}
    \includegraphics[scale=0.27]{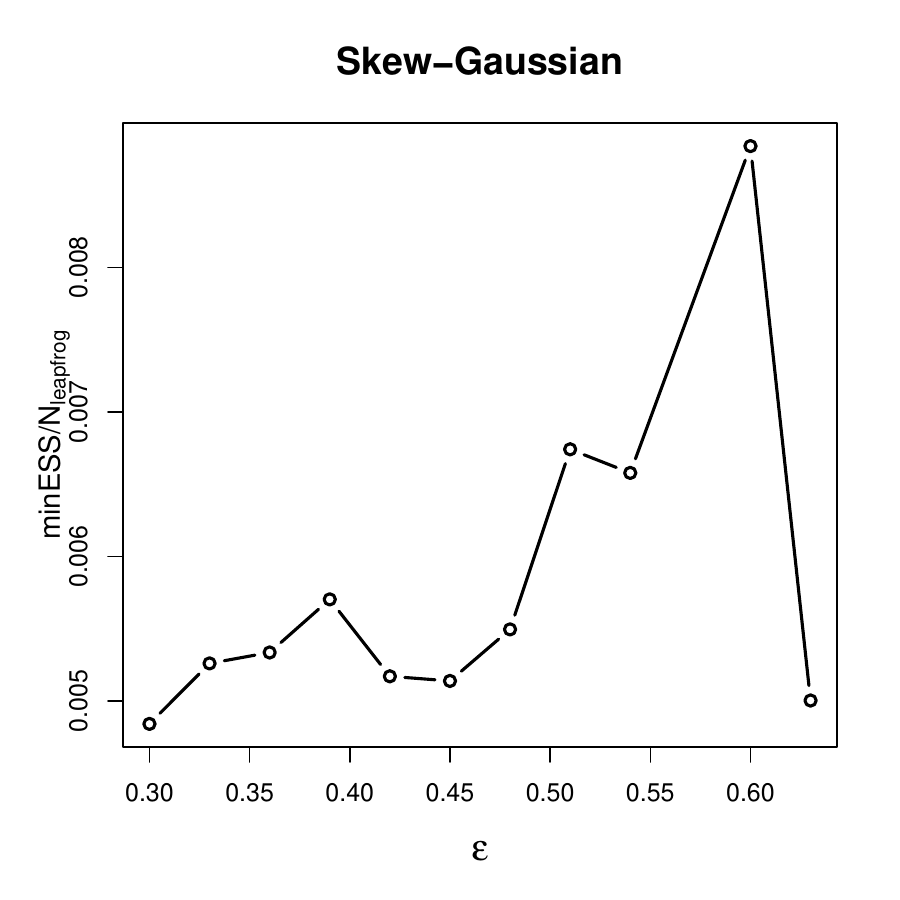}
    \includegraphics[scale=0.27]{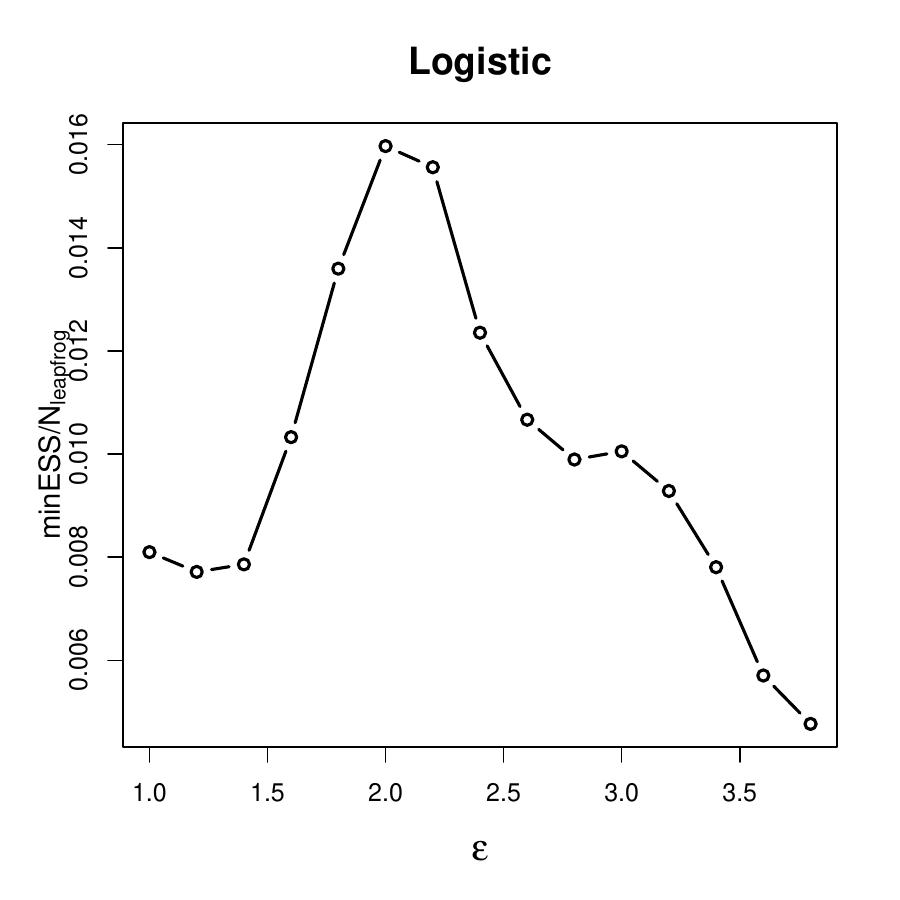}
    \end{center}
\caption{Efficiency, measured via \eqref{eqn.gen.eff}, as a function of the step size, $\epsilon$, for (left to right) $\pi_{MR}$, $\pi_G^{VAR}$, $\pi_{SG}^{VAR}$ and $\pi_L^{VAR}$, all with $d=40$ and with $\xi=(10,20,20,20)$, as defined in Section \ref{sec.ToyTargets}.
\label{fig.GaussSkGLSensNUTS}}
\end{figure}

Figure \ref{fig.sensDiffSchemes} compares four different versions of AAPS for the modified Rosenbrock target of Figure \ref{fig.RosenSens}. The left panels propose using Scheme 3 and the right panels use Scheme 1. The top panels are standard AAPS, whereas the bottom panels fix $c=0$ and always propose from segment $K$. For standard AAPS, Scheme 1 shares the robustness of Scheme 3, and in both cases Scheme 1 is less efficient at its optimum. However, whichever weighting scheme is used, the version of AAPS which only proposes from segment $K$ is much less robust than standard AAPS because of the limited portion of the path from which proposals can arise.

\begin{figure}
\begin{center}
  \includegraphics[scale=0.39]{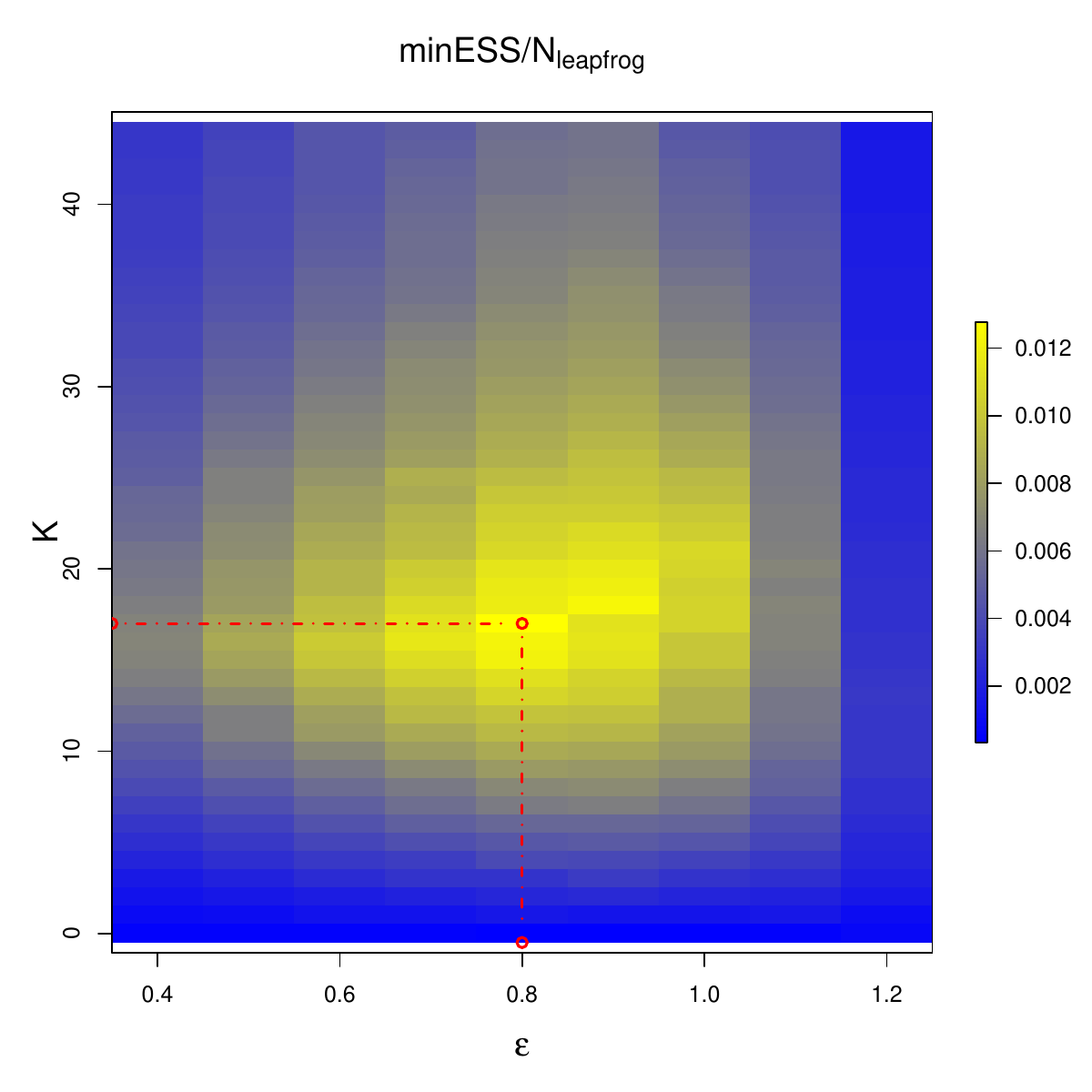}
      \includegraphics[scale=0.39]{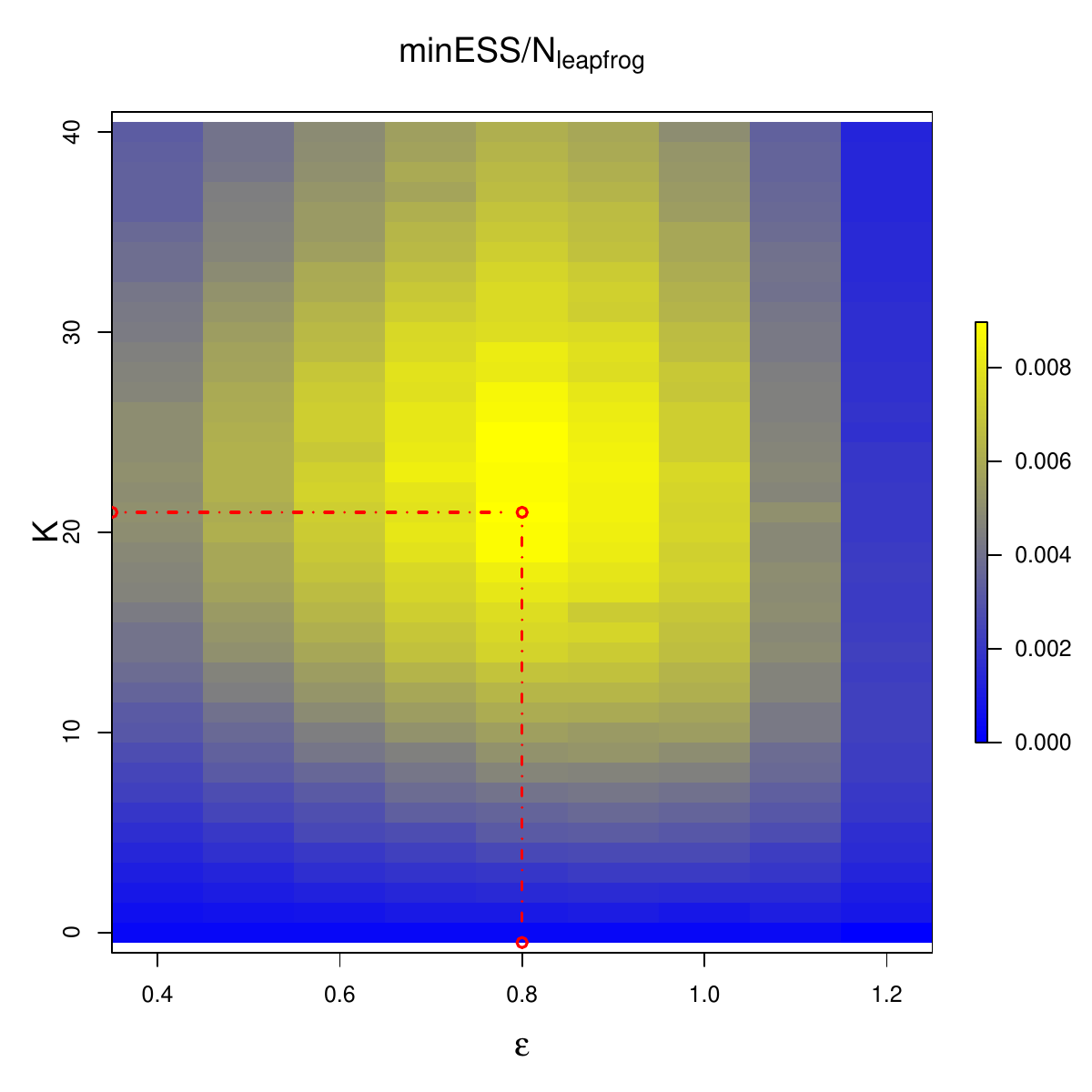}\\
  \includegraphics[scale=0.39]{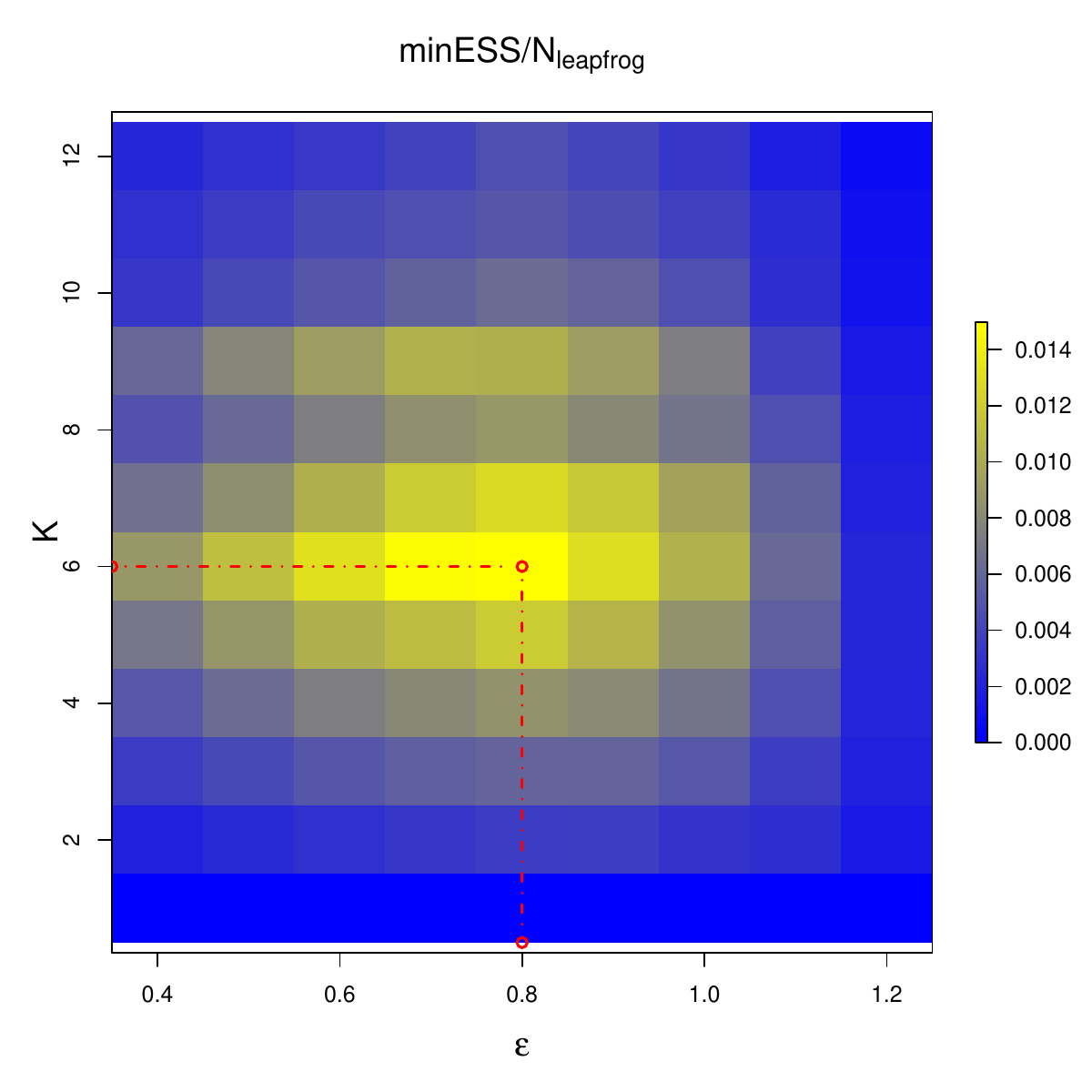}
      \includegraphics[scale=0.39]{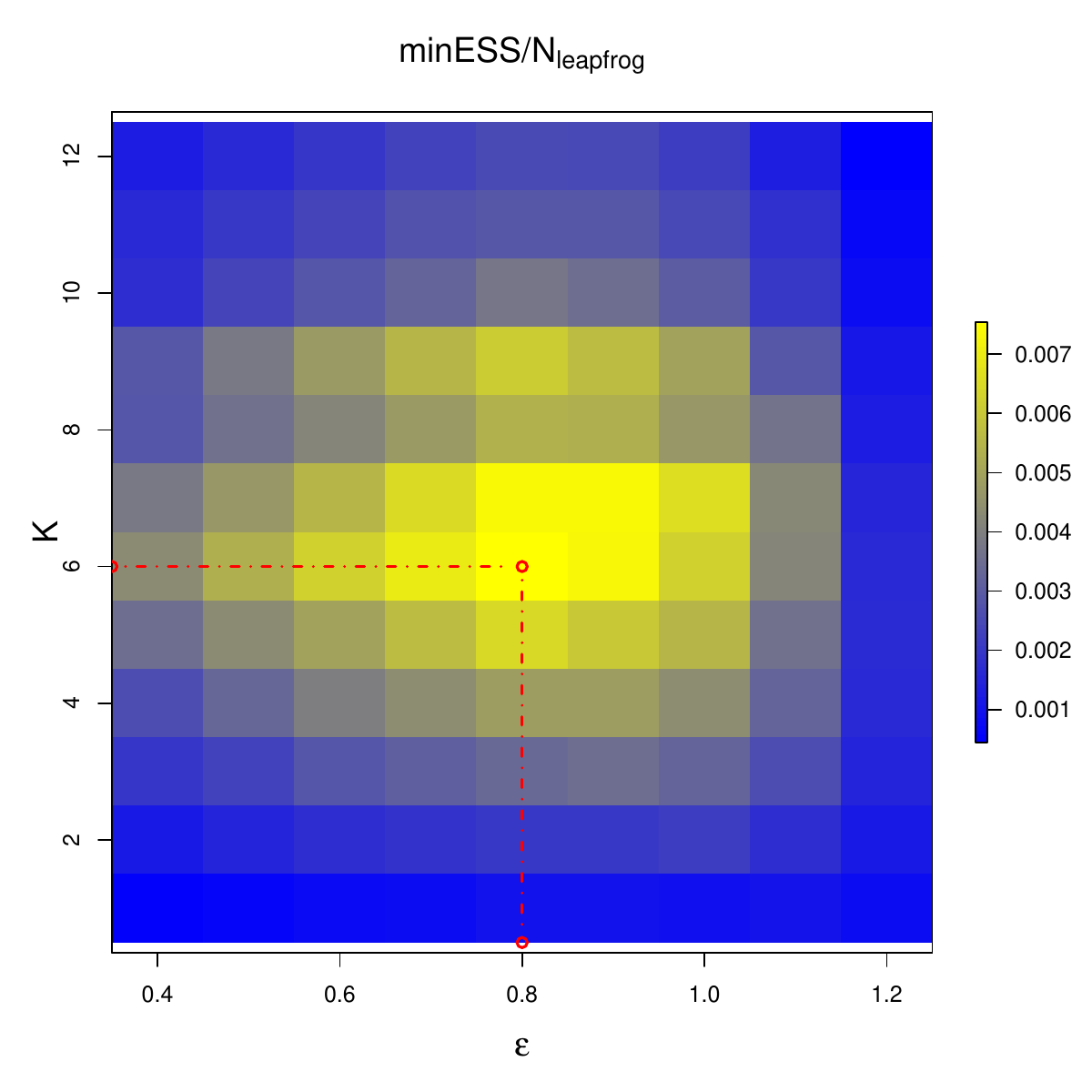}
    \end{center}
\caption{Efficiency at exploring the modified-Rosenbrock target of Figure \ref{fig.RosenSens} as a function of $\epsilon$ and $K$ for four different versions of the AAPS algorithm. Top-left: AAPS with Scheme 3 (repeating the right panel of Figure \ref{fig.RosenSens}); Top right: AAPS with Scheme 1. The bottom-left and bottom right panels also correspond to schemes 3 and 1 respectively but use a version of AAPS where $c=0$ and proposals are only made from the $K$th segment. 
\label{fig.sensDiffSchemes}}
\end{figure}

\section{Tuning $\epsilon$}
\label{sec.TuningEpsilon}
As $\epsilon \downarrow 0$, for weighting Schemes 2 and 3 (and Schemes 4 and 5), the acceptance rate approaches a constant, which is purely a function of the typical differences in position along the set of segments of the current point and of a proposal chosen according to squared jumping distance. This can be seen, for example, in the right panel of Figure \ref{fig.compareWeights}.

\begin{figure}
\begin{center}
    \includegraphics[scale=0.42]{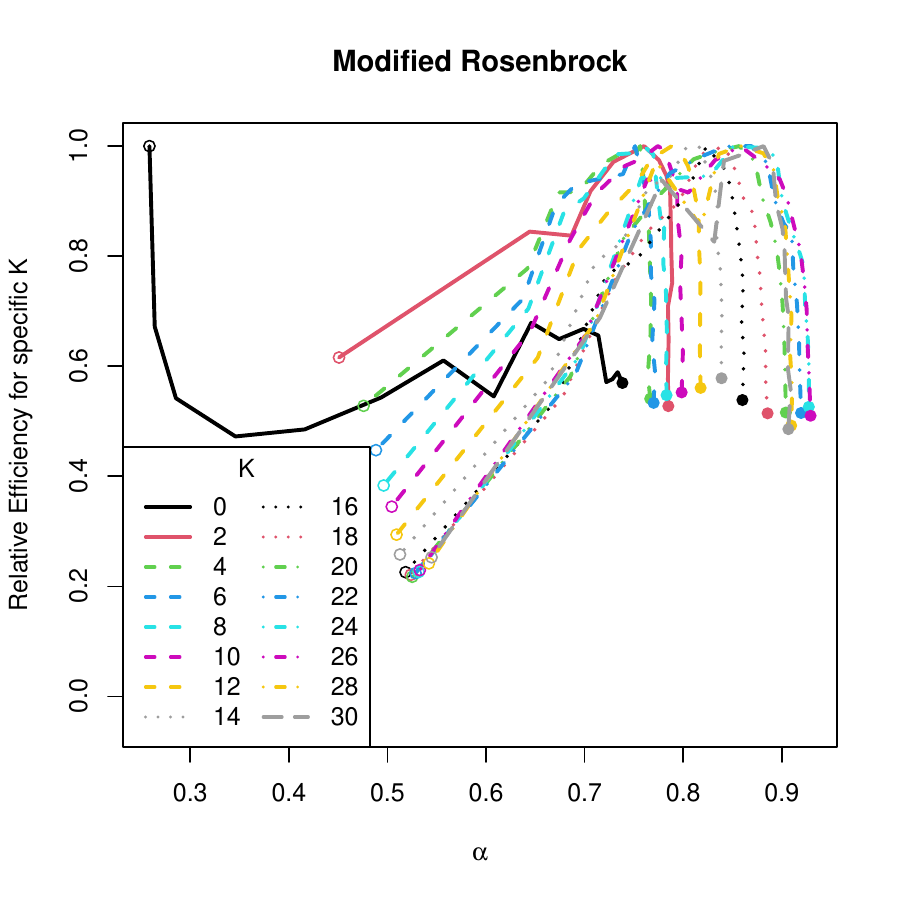}
    \includegraphics[scale=0.42]{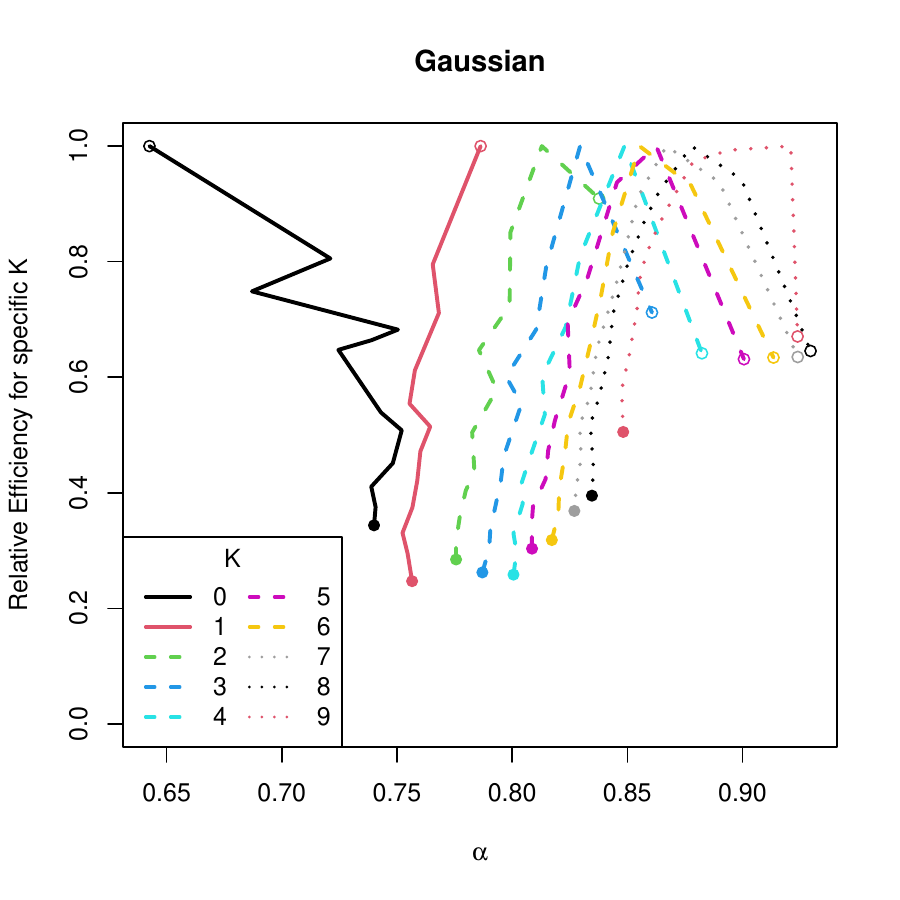}\\
    \includegraphics[scale=0.42]{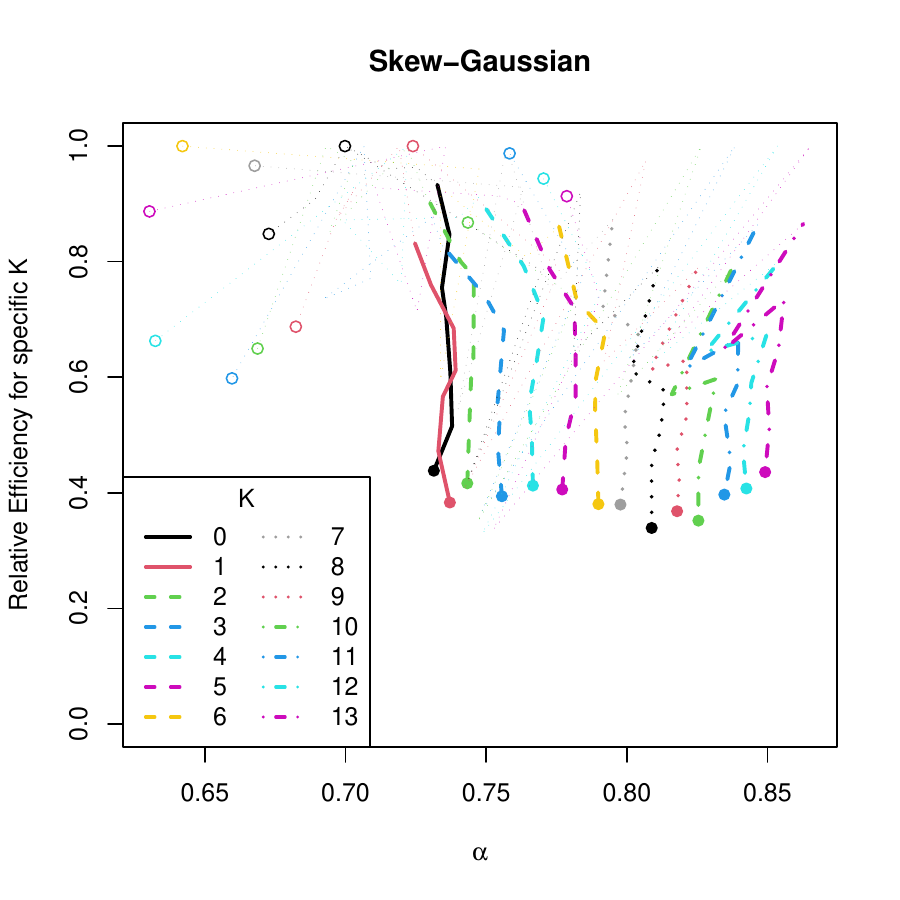}
    \includegraphics[scale=0.42]{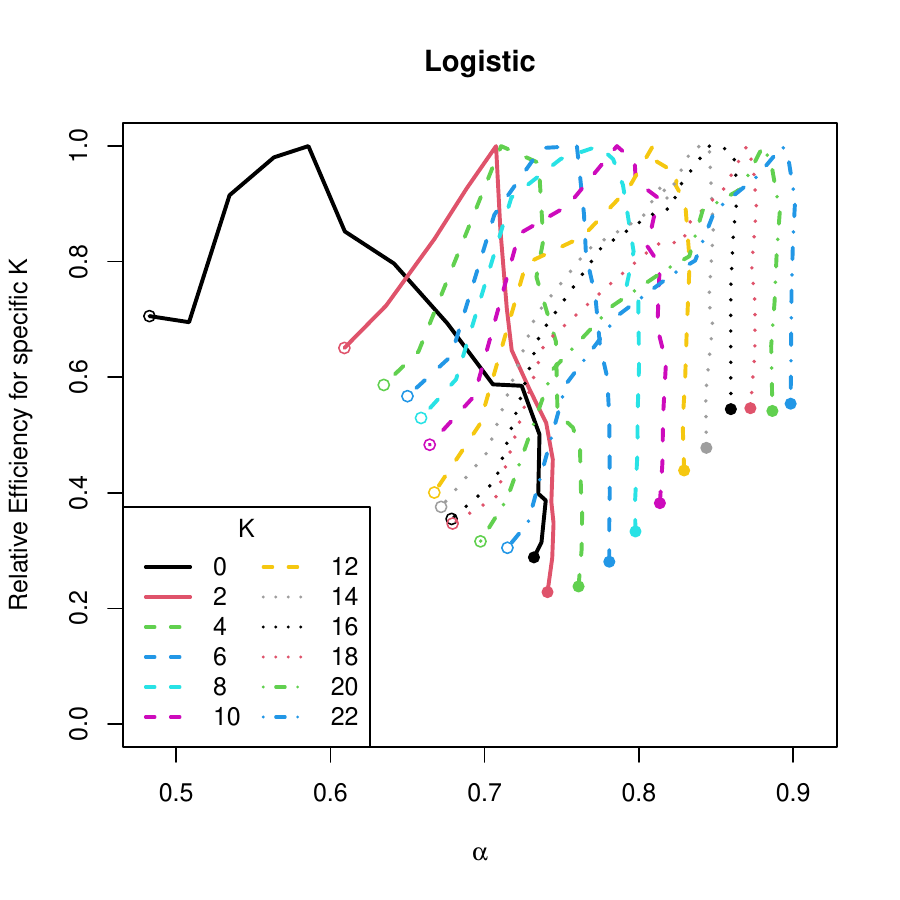}
    \end{center}
\caption{Relative efficiency (efficiency over the maximum obtained over all $\epsilon$ for that target and choice of $K$) plotted against empirical acceptance rate as $\epsilon$ varies. Targets are: the modified Rosenbrock distribution of Figure \ref{fig.RosenSens} (top left), and the Gaussian (top right), skew-Gaussian (bottom left) and logistic (bottom right) of Figure \ref{fig.GaussSkGLSens}. Each solid circle corresponds to the smallest vallue of $\epsilon$ used and each open circle to the largest. Lines on the skew-Gaussian plot are rendered thinner when $\epsilon>2/\sqrt{10}$ (see Appendix \ref{sec.moreRobustnessFigures}).
\label{fig.RelEffVsAccRate}}
\end{figure}

Figure \ref{fig.RelEffVsAccRate} corresponds to the AAPS runs in Figures \ref{fig.RosenSens} and \ref{fig.GaussSkGLSens} but shows efficiency as a function of empirical acceptance rate. As $\epsilon$ increases the vertical climb of each line shows efficiency increasing with barely a change in acceptance rate. When $K>0$ at least $60$\% of the optimal efficiency is achieved in this regime. For the logistic and modified-Rosenbrock examples, further increases in $\epsilon$ bring a decrease in acceptance rate; the skew-Gaussian acceptance rate shows some instability before it starts decreasing -- this instability is a consequence of the light tails, described in Appendix \ref{sec.moreRobustnessFigures}, and in practice one would choose a different kinetic energy formula \cite[]{LivFauRob2019}. For the Gaussian example the acceptance rate actually increases slightly before it decreases. This is because increasing $\epsilon$ causes some true apogees to be missed. This happens to an extent in all four examples; however with the Gaussian example it is so marked that for large $\epsilon$ the mean number of leapfrog steps per iteration actually increases as $\epsilon$ increases (\emph{e.g.}, when $\epsilon$ changes from $1.8$ to $1.9$ the mean number of leapfrog steps increases by around $10\%$), stabilising the weighted sums of $\pitil$ still further.

Table \ref{tab.numAcc} corresponds to Table \ref{tab.ToyTargetRelEff_REV}; it gives the empirical acceptance rates at the optimal parameter settings (found via a grid search) for each of the four algorithms and for the set of targets described in Section \ref{sec.ToyTargets}. Table \ref{tab.numAcc} also provides empirical estimates of the limiting acceptance rates as $\epsilon\downarrow 0$. The absolute discrepancies between the limiting and optimal acceptance rates are all below $5\%$. We recommend choosing the largest $\epsilon$ that leads to an acceptance rate within $3\%$ of the limiting value; this errs on the side of caution, as a lower discrepancy corresponds to a smaller epsilon and to a more more control of the variability in $\pitil$.

\begin{table}
  \begin{center}
    \caption{Empirical acceptance rates (\%) at the optimal parameter settings for $\AAPS$, HMC, blurred HMC (HMC-bl) and the no U-turn sampler (NUTS)\label{tab.numAcc}. $^*\xi$ for odd-numbered components of $\pi_{MR}$. $^{**}$The acceptance rate for HMC on $\pi_G^{RN}$ was extremely sensitive to $T$. An empirical estimate of the limiting acceptance rate as $\epsilon \downarrow 0$ is also supplied.}
\begin{tabular}{l|rr|ccccc}
  Target type& $d$ & $\xi$       	& $\AAPS$ &$\AAPS$-limit& HMC & HMC-bl & NUTS \\
  \hline
$\pi_G^{SD}$&40&20       	& 86.6& 83.2	&90.7 	& 78.1 	& 99.1    \\
$\pi_G^{VAR}$&40&20      	& 84.9& 80.1	&91.6 	& 88.4 	& 99.8   \\
$\pi_G^{H}$&40&20        	& 83.7& 85.4	&63.0 	& 71.7 	& 82.8    \\
$\pi_G^{invSD}$&40&20    	& 83.8& 88.3	&70.1 	& 72.4 	& 91.3    \\
\hline
$\pi_{SG}^{VAR}$&40&20            & 84.6& 84.0 &93.2 	& 90.6 	& 94.3\\
$\pi_L^{VAR}$&40&20      	& 74.7& 76.9	&73.9 	& 77.1 	& 96.3    \\
$\pi_G^{VAR}$&100&20     	& 84.5& 80.8	&78.6 	& 75.7 	& 96.4    \\
$\pi_G^{VAR}$&40&40      	& 81.6& 78.6	&97.5 	& 77.1 	& 96.8   \\
\hline
$\pi_{MR}$&20&$^*$10		& 82.8& 84.6	&78.9 	& 84.8	& 89.9   \\
$\pi_{MR}$&40&$^*$10		& 84.7& 87.4	&80.7 	& 80.0 	& 89.7 \\
$\pi_{MR}$&100&$^*$10		& 83.6& 87.1	&73.8 	& 74.3	& 98.3   \\
$\pi_{MR}$&400&$^*$10		& 83.2& 85.2	& 62.4 	& 71.4 	& 97.4 \\
\hline
$\pi_G^{RN}$&30&110		& 83.3 & 87.5	&$^{**}$62.0 	& 79.7	& 99.4    \\
\end{tabular}
\end{center}
\end{table}

Table \ref{tab.ToyTargetRelEffRec_REV} in Appendix \ref{sec.useTuningAdvice} shows the relative efficiency of AAPS when tuned according to this advice, and the advice to following (Appendix \ref{sec.tuning.K}) on tuning $K$, compared to the most efficient algorithm found via a grid search. For additional context this is compared with the efficiency of HMC, HMC-bl and NUTS when $\epsilon$ is chosen according to standard acceptance-rate criteria, and $T$ (HMC) or $T_*$ (HMC-bl) is chosen by a grid search. AAPS remains competitive with all of the other algorithms.

\section{Tuning $K$}
\label{sec.tuning.K}

We now describe a diagnostic that we have found useful for choosing the $K$ parameter, the number of segments to use over and above segment $0$. 

Each iteration, given a current position $\xcurr$, $\AAPS$ samples a value $c$ uniformly from $\{0,\dots,K\}$, AAPS proposes a point $x'$ from the set of points in the $K+1$ segments numbered $-c,\dots,K-c$, with a probability proportional to our chosen weight function, $w(\xcurr,x')\propto \pitil(x')||x'-\xcurr||^2$. It will be helpful in the sequel to consider the case where the segment number $j$ was, instead, sampled uniformly from $\{-c,\dots,K-c\}$. This would be equivalent to sampling $c$ and $e$ independently and uniformly from $\{0,\dots,K\}$ and setting $j=e-c$. Thus, the marginal distribution from which $j$ arose would be:
\[
\Prob{J=j}=
\frac{K+1-|j|}{(K+1)^2}\mbox{ if }j=-K,\dots,0,\dots,K,
~~~\mbox{and}~~~
\Prob{J=j}=0\mbox{ otherwise.}
\]

To decide upon a suitable $K$, we first perform a relatively short run of $\AAPS$ with a large $K$, $K_*$. Each iteration, we note the segment number, $j\in \{-a,\dots, K_*-a\}$ from which the proposal arose. By the symmetry of the sampling of both $a$ and the momentum, $p$, the distribution of $j$ is symmetric about $0$, so we track $k=|j|$ and keep a running total of the number of times, $n_{K_*}(k)$, that there has been a proposal from a segment with an index whose absolute value is $k$ for each $k$ from $0$ to $K_*$.

If the weights had been irrelevant and all segments contained the same number of points then
\[
p_K(k):=\Prob{|j|=k|\mbox{proposed segment sampled uniformly}, K}
=
\left\{
\begin{array}{ll}
  \frac{1}{K+1}&\mbox{if }k=0\\
  2\frac{K+1-k}{(K+1)^2}&\mbox{if }k=1,\dots,K_*,\\
  0&\mbox{otherwise}.
  \end{array}
\right.
\]
The method for placing the set of segments over the current segment inherently causes lower values of $k$ to appear more often, whatever the weights might suggest. To account for this, we calculate
\[
m_{K_*}(k)=\frac{n_{K_*}(k)}{p_{K_*}(k)}
\]
and choose the optimal number of additional apogees, $\Khat$ as
\[
\Khat=\arg \max_{k=0,\dots K_*}m_{K_*}(k).
\]
Below we give a further heuristic for why this tuning mechanism is reasonable and several plots from empirical studies which demonstrate it working in practice. In practice, during the tuning run we monitor $\mbar_{K_*}(k):=100 (K_*+1)m_{K_*}(k)/\sum_{i=0}^{K_*} m_{K_*}(i)$ since it stabilises as the number of iterations increases.

\subsection{Heuristic explanation}
For a particular segment, $j$ segments from the current point, let $s_*(j;x,p)$ represent the average (over the segment) squared distance in position between the current point, $(x,p)$, and points in the segment. To explain, heuristically, how the diagnostic works we make three simplifying assumptions:

\begin{enumerate}
\item $s_*(j;x,p)= c(x,p) s(j)$, for some functions $c$ and $s$ with $s(0)=0$.
\item The number of points in segment $j$ does not depend on $j$; \emph{i.e.},  $|\cS_j(X,P)|=N(X,P)$ for some integer-valued function $N$.
  \item Acceptance probabilities are generally large enough that variation in these is a secondary effect.
\end{enumerate}
The first assumption seems reasonable as, at least for low $j$, $s_*(j;x,p)\propto j^2$, approximately, although, strictly $s_*(0;x,p)>0$ unless there is a single point in the initial segment. The second assumption is strictly incorrect, but in the limit as $\epsilon \downarrow 0$, $\Expect{|\cS_j(X,P)|}$ does not depend on $j$ since, at stationarity, all points in the Hamiltonian path have the same density as the initial point (see the proof of Theorem \ref{thrm.GP}). Thus, the second assumption is reasonable provided values do not vary too much from $N(X,P)$. The third assumption is certainly correct in the limit as the step size, $\epsilon\downarrow 0$, but is reasonable empirically more generally.

Since proposals are made in proportion to  squared jumping distance, 
\[
m_{K_*}(j)\approx \Expects{\pitil}{s_*(j;X,P)N(X,P)}=c_1 s(j),
~~~\mbox{where}~c_1:=\Expects{\pitil}{c(X,P)N(X,P)}.
\]

We now assume that the tuning parameter has been set to $K$.
The absolute segment number $k$ is proposed with a probability proportional to $p_{K}(k) N(X,P) c(X,P) s(k)$. The mean squared jumping distance resulting from a tuning parameter $K$ is, therefore
\[
MSJD_K:=\frac{\sum_{j=0}^K p_K(j) N(X,P) c(X,P)^2 s(j)^2}{\sum_{j=0}^K p_K(j) N(X,P)c(X,P) s(j)}
=
c(X,P)\frac{\sum_{j=0}^K p_K(j) s(j)^2}{\sum_{j=0}^K p_K(j) s(j)}.
\]
Denoting $\Expect{c(X,P)}$ by $c_2$, the expectation over all initial values is
\[
\ESJD_K
:=
c_2\frac{\sum_{j=0}^K p_K(j) s(j)^2}{\sum_{j=0}^K p_K(j) s(j)}
=
c_2\frac{\sum_{j=1}^K p_K(j) s(j)^2}{\sum_{j=1}^K p_K(j) s(j)}
=
c_2\frac{\sum_{j=1}^K (K+1-j) s(j)^2}{\sum_{j=1}^K (K+1-j) s(j)}
\]
since $s(0)=0$. 
We can simplify this to 
\begin{equation}
\ESJD_K=\Expect{s(J)},~~~\mbox{where}~\Prob{J=j}\propto r_j:=(K+1-j)s(j),~j=1,\dots,K.
\end{equation}

$\ESJD$ does not take into account the computational effort, which is proportional to the number of segments, $K+1$. Hence, we define the efficiency as
\[
\Eff_K:=\frac{1}{K+1}\Expect{S(J)}.
\]

Intuitively, for small $j$, $s(j)\propto j^2$, approximately, which motivates the assumptions in the following. 
\begin{proposition}
  If $s(0)=s'(0)=0$ and $K$ is small enough that $s(j)$ is convex on $\{0,\dots,K\}$, then
\[
\Eff_K\ge \Effbar_K:=\frac{1}{K+1}s\left(\frac{K+1}{2}\right),
\]
and $\Effbar_j$ is non-decreasing on $j\in\{0,\dots,K\}$.
\end{proposition}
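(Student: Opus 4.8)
The plan is to extract two elementary consequences of the hypotheses and reduce both assertions to them. From convexity of $s$ together with $s(0)=0$ one obtains the \emph{star-shapedness} property that $v\mapsto s(v)/v$ is non-decreasing on $(0,K]$: for $0<a<b$, convexity and $s(0)=0$ give $s(a)\le\tfrac{a}{b}s(b)$, i.e.\ $s(a)/a\le s(b)/b$. From convexity together with $s'(0)=0$ one gets $s'\ge s'(0)=0$, so $s$ is non-decreasing on $[0,K]$. These two facts do all the work.

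For the lower bound $\Eff_K\ge\Effbar_K$, recall $\Eff_K=\frac{1}{K+1}\Expect{s(J)}$ with $\Prob{J=j}\propto(K+1-j)s(j)$. Convexity gives Jensen's inequality $\Expect{s(J)}\ge s(\Expect{J})$, and monotonicity gives $s(\Expect{J})\ge s(\tfrac{K+1}{2})$, provided I can show $\Expect{J}\ge\tfrac{K+1}{2}$. That single inequality is the crux. I would prove it by setting $m:=\tfrac{K+1}{2}$ and writing $(\Expect{J}-m)\sum_j(K+1-j)s(j)=\sum_{j=1}^K(K+1-j)s(j)(j-m)=:N$, then symmetrizing under the involution $j\mapsto K+1-j$ of $\{1,\dots,K\}$. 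Reindexing produces the second expression $N=-\sum_j j\,s(K+1-j)(j-m)$, and averaging the two yields
\[
2N=\sum_{j=1}^K (j-m)\bigl[(K+1-j)s(j)-j\,s(K+1-j)\bigr].
\]
Star-shapedness now controls each summand: for $j<m$ we have $j<K+1-j$, hence $s(j)/j\le s(K+1-j)/(K+1-j)$, so the bracket is $\le 0$ while $(j-m)<0$; for $j>m$ both inequalities reverse; and the $j=m$ term (present only for odd $K$) vanishes. Every summand is therefore non-negative, so $N\ge0$ and $\Expect{J}\ge m$, completing the bound.

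For the monotonicity of $\Effbar_j$ I would simply rewrite $\Effbar_j=\frac{1}{j+1}s\!\left(\tfrac{j+1}{2}\right)=\tfrac12\,\frac{s((j+1)/2)}{(j+1)/2}$ and invoke star-shapedness directly: as $j$ increases the argument $v=(j+1)/2$ increases, so $s(v)/v$ is non-decreasing, and hence so is $\Effbar_j$.

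The main obstacle is the inequality $\Expect{J}\ge\tfrac{K+1}{2}$: the proposal weight $(K+1-j)$ decreases in $j$ and on its own pulls the mean below the midpoint, so one must use the convex factor $s(j)$ (via star-shapedness) to push it back, and the cleanest route I see is the symmetrization above rather than a direct estimate. Everything else is routine once star-shapedness and monotonicity of $s$ are in hand. I would also note that the stated convexity ``on $\{0,\dots,K\}$'' must be read as convexity of $s$ on the real interval $[0,K]$, since $s$ is evaluated at the half-integer $\tfrac{K+1}{2}$ and at the generally non-integer value $\Expect{J}$; the bound $\tfrac{K+1}{2}\le K$ for $K\ge1$ keeps these arguments inside the interval.
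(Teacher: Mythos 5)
Your proof is correct and follows essentially the same route as the paper's: Jensen's inequality, the chord (star-shapedness) inequality $s(b)/b\ge s(a)/a$ derived from convexity and $s(0)=0$, monotonicity of $s$ from $s'(0)=0$, and the reduction of the monotonicity of $\Effbar_j$ to the same chord inequality. Your symmetrization argument merely makes explicit the step the paper states without detail — that $r_{K+1-j}\ge r_j$ for $j\le(K+1)/2$ forces $\Expect{J}\ge(K+1)/2$ — so it is a welcome elaboration rather than a different approach.
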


\begin{proof}
  Jensen's inequality gives $\ESJD_K\ge s(\Expect{J})$. Further, as $s$ is convex, for any $b\ge a$, the slope of the chord from $0$ to $b$ is at least as large as that of the chord from $0$ to $a$:
  \begin{equation}
    \label{eq.chords}
  \frac{s(b)-s(0)}{b-0}\ge \frac{s(a)-s(0)}{a-0}
  \implies
  \frac{s(b)}{b} \ge \frac{s(a)}{a},
  \end{equation}
  since $s(0)=0$. 
Thus, for $j\le (K+1)/2$, setting $b=K+1-j$ and $a=j$, we have
\[
r_{K+1-j}\ge r_j,
\]
and so $\Expect{J}\ge (K+1)/2$. Since $s'(j)\ge 0$ on $\{0,\dots,K\}$, we have
\[
\ESJD_K\ge s\left(\frac{K+1}{2}\right),
\]
proving the first part.
Reusing \eqref{eq.chords} with $b=(K+1)/2$ and $a=K/2$ gives, as required,
\[
\Effbar_K\ge \Effbar_{K-1}.
\]
\end{proof}

Indeed, straightforward algebra shows that if $s(j)=\lambda j$ then $\Eff_K=\lambda/2$; the inequality is tight for convex functions. However, in practice, $s(j)$ is \emph{strictly} convex initially, which suggests the maximum efficiency may occur after the first point when $s$ is no longer convex. Plots from various starting points of the mean squared Euclidean distance from the start to points in segment $j$ show behaviour approximately similar to
$s(j)\propto 1-\cos(\pi j/b)$, for some $b$, although there is, typically, less oscillation between peaks and troughs once the first peak has been passed. For small $j$, this formulation gives, approximately, $s(j)\propto j^2$ which fits with intuition.

Figure \ref{fig.IdealisedEff} plots $s(j)$ against $j$ when $b=15$, and the resulting $\Eff_K$ against $K$. The efficiency is maximised at a value close to $\inf_{j \ge 0}\arg \max s(j)$ and the relative difference between the efficiencies at the two points is small (here less that $0.5\%$). The $1/(K+1)$ penalty term means that damping of the oscillations of $s(j)$ after the first peak will make no difference to the point of maximum efficiency, only to the tail of the efficiency curve.

\begin{figure}
\begin{center}
    \includegraphics[scale=0.5]{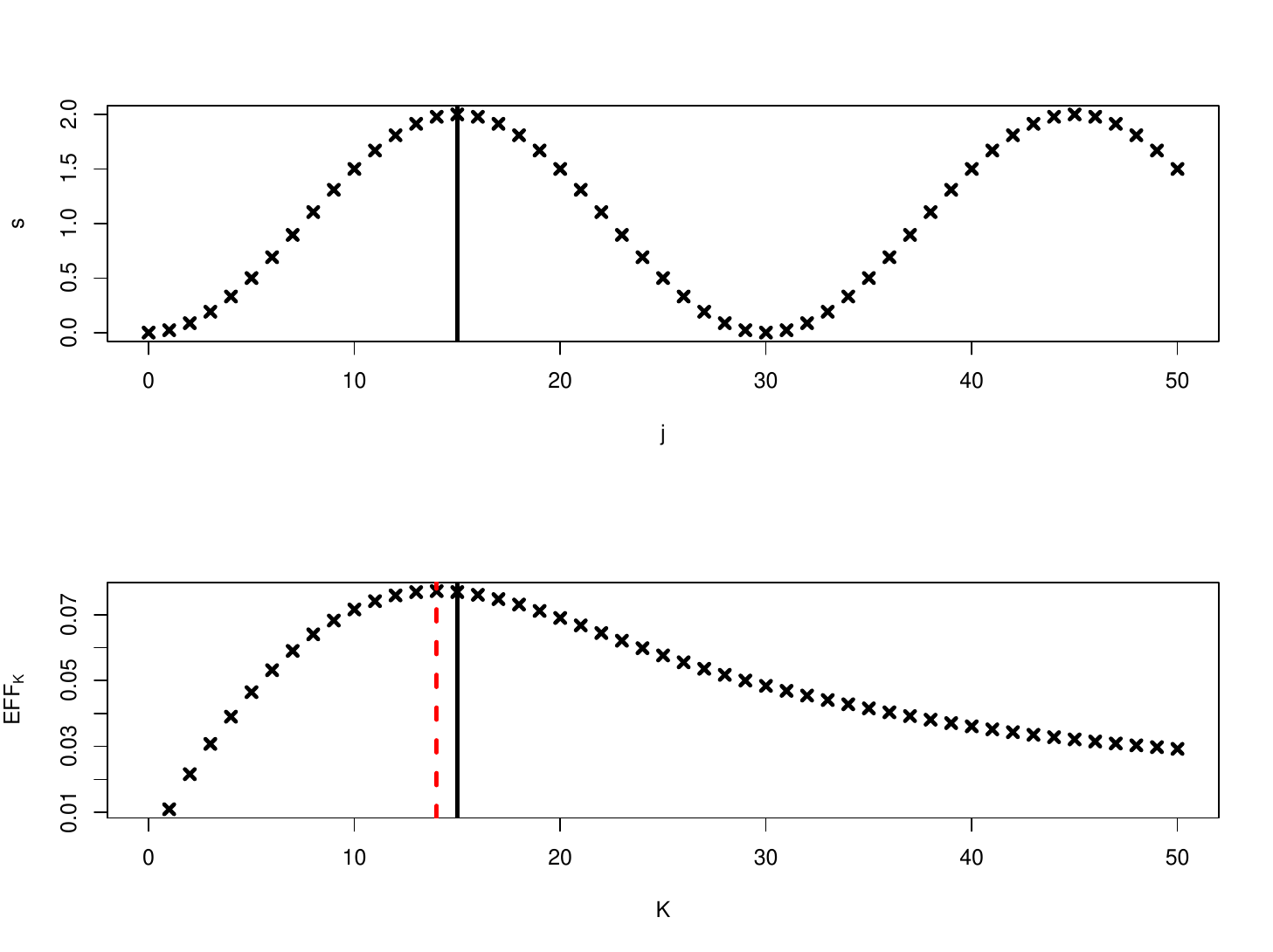}
    \end{center}
\caption{Top panel: plot of $s(j)=1-\cos(\pi j/b)$ against $j$. Bottom panel: plot of the resulting $\Eff_K$ against $K$. The vertical black line in each plot shows the value of $b$ (here, $15$), whilst the dashed vertical red line shows the $K$ value at which $\Eff_K$ is maximised.
\label{fig.IdealisedEff}}
\end{figure}

\subsection{Empirical verification}

\begin{figure}
\begin{center}
    \includegraphics[scale=0.6]{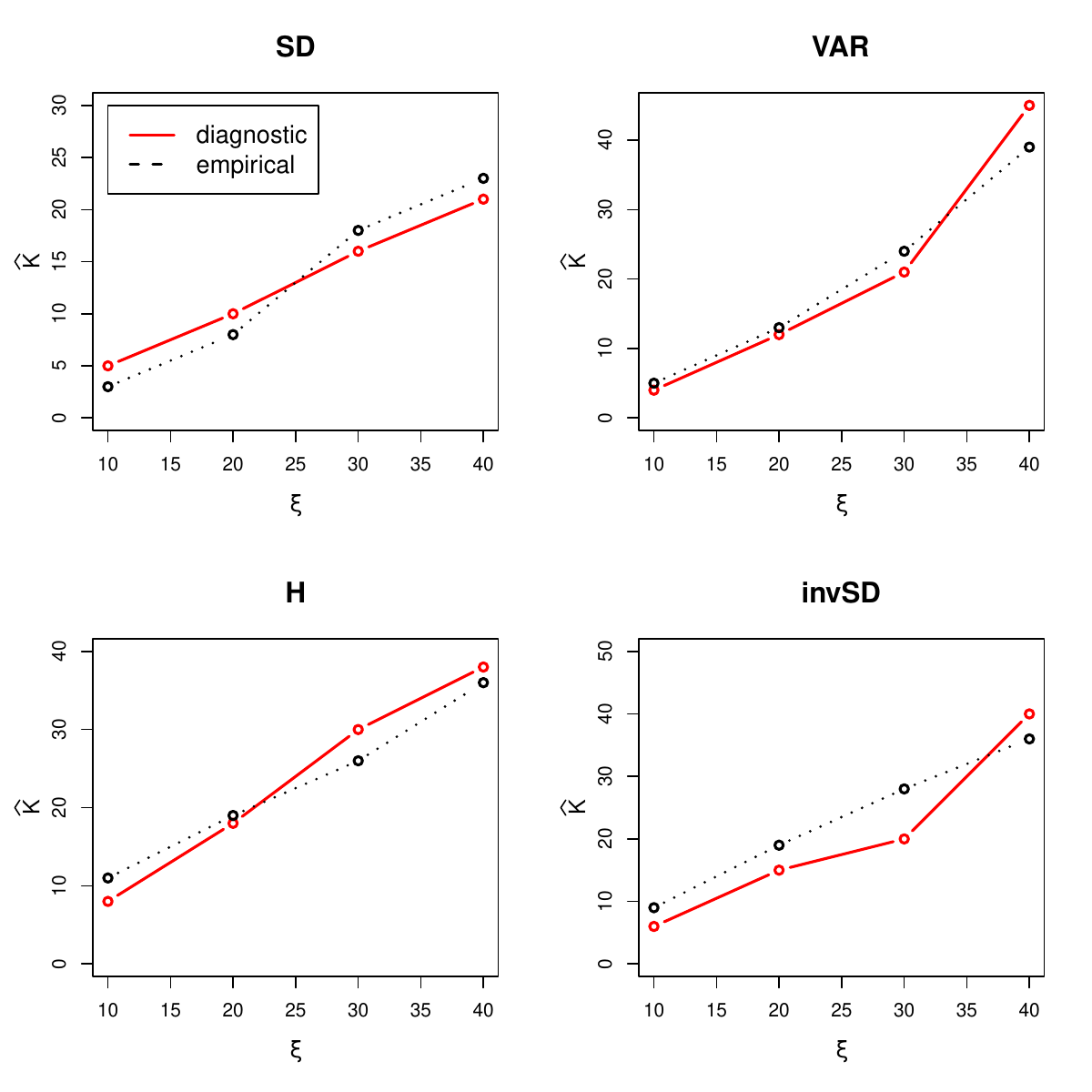}
    \end{center}
\caption{The optimal $K$ found by testing over a fine grid of possible values compared with predictions based  on the diagnostic using a $K^*=60$. All targets had dimension $d=40$ and were a product of skew Gaussians. Each run of $\AAPS$ was for at least  $10^5$ iterations.
\label{fig.empDiagnostic}}
\end{figure}

Figure \ref{fig.empDiagnostic} compares the optimal $K$ found by using a fine grid of $(K,\epsilon)$ values, with the value suggested by our diagnostic. All plots are for the skew-Gaussian product targets described in Section \ref{sec.ToyTargets}, but using a wider variety of eccentricity parameters, $\xi$, and all plots show good agreement between the diagnostic and the empirical estimate of the truth. Similar agreement was found for other targets from Section \ref{sec.ToyTargets} that we investigated.

The linear increase of $K$ with $\xi$ can be explained by the following heuristic for which we would like to thank one of the reviewers: rescaling if necessary, in Corollary \ref{cor.EN} consider $\Expect{\nu}=1$ so that the largest scale parameters are $\propto \xi^{1/2}$ and the smallest are $\propto \xi^{-1/2}$. Notice that the expected number of apogees per unit time interval is proportional to $1/\sqrt{\Expect{\nu^2}}$, the square root of the harmonic mean of the squared scale parameters. This is typically dominated by, and approximately proportional to the mean of the smallest scale parameters. By contrast, the time  required for a large amount of total movement is dominated by that needed for movement in the largest components. The quotient of these two quantities is $\xi$.

\section{Bimodal Targets}
\label{sec.bimodal}

The AAPS algorithm with $K=0$ is reducible on a multimodal, one-dimensional target as it cannot travel between the modes; however Figure \ref{fig.bimodalApogees} illustrates that even in $d=2$, AAPS with $K=0$ is not reducible. The figure depicts the path under Hamiltonian dynamics, marking on the apogees and clearly showing paths which cross from one potential well into another between apogees. The approximate dynamics using the leapfrog step can allow movement between the two wells in between two apogees by this mechanism, but also, when $\epsilon$ is large, the dynamics sometimes miss an apogee entirely, which again allows movement between the models.

\begin{figure}
\begin{center}
  \includegraphics[scale=.6,angle=0]{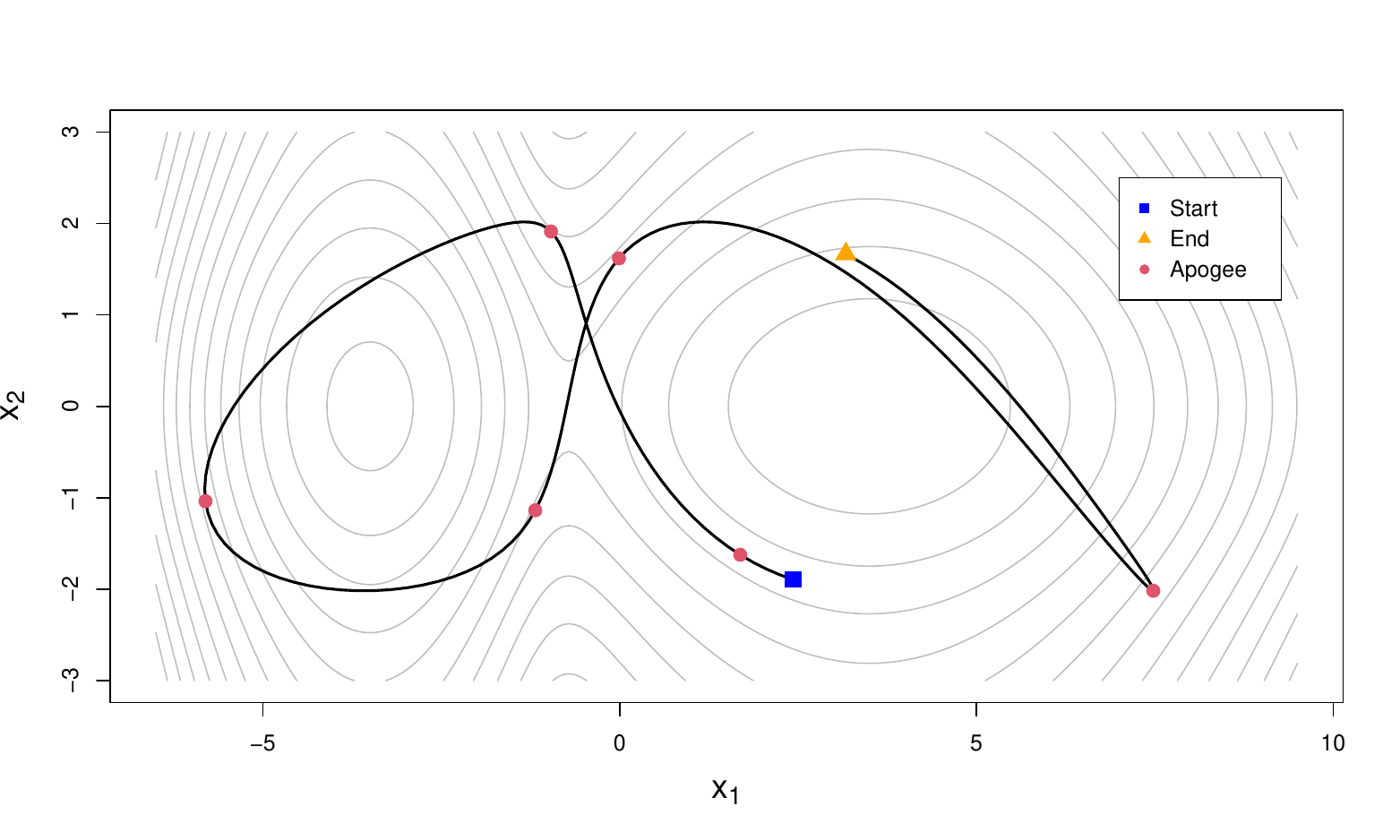}
\end{center}
\caption{The true path under Hamiltonian dynamics on a potential from a target of the form \eqref{eq.bimodal} with $d=2$ and $a=3.5$, but the variance of the second mixture component is $4I_2$. The particle starts at the blue square and ends at the yellow triangle with apogees marked as red dots.
\label{fig.bimodalApogees}}
\end{figure}

To compare algorithm efficiencies on bimodal targets we ran $\AAPS$, blurred HMC and the no U-turn sampler on targets of the form \cite[][rotated]{PoHo2020}:
\begin{equation}
\label{eq.bimodal}
  X\sim
\frac{1}{2}\mathsf{N}\left((-a,0,\dots,0)^\top,I_{40}\right)
+
\frac{1}{2}\mathsf{N}\left((a,0,\dots,0)^\top,100 I_{40}\right),
\end{equation}
with three different values for $a$ from $7$ (modes barely separated) to $15$ (substantial separation). 
Table \ref{tab.bimodal} shows the efficiencies of the optimised algorithms in each case. 

   \begin{table}
    \caption{$\mathsf{ESS}(x_1)/n_{\mathsf{leap}}\times 10^5$ for optimally tuned versions of AAPS, blurred HMC and the no U-turn sampler (NUTS) applied targets of the form \eqref{eq.bimodal} with three different values for $a$.
    \label{tab.bimodal}}
   \begin{center}
    \begin{tabular}{l|rrr}
      a&HMC-bl & AAPS &NUTS\\
      \hline
     7 & 347.4 & 231.4 & 209.4 \\
     10   & 158.2 & 119.6 & 82.3  \\
     15  & 37.2  & 20.9  & 14.5  
    \end{tabular}
  \end{center}
   \end{table}
   
  Blurred HMC is more efficient than AAPS which is more efficient than the no U-turn sampler; however, AAPS is always at least half as efficient as HMC.

\section{Weighting Scheme 6}
\label{sec.WSsix}

Denote the points in $\cS_{a:b}(\zcurr)$ from furthest back in time to furthest forward in time as $z_B,\dots,z_0,\dots,z_F$, and for $l\in\{B,\dots,F\}$ let $T_l:=\sum_{i=B}^l \pitil(z_i)$.

\begin{enumerate}
\item Let $\cH_0:=\{z_l:T_l\le T_F/2\}$ and $\cH_1:=\{z_l:T_{l-1}\ge T_F/2\}$. For  $z\in \cH_0\cup \cH_1$, the corresponding weight is $w(z)=\pitil(z)$.
\item In the special (null) event that $\exists h: T_h=T_F/2$, $\cH_0\cup \cH_1=\cS_{a:b}(\zcurr)$ go to Step 6; otherwise define the boundary point as $z_h$, where $T_{h-1}<T_F/2< T_h$.
\item Split the point $z_h$ into two identical points, $z_h^B=z_h$, which has a weight of $w(z_h^B)=T_F/2-T_{h-1}$ and $z_h^F=z_h$ which has a weight of $w(z_h^F)=T_h-T_F/2$. Thus $w(z_h^B)+w(z_h^F)=\pitil(z_h)$.
  \item Set $\cH_0\gets \cH_0 \cup \{z_h^B\}$ and $\cH_1=\cH_1\cup\{z_h^F\}$.
\item If $\zcurr=z_h$, with a probability of $w(z_h^B)/\pitil(z_h)$ decide that $\zcurr \in \cH_0$; else decide that $\zcurr \in \cH_1$.
\item For $i\in\{0,1\}$ if $\zcurr \in \cH_i$ then $\cH(\zcurr)=\cH_{1-i}$.
  \item Sample a point from $\cH(\zcurr)$ with a probability proportional to the weight.
\end{enumerate}

\section{Further numerical examples}
\label{sec.useTuningAdvice}
Table \ref{tab.ToyTargetRelEffRec_REV} shows the efficiencies of the different algorithms when tuned according to recommended guidelines.

For HMC and NUTS, this involved picking the $\varepsilon$ which resulted in a desired target acceptance rate; $65.1\%$ for blurred and unblurred HMC, and $80\%$ for NUTS. Once appropriate $\varepsilon$ values were found, the integration time $T$ for both variants of HMC was optimised on a fine grid of values. In $\AAPS$, we first identified a stable $\varepsilon$ and ran the algorithm to identify a good choice for the number of segments using the diagnostic described in Appendix \ref{sec.tuning.K}. We then used the advice in Section \ref{sec.TuningEpsilon} to tune the $\varepsilon$ based on the discrepancy from the limiting acceptance rate; we  set the threshold of the absolute difference at $3\%$. At this optimal $\epsilon$, the number of segments was further refined via grid search in the neighbourhood of the value provided by the diagnostic.

\begin{table}
\centering
\caption{\label{tab.ToyTargetRelEffRec_REV}Relative efficiency at recommended algorithm settings compared with AAPS of HMC, blurred HMC (HMC-bl) and the no U-turn sampler (NUTS); raw efficiencies are the quotient of that from \eqref{eqn.gen.eff} and the efficiency of the optimally tuned (via a grid search) AAPS algorithm. $^{(1)}$ The final acceptance rates were higher than recommended as further increasing $\varepsilon$ resulted in unstable leapfrog integration.  $^{(2)}\xi$ for odd-numbered components of $\pi_{MR}$.}
\begin{tabular}{l|rr|c|cccr}
  Target type&$d$&$\xi$&$\AAPS_{opt}$&$\AAPS_{3\%}$&HMC&HMC-bl&NUTS\\
  \hline
  $\pi_G^{SD}$&40&20&1.000&0.904  & 0.617 & 0.539 & $^{(1)}$0.796\\
  $\pi_G^{VAR}$&40&20&1.000& 0.809 & 1.016 & 0.820 & $^{(1)}$0.844\\
  $\pi_G^H$&40&20&1.000  & 1.000 & 0.162 & 0.601 & 0.387\\
  $\pi_G^{invSD}$&40&20&1.000&  0.984  & 0.141 & 0.457 & 0.417\\
  \hline
  $\pi_{SG}^{VAR}$&40&20&1.000& 0.804 &  0.790 & 0.866 & 0.460\\
  $\pi_{L}^{VAR}$&40&20&1.000& 0.961  & 1.102 & 1.329 & 0.427 \\
   $\pi_G^{VAR}$&100&20&1.000& 0.730  & 0.649 & 1.021 & 0.901\\
  $\pi_G^{VAR}$&40&40&1.000 & 1.000  & 0.924 & 0.778 & 1.073\\
  \hline
  $\pi_{MR}$&20&$^{(2)}$10&1.000& 1.000  & 1.569 & 1.386 & 0.603\\
    $\pi_{MR}$&40&$^{(2)}$10&1.000  & 0.992 & 0.924 & 1.032 & 0.629\\
  $\pi_{MR}$&100&$^{(2)}$10&1.000&  0.962 & 0.755 & 0.893 & 0.404\\
  $\pi_{MR}$&400&$^{(2)}$10&1.000 & 1.000  & 0.625 & 0.803 & 0.520\\
  \hline
  $\pi_{G}^{RN}$&30&110&1.000  & $^{(1)}$0.942 & $<0.001$ & 1.072 & 0.341
\end{tabular}

\end{table}

\end{document}